\documentclass[12pt]{article}
\usepackage{coool}

\addbibresource{Bibliography.bib}

\begin{document}

	\title{Determination of the effective cointegration rank in high-dimensional time-series  predictive regressions} 
	
	\author[1]{Puyi Fang}
	\author[2]{Zhaoxing Gao}
	\author[3]{Ruey S. Tsay*}
	\affil[1]{\small School of Economics, Zhejiang University}
	\affil[2]{\small Center for Data Science, Zhejiang University}
	\affil[3]{\small Booth School of Business, University of Chicago}
	
	\date{Current Version: \today}	
	
	\begin{spacing}{1.5}
		\begin{titlepage}
			\maketitle

			\begin{abstract} 
				
				This paper proposes a new approach to identifying the effective cointegration rank in high-dimensional unit-root (HDUR) time series from a prediction perspective using reduced-rank regression. For a HDUR process $\bx_t\in \bbR^N$ and a stationary series $\by_t\in \bbR^p$ of interest, our goal is to predict future values of $\by_t$ using $\bx_t$ and lagged values of $\by_t$. The proposed framework consists of a two-step estimation procedure. First, the Principal Component Analysis is used to identify all cointegrating vectors of $\bx_t$. Second, the co-integrated stationary series are used as regressors, together with some lagged variables of $\by_t$, to predict $\by_t$. The estimated reduced rank is then defined as the effective coitegration rank of $\bx_t$. Under the scenario that the autoregressive coefficient matrices are sparse (or of low-rank), we apply the Least Absolute Shrinkage and Selection Operator (or the reduced-rank techniques) to estimate the autoregressive coefficients when the dimension involved is high. Theoretical properties of the estimators are established under the assumptions that the dimensions $p$ and $N$ and the sample size $T \to \infty$. Both simulated and real examples are used to illustrate the proposed framework, and  the empirical application suggests that the proposed procedure fares well in predicting stock returns.
			\end{abstract}
			
			\vspace{2em}
			
			
			\noindent\textit{Keywords}: {Cointegration, Factor model,  Reduced rank, High dimension, LASSO }
		\end{titlepage}
		\setcounter{page}{2}

		\section{Introduction}
		\label{section:introduction}
		
		The availability of large-scale or vast time-series data in recent years brings new challenges and opportunities to time series modeling. Analysis of high-dimensional (HD) time series  has emerged as one of the important and active research areas in statistics, economics, finance, and engineering, among other scientific fields. For example, returns of a large number of assets form a HD time series and play an important role in asset pricing, portfolio allocation, and risk management. Environmental studies often employ HD time series consisting of a large number of pollution indexes collected from many monitoring stations over time. In many applications, data often exhibit characteristics of unit-root nonstationarity. For instance, the series of quarterly gross domestic 
		products, total exports, and total imports of an economy tend to contain unit roots. In theory, the vector autoregressive moving-average (VARMA) models can be used to analyze such data, but they often encounter the difficulties of cointegration testing, overparametrization, and lack of identifiability. 
		See, for example, \textcite{johansen2002small,tiao1989model,lutkepohl2006new,tsay2014multivariate}, and the references therein. To overcome these difficulties, dimension reduction or structural regularization  becomes a necessity, and various methods have been developed in the literature including 
		the regularized estimation method for HD VAR models in \textcite{lin2017regularized} and the factor modeling by \textcite{stock2005implications,bai2002determining,forni2005generalized,pena2006nonstationary,lam2011estimation,lam2012factor,gao2019structural,gao2021modeling_JASA,gao2021modeling_IJF,gao2022divide}, among others. However, most of the studies mentioned above focus on stationary processes and are not applicable to unit-root nonstationary series. The only exceptions are  \textcite{bai2004estimating,pena2006nonstationary,gao2021modeling_IJF}. On the other hand, the unit-root nonstationarity is commonly seen in many empirical applications and the complexity of the dynamical dependence in such data requires further investigation.

		It is well known that  
		cointegration is often used to account for common trends and to avoid non-invertibility induced by over-differencing unit-root time series. See \textcite{engle1987co,johansen1988statistical,johansen1991estimation,tsay2014multivariate}, and the references therein. In practice, the cointegration rank of a given vector time series is unknown, and many approaches have been proposed to estimate the rank; see, for example, \textcite{engle1987co,johansen1988statistical,johansen1991estimation,saikkonen2000testing,aznar2002selecting}. However, these methods are rarely applied to HD time series due to their poor finite-sample performance, as discussed in \textcite{johansen2002small}. Yet there are many real applications that involve HD time series. For example, \textcite{banerjee2014forecasting} emphasized the importance of testing for no cross-sectional cointegration in panel cointegration analysis, and the cross-sectional dimension of modern macroeconomic panel can easily be as large as several hundreds. Recently, there are some studies on identifying the cointegration rank of  unit-root time series from a factor modeling perspective. See \textcite{pena2006nonstationary} for the case of fixed dimensions and \textcite{bai2004estimating,zhang2019identifying,gao2021modeling_IJF} for HD time series. However, the situation changes in the case  
		of growing dimension because the estimated cointegration rank usually grows as the dimension increases and the cointegration relationships are often hard to interpret when there are many cointegrating vectors.
		
		This paper marks a further development in estimating the cointegration rank of 
		HDUR time series from a predictive perspective. To avoid employing a large number of cointegrating vectors given by a high-dimensional method, we estimate the {\it effective cointegration rank} in a predictive framework. Specifically, suppose our goal is to predict the future values of a HD stationary time series $\by\in \bbR^{p}$ using $\bx$ as predictors. It is well known that only the cointegrated series have potential predictive power for the stationary process $\by$. If the number of cointegrating vectors is large, 
		the stacked variables obtained by cointegrating vectors form a HD stationary time series, and can be used as potential predictors. 
		But not all cointegrated series have predictive power for $\by$, and 
		we define the  {\it effective cointegration rank} as the effective dimension of the stacked variables that have predictive power for $\by$. The resulting effective rank can 
		be much smaller than the cointegration rank of $\bx$.

		The proposed method consists of a two-step estimation procedure. First, we postulate that the HDUR time series follows a factor model as that specified in \textcite{bai2004estimating}, where the common factors capture the nonstationary common trends of all the components, and the idiosyncratic term is a stationary process. We apply the Principal Component Analysis (PCA) to estimate the common stochastic trends and their associated loading matrix, and the orthogonal complement of the loading matrix consists of the cointegrating vectors. 
		Second, we put together all stationary series obtained by the cointegrating vectors of 
		the first step to form a set of predictors, and perform a reduced-rank regression between the $\by_t$ series of interest and the predictors. To further explain the variability of the data, we also include some lagged variables of $\by_t$ in the regression and assume their coefficient matrices are of low-dimensional structures. We propose two procedures to estimate all the coefficient matrices depending on whether 
		the autoregressive (AR) matrices are sparse or of low-rank. When the AR coefficient matrices are sparse, we apply the nuclear norm penalty to the regression coefficient matrix 
		of the stationary predictors obtained from the first step, and the LASSO penalty to the coefficients of the lagged variables.  When both the AR matrices and the coefficient matrix of the predictors are of low-rank,  
		we propose an integrative reduced-rank approach to estimate all unknown parameters. Two iterative, alternating procedures are proposed to estimate all unknown coefficients under the two aforementioned scenarios. Theoretical properties of the estimators are established under the assumption that the dimensions $p$ and $N$ and the sample size 
		$T \to \infty$.  Both simulated  and real examples are used to illustrate the proposed 
		procedure. The empirical application suggests that the 13 macroeconomic variables from \textcite{welch2008comprehensive}  provide satisfactory performance as predictors in forecasting the returns of 79 stocks in the S\&P 500 index.

		The idea of using predictive regression to estimate the cointegrating vector can be found in, for example, \textcite{koo2020high}. However, the method of \textcite{koo2020high} only identifies one cointegrating vector in predicting another univariate time series, whereas the proposed method not only recovers the total cointegration rank, but also identifies the effective cointegration rank in predicting a large panel of time series. In addition, the proposed estimation method is different from theirs as we use PCA, reduced-rank, and LASSO techniques to achieve our goals while they focus mainly on the use of LASSO regularization.  Note also that our framework is established for data with time series dependence structure and we use a combination of reduced-rank and sparsity techniques in the estimation procedure, which is different from  most of the methods discussed in \textcite{reinsel2022multivariate} that focus on the reduced-rank techniques for {\it i.i.d.} observations. The only exception is the work of  \textcite{lin2017regularized} in studying regularized estimation for multi-block stationary VAR models using the tools and techniques developed in \textcite{negahban2011estimation} and \textcite{agarwal2012noisy}. Furthermore, none of the work mentioned above deals with HDUR time series data.

		This paper makes multiple contributions. First, the cointegration problem has been a central issue in modeling HDUR time series, but the lack of clear interpretations of 
		a large number of cointegrating vectors renders the existing HD methods less appealing. We define the effective cointegration rank  to select the most significant cointegration relationships from  a predictive point of view using reduced-rank method. This method often  produces a small number of significant cointegrating vectors which are easier to interpret in general. Second, our predictive regression model consists of both nonstationary and stationary variables as predictors and has a wide range of applications including the prediction of stock returns using macroeconomic series in finance and the prediction of PM$_{2.5}$ values using other air pollution  and meteorological indexes in environmental studies. Third, the proposed approach combines the advantages of using two regularization methods, reduced-rank and LASSO, to reduce the dimension of a large system, and 
		the asymptotic results derived suggest that properties of both methods continue to hold when they are used simultaneously in a regression model with serially dependent data. This is a theoretical contribution.

		This paper is organized as follows. We introduce the proposed model, estimation methodology, and the modeling procedure in Section~\ref{section:model}. Section~\ref{sec:theory} is devoted to theoretical properties of the proposed model and its associated estimates, and  Section~\ref{sec:simulations} presents some simulation 
		results to demonstrate the performance of the proposed method in finite samples. 
		In Section~\ref{sec:real_data_analysis}, we apply the proposed method to 
		the prediction of stock returns using some commonly used macroeconomic predictors. Section~\ref{sec:conclusion} provides some discussions and concluding remarks. All technical proofs of the theorems are relegated to an online supplement.
		
		\begin{notation}
			To begin, we summarize here the  notation used throughout the paper. The bold upper case, bold lower case, and lower case letters are used to denote matrices, vectors, and scalars, respectively. For a matrix $\mat{A}\in{\mathbb{R}^{m\times n}}$, we use $\FNorm{\mat{A}}$, $\norm{\mat{A}}_*$, and $\norm{\mat{A}}_2$ to denote its Frobenius, nuclear, and operator norms, that is, $\sqrt{\tr(\mat{A}'\mat{A})}$, the sum of singular values of $\mat{A}$, and the largest singular value of $\mat{A}$, respectively. $\bI_p$ denotes the $p\times p$ identity matrix. The superscript ${'}$ denotes the transpose of a vector or a matrix. For a matrix $\mat{A}=(\ba_1, \ba_2, \ldots, \ba_n)$, we use $\vectorize(\mat{A})$ to denote its vectorization, which is equal to $(\ba_1', \ba_2', \ldots, \ba_n')'$, and we further use $\norm{\vectorize(\mat{A})}_1=\sum_{i,j} \abs{a_{ij}}$ to denote the $l_1$-norm of $\mat{A} = [a_{ij}]$. Finally, for two matrices $\mat{A}$ and $\bB$ with commensurate dimensions, their inner product is defined as $\inprod{\bA,\bB}=\tr(\bA' \bB)$.  We also use the notation $a\asymp b$ to denote $a=O(b)$ and $b=O(a)$.
			Finally, we use $L(\cdot)$ to denote the lag operator, which can shift a scalar, vector or matrix time series back by one time period. For instance, for the matrix $\mat{Y} = (\by_1,\by_2,\ldots,\by_T)$,  $L(\mat{Y}) = (\by_0,\by_1,\ldots,\by_{T-1})$.
		\end{notation}
		
		\section{The Model and Methodology}
		\label{section:model}
		
		\subsection{Model Setting}
		Let $\by_t = (y_{1t},y_{2t},\cdots,y_{pt})'$ be an observable $p$-dimensional  stationary time series, and $\bx_t = (x_{1t},x_{2t},\cdots,x_{Nt})'$ an observable $N$-dimensional $I(1)$ process. We consider the following predictive regression model:
		\begin{equation}\label{VAR_d}
		\by_t = \mat{W} \bx_{t-1} + \mat{\Phi}_1\by_{t-1} + \mat{\Phi}_2\by_{t-2}+ \cdots + \mat{\Phi}_d\by_{t-d} + \vect{e}_t,\,\,t=1,...,T,
		\end{equation}
		where $\mat{W}$ is a $p\times N$ coefficient matrix associated with the $I(1)$ process $\bx_t$, and $\mat{\Phi}_i$ is the $p\times p$ coefficient matrix of $\by_{t-i}$, for $1\leq i\leq d$, and $\vect{e}_t\sim$ WN$(0,\bSigma_e)$ is a white noise error term with mean zero and a nonsingular covariance $\bSigma_e$. Our goal is to estimate $\bW$ and $\bPhi_i$ based on a given sample, and to forecast future values of $\by_t$. For simplicity, all variables are 
		set to zero if the time index is not positive. Also, Model (\ref{VAR_d}) can be extended to multi-step ahead predictions for $\by_{t+h}$ with $h > 0$.
		
		In Model (\ref{VAR_d}), $\bx_t$ is nonstationary but all other variables are stationary 
		so that it only makes sense if some variables in $\bx_t$ are cointegrated, otherwise, $\bW$ would essentially be a zero matrix because the correlation between a stationary process and a unit-root nonstationary one is zero in general. If we blindly apply the Least Squares (LS) method to estimate the model, the number of parameters to be estimated is large, and the resulting estimator $\wh \bW$ would be hard to interpret as we do not know whether all or only a few rows in $\wh\bW$ are the estimated cointegrating vectors. 
		In theory, if all the cointegrating vectors of $\bx_t$ are known, then 
		the resulting linear combinations of the $I(1)$ variables are stationary and can be useful predictors in Model (\ref{VAR_d}). However, not all cointegration relationships are helpful in predicting $\by_{t+h}$ in general, especially when the dimension $N$ of $\bx_t$ is large. 

		In view of the above discussion, we modify Model (\ref{VAR_d}) as follows. First, similarly to the setting in \textcite{bai2004estimating}, we assume that $\bx_t$ admits a latent factor structure:
		\begin{equation}
		\label{factor_structure}
		\bx_{t} = \bB \bff_{t} + \bm{\varepsilon}_t,\  t=1,2,\ldots,T,
		\end{equation}
		where $\bff_t=(f_{1t},f_{2t},\ldots,f_{rt})'$ is an $r$-dimensional factor process that constitutes the common stochastic trends of $\bx_t$, that is,
		\begin{equation}\label{f_process}
		\bff_{t}=\bff_{t-1} + \vect{u}_t,
		\end{equation}
		where $\vect{u}_t$ is an $r$-dimensional zero-mean stationary process that drives $\bff_t$. The idiosyncratic term $\bve_t$ in (\ref{factor_structure}) is assumed to be a stationary process independent of the common factors $\bff_t$. Therefore, the cointegration rank of $\bx_t$ is $N-r$. 
		For ease in model identification, 
		we assume that $\bB$ is an orthonormal matrix such that $\bB'\bB=\bI_r$; see also \textcite{bai2002determining,fan2013large} for details.

		Let $\bB_c\in\bbR^{N\times(N-r)}$ be an orthogonal complement matrix of $\bB$ such that $\bB_c'\bB_c=\bI_{N-r}$ and $\bB_c'\bB={\bf 0}$. It follows from Model (\ref{factor_structure}) that the columns of $\bB_c$ can be treated as a set of cointegrating vectors of $\bx_t$ because $\bB_c'\bx_t=\bB_c'\bve_t$ is stationary. Letting  
		\begin{equation}\label{zt}
		\bz_t = \bB_c' \bx_t = \bB_c' \bm{\varepsilon}_t,
		\end{equation}
		we define $\bW=\bA\bB_c'$ and rewrite Model (\ref{VAR_d}) as follows:
		\begin{equation}
		\label{DGP}
		\by_t = \mat{A} \bz_{t-1} + \mat{\Phi}_1\by_{t-1} + \mat{\Phi}_2\by_{t-2}+ \cdots + \mat{\Phi}_d\by_{t-d} + \vect{e}_t,\,\,t=1,...,T,
		\end{equation}
		where $\bz_t$ is now a stationary process defined in (\ref{zt}). Similarly to the identifiability issue in factor models, $\bA$ and $\bB_c$ are not uniquely defined. Nonetheless, the product, $\bW=\bA\bB_c'$, is uniquely defined. Therefore, we split Model (\ref{VAR_d}) into (\ref{factor_structure})--(\ref{DGP}), and our goal is to estimate the factor loading matrix $\bB$ or equivalently the cointegrating vector matrix $\bB_c$, the coefficient matrices $\bA$ and $\bPhi_i$, for $1\leq i\leq d$.
		
		Although $\bB$, $\bB_c$ and $\bA$ are not uniquely defined due to the identification issue, the linear spaces spanned by the columns of $\bB$ and $\bB_c$, denoted as $\mathcal{M}(\bB)$ and $\mathcal{M}(\bB_c)$ respectively, are uniquely defined. For any specific choice of $\bB_c$, $\bA$ can also be uniquely determined. Therefore, when we mention the estimation or consistency of the loading matrix $\bB$ or $\bB_c$ in the sequel, we always refer to their column spaces to avoid any confusion. The estimation of $\bA$ is also based on a given and fixed $\bB_c$ so that the procedure is valid.

		\subsection{Estimation Methodology} \label{estimation_methodology}
		We consider two approaches to estimating the effective cointegration rank, or equivalently, the reduced-rank of the coefficient matrix $\bA$, and the AR  coefficients $\bPhi_i$'s for high-dimensional cases under different assumptions. The first approach is based on imposing a reduced-rank structure on the matrix $\bA$ and some sparsity assumptions on the AR coefficient 
		matrices. The second approach requires that all predictors, including the lagged variables, have  
		their own low-rank coefficient matrices. 
		
		\subsubsection{A Reduced-Rank and Sparse Regression Approach}\label{RRS}
		In this section, we introduce a Reduced-Rank and Sparse Regression approach (RRSRA) to estimating the coefficient matrices $\bA$ and $\bPhi_i$ for observed data  $\{\bx_1,...,\bx_T\}$ and $\{\by_1,...,\by_T\}$. Note that the dimensions of $\bA\in \bbR^{p\times (N-r)}$ and $\bPhi_i\in\bbR^{p\times p}$ can be very large under the assumption that the number of common stochastic trends $r$ is finite as  $p,N\rightarrow\infty$.
		Even if $\{\bz_1,...,\bz_T\}$ were given, the traditional LS method would lead to overfitting because there are many parameters to estimate. Therefore, some structure regularization must be imposed on the  coefficient matrices. For simplicity, we assume the matrix $\bA$ is singular and has a reduced-rank form with  $r_{\bA}=\text{rank}(\bA)\ll\min(p,N-r)$, and the AR  coefficient matrices $\bPhi_i$'s are sparse in the sense that only a small number of elements in each matrix are nonzero, for $1\leq i\leq d$.

		Assume that the number of common stochastic trends $r$ in Model (\ref{factor_structure}) and the order $d\geq 1$ in Model (\ref{DGP}) 
		are known. Their selections will be discussed below. Note that $\bz_t$ is unobservable in Model \eqref{DGP} and needs to be estimated from the data $\bx_t$. We briefly introduce the proposed two-step estimation procedure. First, similarly to that in \textcite{bai2004estimating}, we estimate the factor loading matrix $\bB$ by solving the following optimization problem:
		\begin{equation}\label{solve_factor}
		(\wh\bB,\wh\bF)=\arg\min_{\bB,\bF}\FNorm{\bX-\bB\bF}^2,\quad\text{subject to}\quad \bB'\bB =\bI_r,
		\end{equation}
		where $\mat{X}=[\bx_1,\bx_2,\ldots,\bx_T]$ and  $\mat{F}=[\bff_1,\bff_2,\ldots,\bff_T]$ are the stacked matrices across the time horizon. It is not hard to show that the optimization method in (\ref{solve_factor}) is equivalent to Principal Component estimation, and the columns of $\wh\bB$ are just the $r$ standardized eigenvectors of $\bX\bX'$ associated with the $r$ largest eigenvalues. Therefore, we choose $\wh\bB_c$ such that its columns are the $N-r$ standardized eigenvectors associated with the $N-r$ smallest  eigenvalues of $\bX\bX'$. Then, we define $\wh\bz_t=\wh\bB_c'\bx_t$, which serves as a proxy of $\bz_t$ and will be used as predictors in the second step of estimation. 
		
		Next, we introduce a method to estimate the coefficient matrices $\bA$ and $\bPhi_i$, for $1\leq i\leq d$. To begin, define $\bPhi=[\bPhi_1,...,\bPhi_d]\in \bbR^{p\times dp}$ and $\bP_{t-1}=(\by_{t-1}',...,\by_{t-d}')'$. For any given penalty parameters $\lambda_{\bA}>0$ and $\lambda_{\bPhi}>0$, we solve the following optimization problem:
		\begin{equation}
		\label{solve_A_Phi}
		(\wh{\mat{A}}, \wh{\mat{\Phi}}) = \arg\min_{\bA,\bPhi}\left\{\frac{1}{2T} \sum_{t=1}^T \|{\by_t - \mat{A}\wh\bz_{t-1} - \mat{\Phi}\bP_{t-1}}\|_2^2 + \lambda_{\mat{A}} \norm{\mat{A}}_* + \lambda_{\mat{\bPhi}} \norm{\vectorize(\mat{\Phi})}_1\right\},
		\end{equation}
		where the data are set to ${\bf 0}$ if the subscript $t\leq 0$. 
		For reduced-rank regression, we refer the readers to the new monograph by 
		\textcite{reinsel2022multivariate}. In particular,  its Chapters 9 to 12 discuss some recent developments in reduced-rank regressions under high-dimensional settings, including 
		the use of nuclear-norm penalty in (\ref{solve_A_Phi}). Similar ideas can also be found in \textcite{negahban2011estimation,chen2013reduced}, among others. However, most of the methods considered in the aforementioned literature only deal with i.i.d. data, while we consider serially dependent data in this paper both theoretically and empirically. 

		It is generally not easy to obtain the true global solutions to the optimization problem in \eqref{solve_A_Phi} because the objective function in the bracket of (\ref{solve_A_Phi}) involves different types of penalties. 
		Therefore, we formulate an iterative procedure to obtain an approximate set of numerical solutions to (\ref{solve_A_Phi}) in Algorithm~\ref{iterative_estimation}. 
		Specifically, for a fixed $\bA$, we can estimate $\bPhi$ via a standard LASSO procedure, and there are several methods and software packages available to obtain sparse solutions. See, for example, \textcite{hastie2015statistical}. When $\bPhi$ is fixed, the estimation of $\bA$ is an instance of  a {\it semidefinite program}. See \textcite{vandenberghe1996semidefinite,ji2009accelerated}. Since the objective function is convex, it is also biconvex  in both sets of parameters. If the estimates in all iterations lie within a small ball around the true parameters, the convergence of the estimates to a stationary point is guaranteed. Because the function is convex, the estimates also achieve a global minimum. 
		See, for example, \textcite{tseng2001convergence} and \textcite{burai2013necessary}. The theoretical results in Section \ref{sec:theory} below are developed for the optimal solutions $\wh\bA$ and $\wh\bPhi$. The simulation results in Section \ref{sec:simulations} suggest that the initial values in Algorithm~\ref{iterative_estimation} have little impact on the asymptotic behavior of the estimates.

		\begin{algorithm}[ht]
			\caption{An Iterative Procedure for Estimating  $\mat{A}$ and $\mat{\Phi}$}
			\label{iterative_estimation}
			\begin{algorithmic}[1]
				
				\Input the data matrices $\mat{Y}=[\by_1,...\by_T]$ and $\wh{\mat{Z}}=[\wh\bz_0,...,\wh\bz_{T-1}]$
				\Output  $\wh\bA\leftarrow\wh\bA^{(k)}$, $\wh\bPhi\leftarrow\wh\bPhi^{(k)}$
				
				\State {Initialize with $k=0$ and $\mat{\Phi}^{(0)} = \mat{0}_{p\times p}$}
				\While {$\wh{\mat{A}}^{(k)}$ or $\wh{\mat{\Phi}}^{(k)}$ is not convergent}
				\State {$\wh{\mat{A}}^{(k+1)} \gets \arg\min_{\bPhi} \frac{1}{2T} \sum_{t=1}^T \|{\by_t - \mat{A}\wh\bz_{t-1} - \wh\bPhi^{(k)}\bP_{t-1}}\|_2^2 + \lambda_{\mat{A}} \norm{\mat{A}}_*$}
				\State {$\wh{\mat{\Phi}}^{(k+1)} \gets \arg\min_{\bPhi} \frac{1}{2T} \sum_{t=1}^T \|{\by_t - \wh\bA^{(k)}\wh\bz_{t-1} - \mat{\Phi}\bP_{t-1}}\|_2^2 + \lambda_{\mat{\Phi}} \norm{\vectorize(\mat{\Phi})}_1$}
				\State $k \gets k+1$
				\EndWhile
			\end{algorithmic}
		\end{algorithm}
		
		Next we turn to the interpretation of the low-rank structure of the matrix $\bA$ in Model (\ref{DGP}). From \textcite{negahban2011estimation}, we see that the estimation of the rank of $\bA$ is equivalent to an optimal selection of the penalty parameter $\lambda_{\bA}$. A similar argument applies to the sparsity of $\bPhi$ and the choices of $\lambda_{\bPhi}$. See also, \textcite{hastie2015statistical}. Suppose the true rank of $\bA$ is equal to $k_0\ll\min\{p,N-r\}$, we may decompose $\bA$ as $\bA=\bC\bR'$ with $\bC\in \bbR^{p\times k_0}$ and $\bR\in \bbR^{(N-r)\times k_0}$. Therefore, $\bR'\bz_t$ is a $k_0$-dimensional stationary random vector and has some predictive power for the future values of $\by_t$. Note that $\bR'\bz_t=\bR'\bB_c'\bx_t$, implying that $\bR'\bB_c'$ is the reduced-rank matrix consisting of $k_0$ significant cointegrating vectors that play important roles in predicting the future values of $\by_t$. In other words, the cointegration rank $N-r$ can be reduced to a smaller number $k_0$ which is useful 
		in prediction and is also easier to interpret. 
		We call $k_0$ the {\em effective cointegration rank} of such a prediction application.

		\subsubsection{An Integrative Reduced-Rank Approach}
		\label{sec:IRRA}
		In this section, we introduce an integrative reduced-rank approach (IRRA) to estimating all the coefficient matrices of Model (\ref{VAR_d}). The approach is similar to the setting in Chapter 10 of \textcite{reinsel2022multivariate} for i.i.d. observations, but we focus on Model (\ref{DGP}) with time-series dependence.
		Specifically, under Models (\ref{factor_structure})--(\ref{DGP}), the IRRA assumes that each set of predictors has its own low-rank coefficient matrix, that is, in addition to the assumption in Section 2.2.1 that $r_{\bA}\ll \min(p,N-r)$, we also assume that $0\leq r_i=\text{rank}(\bPhi_i)\ll p$, for $1\leq i\leq d$, when both $p$ and $N$ are large. This approach bridges the reduced-rank and the sparse models in the sense that the coefficient matrix $\bPhi_i$ is fully sparse with all entries being zero if $r_i=0$. On the other hand, the groupwise low-rank structure in IRRA is more flexible and different from a globally low-rank structure for $\bPhi$ defined in Section 2.2.1. The low-rankness of $\bPhi_i$'s does not necessarily imply that $\bPhi$ is of low rank, while a low-rank matrix $\bPhi$ implies that each $\bPhi_i$ is of low-rank, which cannot exceed that of $\bPhi$. Under this assumption, we consider the following convex optimization problem:
		\begin{equation}
		\label{eq:IRRR}
		(\wh\bA,\wh\bPhi_i)=\arg\min_{\bA,\bPhi_i}\left\{\frac{1}{2T} \sum_{t=1}^T \|{\by_t - \mat{A}\wh\bz_{t-1} - \mat{\Phi}\bP_{t-1}}\|_2^2 + \lambda_{\mat{A}} \norm{\mat{A}}_* +  \sum_{i=1}^d\lambda_i\|\bPhi_i\|_*\right\},
		\end{equation}
		where $\lambda_{\bA}$ and $\lambda_i$ are the penalty parameters associated with $\bA$ and $\bPhi_i$, respectively. Similarly to the setting of \textcite{reinsel2022multivariate}, we may rewrite $\lambda_i$ as $\lambda_i=\lambda_{\bPhi}w_i$ for a global penalty $\lambda_{\bPhi}$ and some prescribed constant $w_i$, for $1\leq i\leq d$. It is clear that $\lambda_i$ is a tuning parameter controlling the amount of regularization applied to $\bPhi_i$. If $w_i=1$, and hence $\lambda_1=...=\lambda_d$, 
		all penalty parameters of $\bPhi_i$ are the same. A simple choice is to take  
		\[w_i=\sigma_1(\bY)\{\sqrt{p}+\sqrt{rank(\bY)}\}/T,\,\,i=1,...,d,\]
		so that we only have a single parameter $\lambda_{\bPhi}$ to control the regularization of the coefficient $\bPhi_i$, for $1\leq i\leq d$.
		

		Because the objective function in (\ref{eq:IRRR}) is convex, there are several feasible algorithms 
		available to solve the optimization problem therein. For example, following the recipe in \textcite{boyd2011distributed}, \textcite{li2019integrative} proposed an Alternating Direction Method of Multipliers (ADMM) algorithm to fit a model similar to that in (\ref{DGP}) with reduced-rank structures. However, the ADMM algorithm is relatively more involved as it alternates between  a primal step and a dual step. In this paper, we propose an easy-to-implement iterative procedure to estimate all the coefficient matrices with reduced-rank structures, which is similar to the block coordinate descent method in \textcite{tseng2001convergence}. The detailed procedure is outlined in  Algorithm~\ref{iRRR} below. Since the objective function is convex, by the argument in \textcite{tseng2001convergence},  the convergence of the estimators via Algorithm~\ref{iRRR} to a stationary point is guaranteed. On the other hand, from \textcite{boyd2004convex}, we know that the conjugate  of a conjugate function of a convex one is itself, by Theorem 2 of \textcite{burai2013necessary}, the stationary point obtained by Algorithm~\ref{iRRR} is a global minimum.  Simulation results in Section 4.2 suggest that the estimators obtained by Algorithm~\ref{algo:IRRA} are comparable to those obtained by the ADMM method, while the former is much easier to implement than the latter in practice. Similarly to the argument used at the end of Section 2.2.1, the cointegration rank has been reduced to a much smaller and effective one due to the reduced-rank structure of $\bA$. We omit the details to save space.
		
		\begin{algorithm}[ht]
			\caption{Iterative procedure for Estimations of $\mat{A}$ and $\mat{\Phi}_i, i=1,\ldots,d$}
			\label{algo:IRRA}
			\label{iRRR}
			\begin{algorithmic}[1]
				
				\Input the data matrices $\mat{Y}=[\by_1,...\by_T]$ and $\wh{\mat{Z}}=[\wh\bz_0,...,\wh\bz_{T-1}]$
				\Output  $\wh\bA \gets\wh\bA^{(k)}$, $\wh\bPhi_i \gets \wh\bPhi^{(k)}_i, i=1,\ldots,d$
				
				\State {Initialize with $k=0$ and $\mat{\Phi}^{(0)}_i = \mat{0}_{p\times p}, i=1,\ldots,d$}
				\While {any of $\wh{\mat{A}}^{(k)}, \wh{\mat{\Phi}}^{(k)}_1,\ldots, \wh{\mat{\Phi}}^{(k)}_d$ is not convergent}
				\State {$\wh{\mat{A}}^{(k+1)} \gets \arg\min_{\bPhi} \frac{1}{2T} \sum_{t=1}^T \|{\by_t - \mat{A}\wh\bz_{t-1} - \wh\bPhi^{(k)}\bP_{t-1}}\|_2^2 + \lambda_{\mat{A}} \norm{\mat{A}}_*$}
				\For {$i=1$ to $d$}
				\State {$\wh{\mat{\Phi}}^{(k+1)}_i \gets \arg\min_{\bPhi_i} \frac{1}{2T} \sum_{t=1}^T \|\by_t - \wh\bA^{(k)}\wh\bz_{t-1} - \wh\bPhi^{(k)} \bP_{t-1} + (\wh\bPhi^{(k)}_{i} - \bPhi_{i}) \by_{t-i} \|_2^2$}
				\Statex {\qquad \qquad \qquad \quad $+ \lambda_{\mat{\Phi}} \norm{\bPhi_i}_*$}
				\EndFor
				\State $k \gets k+1$
				
				\EndWhile
			\end{algorithmic}
		\end{algorithm}

		\subsection{Determination of the Number of Factors}
		\label{subsec:determination_r}
		The estimation of $\wh\bB$ and its orthogonal complement $\wh\bB_c$ in the prior sections is based on a given $r$, which is unknown in practice. There are several methods available in the literature to determine the number of unit-root factors in Equation (\ref{factor_structure}). See, for example, the information criterion in \textcite{bai2004estimating}, the Canonical Correlation Analysis (CCA) method in \textcite{pena2006nonstationary}, the autocorrelation-based method in \textcite{zhang2019identifying} and its modified version in \textcite{gao2021modeling_IJF}, among others. 
		
		In this paper, we adopt the auto-correlation based method of \textcite{gao2021modeling_IJF}. Specifically, let $\wh\bXi=(\wh\bxi_1,...,\wh\bxi_N):=[\wh\bB,\wh\bB_c]$ be the matrix containing all the eigenvectors of $\bX\bX'$ and $\wh f_{j,t}=\wh\bxi_j'\bx_t$ be the $j$-th principal component, for $1\leq j\leq N$.  For some prescribed integer $\bar{k}>0$, define
		\begin{equation}\label{auto:cor}
		S_j(\bar{k})=\sum_{k=1}^{\bar{k}}|\wh\rho_j(k)|,
		\end{equation}
		where  $\wh\rho_j(k)$ is the lag-$k$ sample autocorrelation function (ACF) of the principal component  $\wh f_{j,t}$, for $1\leq j\leq N$. If  $\wh f_{j,t}$ is stationary, then under some mild conditions,  $\wh\rho_j(k)$ decays to zero  exponentially as $k$ increases, and $\lim_{\bar{k}\rightarrow\infty} S_{j}(\bar{k})<\infty$  as $T\rightarrow\infty$.
		If $\wh f_{j,t}$ is unit-root nonstationary, then  $\wh\rho_j(k)\rightarrow 1$, and  $\lim_{\bar{k}\rightarrow\infty} S_{j}(\bar{k})=\infty$  as $T\rightarrow\infty$. Therefore, we start with $j=1$. If the average of the absolute sample ACFs $S_{j}(\bar{k})/\bar{k}\geq \delta_0$ for some constant $0< \delta_0<1$, then $\wh f_{j,t}$ has a unit root and we increase $j$ by $1$ to repeat the detecting process. This detecting process is continued until $S_{j}(\bar{k})/\bar{k}< \delta_0$ or $j=N$. If $S_{j}(\bar{k})/\bar{k}\geq \delta_0$ for all $j$, then $\wh r=N$; otherwise, we denote $\wh r=j-1$.

		
		\subsection{Selection of the Tuning Parameters}
		\label{subsec:tuning}
		In this section, we briefly introduce a way to choose the tuning parameters $\lambda_{\bA}$ and $\lambda_{\bPhi}$, and the order $d$ in (\ref{solve_A_Phi}). We only consider the procedure introduced in Section 2.2.1 since the one in Section 2.2.2 is similar. We first fix the order $d$ and consider the subsamples  $\{\by_1,...,\by_{T_1+j}\}$ and $\{\bx_{1},...,\bx_{T_1+j-1}\}$, for $0\leq j\leq T-T_1-1$ and $T_1<T$.  We then adopt a rolling-window-based method to select $\lambda_{\bA}$ and $\lambda_{\bPhi}$ from a forecasting perspective. Specifically, we prescribe two candidate intervals $[a_1,a_2]$ and $[b_1,b_2]$ with $a_2>a_1>0$ and $b_2>b_1>0$, and choose $(\lambda_{\bA},\lambda_{\bPhi})$ from $[a_1,a_2]\times [b_1,b_2]$ via a grid-search approach.  For any pair $(\lambda_{\bA},\lambda_{\bPhi})\in [a_1,a_2]\times[b_1,b_2]$ and each $0\leq j\leq T-T_1-1$, we first estimate the loading matrix and obtain the stationary process $\{\wh\bz_1,...,\wh\bz_{T_1+j-1}\}$ based on the sample $\{\bx_1,...,\bx_{T_1+j-1}\}$, and apply the iterative procedure in Algorithm \ref{iterative_estimation} to obtain the estimators for all the coefficients based on the subsample $\{\by_1,...,\by_{T_1+j}\}$. We can then obtain the predicted value $\wh\by_{T_1+j+1}$ for $\by_{T_1+j+1}$. We repeat the above procedure for $0\leq j\leq T-T_1-1$ and obtain all the forecasts $\{\wh\by_{T_1+j+1},...,\wh\by_{T}\}$. Define the average of forecast errors as
		\begin{equation}\label{FE}
		\text{FE}_d(\lambda_{\bA},\lambda_{\bPhi})=\frac{1}{p(T-T_1)}\sum_{j=0}^{T-T_1-1}\|\wh\by_{T_1+j+1}-\by_{T_1+j+1}\|_2^2.
		\end{equation}
		Note that the forecast errors defined in \eqref{FE} also depend on the value of $d$, which itself is unknown in practice. We may prescribe an integer $\bar{d}>0$ and search the optimal one over $0\leq \wh d\leq \bar{d}$ such that the forecast error is minimized. Consequently,   the optimal tuning parameters are chosen as
		\begin{equation}\label{tuning}
		(\wh\lambda_{\bA},\wh\lambda_{\bPhi},\wh d)=\arg\min_{\underset{0\leq d\leq \bar{d}}{(\lambda_{\bA},\lambda_{\bPhi})\in [a_1,a_2]\times[b_1,b_2]}}  \text{FE}_d(\lambda_{\bA},\lambda_{\bPhi}).
		\end{equation}
		
		In practice, for simplicity, $\bar{d}$ is often chosen as a small integer provided that the series under study is not seasonal. 
		This choice can also be justified theoretically, because 
		the marginal model of a $p$-dimensional VAR($d$) process is ARMA($pd, p(d-1)$) the order of which can be sufficiently 
		high when $p$ is large; see, for instance, Chapter 2 of \textcite{tsay2014multivariate}. 
		In this paper, we choose $\bar{d}=3$ and the proposed model and procedure work sufficiently well in the real data analysis.


		\section{Theoretical Properties}
		\label{sec:theory}
		
		In this section, we investigate some theoretical properties of the coefficient estimates $\wh{\bB}$, $\wh{\mat{A}}$, and $\wh{\mat{\Phi}}$ under the condition that $p,N,T\rightarrow\infty$. We 
		start with some assumptions and postpone proofs of all  theorems to an online supplement.
		
		\begin{assumption}
			\label{assumption:mixing}
			The process $\{\vect{u}_t, \bm{\varepsilon}_t\}$ is $\alpha$-mixing with the mixing coefficient satisfying the condition $\alpha(k)\leq \exp(-ck^{\gamma})$ for some constants $c > 0$ and $\gamma > 0$, where 
			\[
			\alpha(k) = \sup_i \sup\limits_{\substack{A\in \mathcal{F}_{-\infty}^i\\B\in \mathcal{F}_{i+k}^{\infty}}} \abs{\Pro(A\cap B) - \Pro(A)\Pro(B)},
			\]
			and $\mathcal{F}_i^j$ is the $\sigma$-algebra generated by $\{(\vect{u}_t, \bm{\varepsilon}_t):i \leq t \leq j\}$.
		\end{assumption}
		
		\begin{assumption}
			\label{assumption:sub_exponential}
			$\vect{u}_t$, $\bm{\varepsilon}_t$ and $\be_t$ are sub-exponentially distributed in the sense that there are two constants $C_1,C_2>0$ such that $\Pro(\abs{\vect{v}' (\bm{\eta}_t - \E(\bm{\eta}_t))} > x) \leq C_1 \exp(-C_2x)$ holds for any $x>0$ and $\norm{\vect{v}}_2=1$, where $\bm{\eta}_t$ can be any process of $\vect{u}_t$, $\bm{\varepsilon}_t$ or $\be_t$.
		\end{assumption}

		With the identification condition $\bB'\bB=\bI_r$, the processes $\bff_t$ and $\bu_t$ have an additional strength of $\sqrt{N}$.	For the stationary process $\bu_t$  in (\ref{f_process}), define a normalized process
		\[
		\vect{S}_{T}^r (\vect{t}) = (S_{T}^1 (t_1),\ldots,S_{T}^r (t_r))' = \left(\frac{1}{\sqrt{NT}}\sum_{s=1}^{[T t_1]}u_{1s}, \ldots,\frac{1}{\sqrt{NT}}\sum_{s=1}^{[T t_r]}u_{rs}\right)',
		\]
		where $\vect{t}=(t_1,t_2,\ldots,t_r)'$ is a constant vector with $0\leq t_1\leq\cdots\leq t_r \leq 1$. 
		
		\begin{assumption}
			\label{assumption:convergence}
			For any vector $\vect{t}=(t_1,t_2,\ldots,t_r)'$ with $0\leq t_1\leq\cdots\leq t_r \leq 1$,  there exists a Gaussian process $\vect{W}(\vect{t})=(W_1(t_1),\ldots,W_r(t_r))'$ such that $\vect{S}_{T}^r (\vect{t})\overset{J_1}{\Longrightarrow} \vect{W}(\vect{t})$ on $D_r[0,1]$  as $T\to\infty$, where $\overset{J_1}{\Longrightarrow}$ denotes weak convergence under the Skorokhod $J_1$ topology (see \textcite[Chapter 3]{billingsley1999convergence}), and $\vect{W}(\vect{1})$ has a positive deﬁnite covariance matrix.
		\end{assumption}

		\begin{assumption}
			\label{assumption:dependence}
			For any $i\leq r$, $j\leq N$, it holds that
			\[
			\dfrac{1}{T}\sum_{t=1}^{T} f_{it} \varepsilon_{jt} = O_p(1),
			\]
			uniformly in $i$ and $j$.
		\end{assumption}
		
		\begin{assumption}
			\label{assumption:VAR_stationary}
			For the $p\times p$ matrix polynomial $\bPhi(L)=\bI_p-\sum_{i=1}^d\bPhi_iL^i$, 
			all solutions of the determinant equation $|\bPhi(L)|={\bf 0}$ are outside the unit circle. 
		\end{assumption}
		
		Assumption \ref{assumption:mixing} is standard for dependent random processes. For a theoretical justiﬁcation of the mixing conditions for VAR models, see \textcite{gao2019banded}.  Assumption \ref{assumption:sub_exponential} implies that all moment conditions for the idiosyncratic terms in \textcite{bai2004estimating} are satisfied. Assumption \ref{assumption:convergence} is used to characterize the limiting behavior of the unit-root factors. Similar assumptions are used in \textcite{bai2004estimating}, \textcite{zhang2019identifying}, and \textcite{gao2021modeling_IJF}, among others. Assumptions \ref{assumption:mixing}-\ref{assumption:convergence} 
		imply that all conditions for the common factors and the idiosyncratic terms in \textcite{bai2004estimating} hold. Assumption \ref{assumption:dependence} is used to control the sample covariance between the common factors and the idiosyncratic terms. The rate in Assumption \ref{assumption:dependence} is not strong and can be established under the setting of \textcite{stock1987asymptotic}, where we can assume the factors and idiosyncratic terms have similar structure as those in (2.4) therein. Assumption \ref{assumption:VAR_stationary} is the standard stationarity condition for a VAR process.
		
		Turn to the convergence of the estimated loading matrix and its orthogonal complement. Note that the loading matrix $\bB$ is not uniquely defined due to the identification issue, only the linear space spanned by its columns, denoted by $\mathcal{M}(\bB)$, or the matrix product $\bB\bB'$ is uniquely defined. We state the convergence of the estimated loading matrix and its orthogonal complements in the following theorem.

		\begin{theorem}\label{loading_estimator}
			Suppose Assumptions \ref{assumption:mixing}-\ref{assumption:dependence} hold. Assume $r$ is finite and known. Then, as $N,T\rightarrow\infty$,
			\begin{equation}\label{bbh}
			\norm{\wh{\bB}\wh{\bB}'- \bB\bB'}_2 = O_p(T^{-1})\quad\text{and}\quad
			\norm{\wh{\bB}_c\wh{\bB}_c' - \bB_c\bB_c'}_2 =  O_p(T^{-1}).
			\end{equation}
			Consequently, 
			\[N^{-1/2}\|\wh\bB\wh\bff_t-\bB\bff_t\|_2=O_p(N^{-1/2}+T^{-1/2}).\]
		\end{theorem}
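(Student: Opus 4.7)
The plan is to treat this as a standard principal-component-analysis consistency argument for a factor model, but carefully exploiting the extra $T$-factor of signal strength that unit-root factors produce. Start from the decomposition
\[
\bX\bX' = \bB\bF\bF'\bB' + \bB\bF\bE' + \bE\bF'\bB' + \bE\bE',
\]
where $\bE=(\bve_1,\ldots,\bve_T)$. Because $\wh\bB$ diagonalises $\bX\bX'$ and $\bB\bF\bF'\bB'$ has rank $r$ with nonzero eigenvalues equal to those of $\bF\bF'$ (using $\bB'\bB=\bI_r$), I would apply the Davis--Kahan $\sin\Theta$ theorem in the form
\[
\bigl\|\wh\bB\wh\bB'-\bB\bB'\bigr\|_2 \;\lesssim\; \frac{\bigl\|\bX\bX'-\bB\bF\bF'\bB'\bigr\|_2}{\lambda_r(\bF\bF')-\lambda_1(\bE\bE')},
\]
reducing everything to lower-bounding the signal eigenvalue and upper-bounding the three remaining terms.

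For the signal, Assumption~\ref{assumption:convergence} implies, via the continuous mapping theorem applied to $\vect{S}_T^r$, that $\frac{1}{NT^2}\bF\bF' = \frac{1}{T}\sum_{t=1}^T \bigl(\frac{\bff_t}{\sqrt{NT}}\bigr)\bigl(\frac{\bff_t}{\sqrt{NT}}\bigr)'$ converges weakly to $\int_0^1 \vect{W}(s)\vect{W}(s)'\,ds$, which is almost surely positive definite by the nondegeneracy of $\vect{W}(\vect{1})$. Hence $\lambda_r(\bF\bF')=\Theta_p(NT^2)$. For the cross term, Assumption~\ref{assumption:dependence} gives $\|\bF\bE'\|_F^2=\sum_{i\leq r,j\leq N}\bigl(\sum_t f_{it}\varepsilon_{jt}\bigr)^2 = O_p(rNT^2)$, so $\|\bB\bF\bE'\|_2\leq\|\bF\bE'\|_F=O_p(T\sqrt{N})$. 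For the idiosyncratic noise, Assumptions~\ref{assumption:mixing}--\ref{assumption:sub_exponential} together with standard concentration for sub-exponential mixing sequences (this is the ingredient that is essentially imported from \textcite{bai2004estimating}) give $\|\bE\bE'\|_2=O_p(N+T)$. Substituting into the Davis--Kahan bound,
\[
\bigl\|\wh\bB\wh\bB'-\bB\bB'\bigr\|_2 \;=\; \frac{O_p(T\sqrt{N}+N+T)}{\Theta_p(NT^2)} \;=\; O_p\!\bigl(T^{-1}N^{-1/2}\bigr) \;=\; O_p(T^{-1}),
\]
and the bound for $\wh\bB_c\wh\bB_c'-\bB_c\bB_c'$ is immediate from the identity $\wh\bB_c\wh\bB_c' = \bI_N - \wh\bB\wh\bB'$ (and likewise for $\bB_c$).

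For the third claim, use that the PCA fit satisfies $\wh\bB\wh\bff_t = \wh\bB\wh\bB'\bx_t$, and decompose
\[
\wh\bB\wh\bff_t - \bB\bff_t = \bigl(\wh\bB\wh\bB' - \bB\bB'\bigr)\bB\bff_t + \wh\bB\wh\bB'\bve_t.
\]
The first piece is bounded by $\|\wh\bB\wh\bB'-\bB\bB'\|_2\cdot\|\bff_t\|_2 = O_p(T^{-1})\cdot O_p(\sqrt{NT}) = O_p(\sqrt{N/T})$, using the partial-sum size of $\bff_t$ from Assumption~\ref{assumption:convergence}. The second piece is $O_p(1)$ because $\wh\bB$ has orthonormal columns and $\bve_t$ is stationary with bounded projections onto any fixed $r$-dimensional subspace (Assumption~\ref{assumption:sub_exponential}); a standard argument pairing this with the consistency already shown for $\wh\bB\wh\bB'$ controls the error introduced by replacing $\bB$ with $\wh\bB$. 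Dividing by $\sqrt{N}$ produces the advertised $O_p(N^{-1/2}+T^{-1/2})$ rate.

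The principal obstacle is the eigengap: one needs the almost-sure positive definiteness of $\int_0^1 \vect{W}(s)\vect{W}(s)'\,ds$ to transfer through to a nondegenerate lower bound on $\lambda_r(\bF\bF')/(NT^2)$ and to show that $\lambda_1(\bE\bE')=o_p(NT^2)$, so that the denominator in Davis--Kahan really is of order $NT^2$ with high probability. Once the signal--noise separation is secured, the remainder is a careful but routine bookkeeping of the three perturbation terms under the mixing and sub-exponential assumptions.
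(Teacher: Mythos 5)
Your proposal is correct and follows essentially the same route as the paper: both diagonalize the sample Gram/covariance matrix of $\bx_t$, invoke an invariant-subspace perturbation bound (your Davis--Kahan step is interchangeable with the paper's Lemma~A1 from \textcite{golub2013matrix}), obtain the eigengap of order $NT^2$ from the functional CLT for the unit-root factors, and use the identical decomposition $\wh\bB\wh\bff_t-\bB\bff_t=(\wh\bB\wh\bB'-\bB\bB')\bB\bff_t+\wh\bB\wh\bB'\bve_t$ for the factor-recovery rate. The one small improvement is your handling of the complement via $\wh\bB_c\wh\bB_c'=\bI_N-\wh\bB\wh\bB'$, which replaces the paper's re-run of the perturbation argument with a one-line identity; conversely, your claim $\norm{\bE\bE'}_2=O_p(N+T)$ is sharper than the assumptions strictly deliver, but the weaker bound the paper uses already suffices for the stated rate.
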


		\begin{remark}
			\label{rmk1}
			From Theorem~\ref{loading_estimator}, the two distances in 
			(\ref{bbh}) are of the same rate which is reasonable because we used the matrix perturbation theory in the proofs and the two matrices play symmetric roles in Lemma A1 of the Supplement. The discrepancy measure used in Theorem~\ref{loading_estimator} is equivalent to the $\sin(\bTheta)$ distance in the literature concerning the distance between two orthogonal matrices. See  (3.2)--(3.4) of \textcite{gao2021two} for details. In addition, based on  \textcite{gao2021two}, the first distance $	\norm{\wh{\bB}\wh{\bB}'- \bB\bB'}_2$ in (\ref{bbh}) is also equivalent to the measure between two linear spaces defined in \textcite{pan2008modelling}:
			\[D(\mathcal{M}(\bB),\mathcal{M}(\wh\bB))=\sqrt{1-\tr(\bB\bB' \wh{\bB}\wh{\bB}')/r},\]
			when $r$ is finite, but the second distance in \eqref{bbh} is not because the dimension of $\bB_c$ is diverging.
		\end{remark}

		The following theorem establishes the convergence of the estimated number of common stochastic trends.	
		\begin{theorem}
			\label{thm:r_consistency}
			Suppose Assumptions \ref{assumption:mixing}--\ref{assumption:dependence} hold. If $N^{1/2}\log(T)T^{-1/2}\rightarrow 0$, then $P(\wh r=r)\rightarrow 1$ as $N,T\rightarrow \infty$, where $\wh r$ is obtained by the autocorrelation-based method in Section \ref{subsec:determination_r}.
		\end{theorem}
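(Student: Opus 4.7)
The plan is to reduce $P(\wh r = r) \to 1$ to two complementary events: (i) for every $j \in \{1,\ldots,r\}$ the statistic $S_j(\bar k)/\bar k \geq \delta_0$ with probability tending to one, and (ii) for $j = r+1$, $S_{r+1}(\bar k)/\bar k < \delta_0$ with probability tending to one. Because the detection procedure of Section~\ref{subsec:determination_r} scans the indices sequentially and stops at the first failure of the threshold, the conjunction of (i) and (ii) forces $\wh r = r$, and a union bound over the $r$ indices in (i) is harmless since $r$ is fixed and finite.

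For the unit-root side, I would combine Theorem~\ref{loading_estimator} with the fact that, for $j \leq r$, $\wh\bxi_j$ is a column of $\wh\bB$, so $\wh\bB\wh\bB'\wh\bxi_j = \wh\bxi_j$. This yields $\|\bB\bB'\wh\bxi_j - \wh\bxi_j\|_2 = O_p(T^{-1})$ and hence $\|\wh\bxi_j'\bB\|_2 \to 1$ in probability. Decomposing $\wh f_{j,t} = \wh\bxi_j'\bB\bff_t + \wh\bxi_j'\bve_t$, Assumption~\ref{assumption:convergence} implies that $(NT)^{-1/2}\wh\bxi_j'\bB\bff_{[Ts]}$ converges weakly to a non-degenerate Gaussian process, whereas the stationary term $\wh\bxi_j'\bve_t = O_p(1)$ is of smaller order. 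The classical unit-root autocorrelation argument (based on continuous mapping applied to $T^{-2}\sum \wh f_{j,t}^2$ and $T^{-2}\sum \wh f_{j,t}\wh f_{j,t-k}$) then gives $\wh\rho_j(k) \to 1$ in probability for each fixed $k$, so $S_j(\bar k)/\bar k \to 1 > \delta_0$ in probability.

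For $j = r+1$, the orthogonality $\wh\bxi_{r+1}'\wh\bB = \mathbf{0}$ combined with Theorem~\ref{loading_estimator} yields
$\|\wh\bxi_{r+1}'\bB\|_2 = \|\wh\bxi_{r+1}'(\bB\bB' - \wh\bB\wh\bB')\bB\|_2 = O_p(T^{-1}).$
Since $\max_{1\leq t\leq T}\|\bff_t\|_2 = O_p(\sqrt{NT})$ by the invariance principle in Assumption~\ref{assumption:convergence}, I obtain $\max_t |\wh\bxi_{r+1}'\bB\bff_t| = O_p(\sqrt{N/T})$, which is $o_p(1)$ under $N^{1/2}\log(T)T^{-1/2}\to 0$ (the log factor will appear once one combines with a maximal inequality for the cross term $\wh\bxi_{r+1}'\bB\bff_t \cdot \wh\bxi_{r+1}'\bve_{t-k}$ in the numerator of $\wh\rho_{r+1}(k)$ via Assumption~\ref{assumption:sub_exponential}). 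The dominant piece $\wh\bxi_{r+1}'\bve_t$ is a scalar stationary $\alpha$-mixing process by Assumption~\ref{assumption:mixing}, whose lag-$k$ autocovariance decays exponentially in $k$; standard ergodic bounds therefore deliver $\wh\rho_{r+1}(k) = \rho_{r+1}(k) + o_p(1)$, and choosing $\bar k$ moderately large but fixed forces $S_{r+1}(\bar k)/\bar k$ strictly below any fixed $\delta_0 \in (0,1)$ with probability tending to one.

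The principal obstacle, in my view, lies in transferring the clean operator-norm subspace bound from Theorem~\ref{loading_estimator} into a uniform-in-$t$ pointwise control on the ``leakage term'' $\wh\bxi_{r+1}'\bB\bff_t$ contaminating the stationary principal component. The random rotation linking $\wh\bxi_{r+1}$ to the true stationary directions means one cannot argue direction-by-direction; instead, one must exploit only the subspace geometry, together with a maximal inequality for the unit-root partial sums and a union bound over $t \leq T$, and it is precisely at this step that the rate $N^{1/2}\log(T)T^{-1/2}\to 0$ is consumed.
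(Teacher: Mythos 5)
Your argument is correct and follows essentially the same route as the paper's proof: both reduce the problem to showing that the leakage term $\wh\bxi_j'\bB\bff_t$ (equivalently $(\wh\bb-\bb)'\bB\bff_t$, since $\bb'\bB=\mathbf{0}$) is uniformly $o_p(1)$ over $t\leq T$, bounding it by the $O_p(T^{-1})$ subspace rate from Theorem~\ref{loading_estimator} times a maximal bound of order $\sqrt{N}\,T^{1/2}\log(T)$ on the unit-root factor partial sums, which is exactly where the condition $N^{1/2}\log(T)T^{-1/2}\to 0$ is consumed. The paper treats the nonstationary indices $j\leq r$ implicitly by citing \textcite{gao2021modeling_IJF} and argues column-by-column for $\wh\bB_c$ rather than through the subspace projection, but these are presentational differences only.
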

		
		Next, turn to the convergence of the estimated regression coefficients obtained in Section \ref{section:model}. 
		To control the errors between the estimated coefficients and the true ones, we  introduce a Restricted Strong Convexity (RSC) condition which is often used in  high-dimensional regularized estimation problems. See \textcite{agarwal2012noisy} and \textcite[Chapter 9]{wainwright2019high} for details. For any given $\lambda_{\bA},\lambda_{\bPhi}>0$, and a matrix $\bDelta\in \bbR^{p\times (N-r+dp)}=[\bDelta_1,\bDelta_2]$ with $\bDelta_1\in\bbR^{p\times (N-r)}$ and $\bDelta_2\in\bbR^{p\times dp}$, we  use a weighted combination  to define an associated norm as follows:
		\begin{equation}\label{psi:n}
		\Psi(\bDelta):=\lambda_{\bA}\|\bDelta_1\|_*+\lambda_{\bPhi}\|\vectorize(\bDelta_2)\|_1.
		\end{equation}
		The restricted strong convexity condition under our setting is defined below.
		
		\begin{definition}
			\label{def:RSC}
			Consider a generic operator $\mathscr{X}:\mathbb{R}^{p\times (N-r+dp)} \mapsto \mathbb{R}^{p\times T}$. We say that it 
			satisﬁes the RSC condition 
			with respect to norm $\Psi$, if
			\[
			\dfrac{1}{2T}\FNorm{\mathscr{X}(\bDelta)}^2 \geq \frac{\kappa_1}{2} \FNorm{\bDelta}^2 - \tau_T \Psi^2(\bDelta), \quad \text{for some}\ \Delta\in\bbR^{p\times(N-r+dp)},
			\]
			where $\kappa_1 > 0$ and $\tau_T > 0 $ are the curvature and tolerance constants, respectively. 
		\end{definition}
		When $\tau_T=0$, the RSC condition in Definition \ref{def:RSC} is called a locally strong convexity condition. See \textcite[Chapter 9]{wainwright2019high}. 
		Denote $\bDelta=[\bDelta_{\bA},\bDelta_{\bPhi}]$ with $\bDelta_{\bA}=\wh\bA-\bA$ and $\bDelta_{\bPhi}=\wh\bPhi-\bPhi$. We now establish the convergence rates of the estimated coefficient matrices below.	
		
		\begin{theorem}
			\label{RRR_LASSO}
			Suppose Assumptions \ref{assumption:mixing}--\ref{assumption:VAR_stationary} hold. For the augmented data matrices $\bZ=[\bz_0,...,\bz_{T-1}]$ and $\bP=[\bP_0,...,\bP_{T-1}]$, where all variables with zero or negative time indexes are set to $0$, if the operator
			\[
			\mathscr{X}([\mat{\Delta}_{\mat{A}}, \mat{\Delta}_{\mat{\Phi}}]) :=\bDelta_{\bA}\bZ+\bDelta_{\bPhi} \bP
			\] satisfies the RSC condition with the norm in the form of \eqref{psi:n}, curvature $\kappa_1$ and tolerance $\tau_T$ such that
			\[
			\kappa_1 \geq C\tau_T r_{\mat{A}} \lambda_{\bA}^2,\ \text{ and }\  \kappa_1 \geq C\tau_T s_{\mat{\Phi}} \lambda_{\bPhi}^2,
			\]
			where $r_{\mat{A}}$ and $s_{\mat{\Phi}}$ are the rank of $\mat{A}$ and the cardinality of the support of $\mat{\Phi}$, respectively, then with the regularization parameters $\lambda_{\mat{A}}$ and $\lambda_{\mat{\Phi}}$ satisfying
			\[
			\lambda_{\mat{A}} \geq \dfrac{{3}}{T} \norm{\mat{E}\mat{Z}'}_2 \ \text{ and }\  \lambda_{\mat{\Phi}} \geq \dfrac{2}{T} \norm{\vectorize\left(\mat{E}  \mat{P}' \right) }_\infty,
			\]
			where $\bE=[\be_1,...,
			\be_T]$ is the error matrix of (\ref{DGP}),
			we have
			\begin{equation}
			\FNorm{\wh{\mat{A}}-\mat{A}}^2+\FNorm{\wh{\mat{\Phi}}-\mat{\Phi}}^2 \leq C  \dfrac{\lambda_{\mat{A}}^2 r_{\mat{A}} + \lambda_{\mat{\Phi}}^2 s_{\mat{\Phi}}}{\kappa_1^2}.
			\end{equation}
		\end{theorem}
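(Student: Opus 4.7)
\medskip
\noindent\textbf{Proof proposal.} The plan is to follow the standard decomposable-regularizer recipe of \textcite{negahban2011estimation} and \textcite{agarwal2012noisy}, adapted to the mixed (nuclear$+\ell_1$) penalty and to serially dependent data through the RSC assumption. First I would write the basic inequality: since $(\wh\bA,\wh\bPhi)$ is optimal for the objective in \eqref{solve_A_Phi} (evaluated with $\wh\bz_{t-1}$ replaced by $\bz_{t-1}$ for the purpose of this deterministic argument; the gap between $\wh\bz_{t-1}$ and $\bz_{t-1}$ is handled separately using Theorem~\ref{loading_estimator}), comparing the objective at $(\wh\bA,\wh\bPhi)$ and at $(\bA,\bPhi)$ and using $\bY = \bA\bZ+\bPhi\bP+\bE$ yields
\[
\tfrac{1}{2T}\FNorm{\mathscr{X}(\bDelta)}^2 \;\le\; \tfrac{1}{T}\langle\bE,\bDelta_{\bA}\bZ+\bDelta_{\bPhi}\bP\rangle + \lambda_{\bA}\bigl(\|\bA\|_* - \|\wh\bA\|_*\bigr) + \lambda_{\bPhi}\bigl(\|\vectorize(\bPhi)\|_1 - \|\vectorize(\wh\bPhi)\|_1\bigr).
\]

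Second, I would control the noise inner product by duality of the respective norms,
\[
\tfrac{1}{T}|\langle\bE,\bDelta_{\bA}\bZ\rangle| \le \tfrac{1}{T}\|\bE\bZ'\|_2\,\|\bDelta_{\bA}\|_*, \qquad \tfrac{1}{T}|\langle\bE,\bDelta_{\bPhi}\bP\rangle| \le \tfrac{1}{T}\|\vectorize(\bE\bP')\|_\infty\,\|\vectorize(\bDelta_{\bPhi})\|_1,
\]
so that the choices $\lambda_{\bA}\ge \tfrac{3}{T}\|\bE\bZ'\|_2$ and $\lambda_{\bPhi}\ge \tfrac{2}{T}\|\vectorize(\bE\bP')\|_\infty$ force the noise contribution to be dominated by a fraction of the penalty slack. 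Next, using decomposability of $\|\cdot\|_*$ with respect to the rank-$r_{\bA}$ subspace associated with the SVD of $\bA$, and decomposability of $\|\cdot\|_1$ with respect to the support of $\bPhi$ (of cardinality $s_{\bPhi}$), the usual calculation produces the cone-type bounds $\|\bDelta_{\bA}^{\mathcal{M}^\perp}\|_*\le 3\|\bDelta_{\bA}^{\mathcal{M}}\|_*$ and $\|\vectorize(\bDelta_{\bPhi}^{S^c})\|_1 \le 3\|\vectorize(\bDelta_{\bPhi}^{S})\|_1$, hence
\[
\Psi(\bDelta) \;\le\; 4\lambda_{\bA}\sqrt{2r_{\bA}}\,\FNorm{\bDelta_{\bA}} + 4\lambda_{\bPhi}\sqrt{s_{\bPhi}}\,\FNorm{\bDelta_{\bPhi}},
\]
after also invoking $\|\bDelta_{\bA}^{\mathcal{M}}\|_*\le \sqrt{2r_{\bA}}\FNorm{\bDelta_{\bA}^{\mathcal{M}}}$ and $\|\vectorize(\bDelta_{\bPhi}^{S})\|_1\le \sqrt{s_{\bPhi}}\FNorm{\bDelta_{\bPhi}^{S}}$.

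Third, I would apply the RSC condition from Definition~\ref{def:RSC} to the left-hand side, which gives
\[
\tfrac{\kappa_1}{2}\FNorm{\bDelta}^2 - \tau_T\Psi^2(\bDelta) \;\le\; C\bigl(\lambda_{\bA}\sqrt{r_{\bA}}\FNorm{\bDelta_{\bA}} + \lambda_{\bPhi}\sqrt{s_{\bPhi}}\FNorm{\bDelta_{\bPhi}}\bigr).
\]
Substituting the previous bound on $\Psi(\bDelta)^2$ and using the curvature-vs-tolerance assumptions $\kappa_1\ge C\tau_T r_{\bA}\lambda_{\bA}^2$ and $\kappa_1\ge C\tau_T s_{\bPhi}\lambda_{\bPhi}^2$, the tolerance term is absorbed into at most half the curvature term, leaving $\tfrac{\kappa_1}{4}\FNorm{\bDelta}^2 \lesssim \lambda_{\bA}\sqrt{r_{\bA}}\FNorm{\bDelta_{\bA}} + \lambda_{\bPhi}\sqrt{s_{\bPhi}}\FNorm{\bDelta_{\bPhi}}$. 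Solving this elementary quadratic inequality by Cauchy--Schwarz yields the claimed rate $\FNorm{\wh\bA-\bA}^2 + \FNorm{\wh\bPhi-\bPhi}^2 \lesssim (\lambda_{\bA}^2 r_{\bA} + \lambda_{\bPhi}^2 s_{\bPhi})/\kappa_1^2$.

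The main obstacle, in my view, is not any single algebraic step—all of them are fairly routine once the RSC framework is in place—but rather the legitimacy of treating $\bZ$ as the regressor when only $\wh\bZ$ is observed, and verifying that the deterministic inequalities above still apply with the plug-in proxy. One must show that the replacement error propagates only into the noise term $\tfrac{1}{T}\bE\bZ'$ and $\tfrac{1}{T}\bE\bP'$ as an additive perturbation dominated by $\lambda_{\bA}$ and $\lambda_{\bPhi}$, using Theorem~\ref{loading_estimator} together with Assumptions~\ref{assumption:mixing}--\ref{assumption:sub_exponential} to control $\|\wh\bB_c\wh\bB_c'-\bB_c\bB_c'\|_2 \cdot \|\bX\|$-type quantities; this is the point where the nonstationarity of $\bx_t$ and the sub-exponential concentration must be carefully combined. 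The rest is standard decomposable regularization theory.
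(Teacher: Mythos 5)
Your proposal follows essentially the same route as the paper's proof: the basic inequality from optimality of $(\wh\bA,\wh\bPhi)$, H\"older-type duality bounds on the noise terms justifying the $3/T$ and $2/T$ thresholds, decomposability of the nuclear and $\ell_1$ norms to obtain the cone condition and the bound $\Psi(\bDelta)\lesssim \lambda_{\bA}\sqrt{r_{\bA}}\FNorm{\bDelta_{\bA}}+\lambda_{\bPhi}\sqrt{s_{\bPhi}}\FNorm{\bDelta_{\bPhi}}$, absorption of the tolerance term via the curvature conditions, and the final quadratic inequality. You also correctly identify the one genuinely delicate point --- the plug-in of $\wh\bZ$ for $\bZ$ --- and the mechanism the paper uses to handle it (Theorem~\ref{loading_estimator} combined with a bound on $\FNorm{\bX}^2$, formalized in the paper as a sample-level RSC lemma), so the proposal is correct and matches the paper's argument.
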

		
		\begin{remark}
			\label{remark2}
			(i) Under Assumptions \ref{assumption:mixing}--\ref{assumption:VAR_stationary}, by the Bernstein-type inequality for weakly dependent data in \textcite{merlevede2011bernstein} and the argument in the proofs of Lemma 3 in \textcite{negahban2011estimation}, it is not hard to show that $\|\bE\bZ'\|_2=O_p(\sqrt{(p+N)T})$. Then, the condition for $\lambda_{\bA}$ becomes $\lambda_{\bA}\geq C\sqrt{(p+N)/T}$. Similarly, by the Bernstein-type inequality in \textcite{merlevede2011bernstein}, we can also show that $\norm{\vectorize\left(\bE \bP' \right) }_\infty=O_p(\sqrt{T\log(p)})$, and therefore, the condition for $\lambda_{\bPhi}$ reduces to $\lambda_{\bPhi}\geq C\sqrt{\log(p)/T}$, which is the same as that in the LASSO literature. See \textcite{wainwright2019high}.\\
			(ii) For a properly chosen $C_*>0$ such that $\lambda_{\bA}= C_*\sqrt{(p+N)/T}$ and $\lambda_{\bPhi}= C_*\sqrt{\log(p)/T}$ satisfy the conditions in Theorem \ref{RRR_LASSO}, under the setting that $p/T\rightarrow 0$ and $N/T\rightarrow 0$, we may choose an $\tau_T>0$ such that $\kappa_1 > C\max(\tau_T r_{\mat{A}} \lambda_{\bA}^2,\tau_T s_{\mat{\Phi}} \lambda_{\bPhi}^2)>0$ is a positive constant, and then it follows from Theorem \ref{RRR_LASSO} that
			\[	\FNorm{\wh{\mat{A}}-\mat{A}}^2+\FNorm{\wh{\mat{\Phi}}-\mat{\Phi}}^2 \leq C\left(r_{\bA}\frac{p+N}{T}+s_{\bPhi}\frac{\log(p)}{T}\right)\rightarrow 0,\]
			as $p,N,T\rightarrow \infty$ for finite $r_{\bA}$ and $s_{\bPhi}$, implying that the estimated coefficient matrices are consistent.\\
			(iii) Under the settings in Remark \ref{remark2}(ii), we immediately obtain the consistencies for both matrices:
			\begin{equation}\label{conss}
			\FNorm{\wh{\mat{A}}-\mat{A}}^2\rightarrow 0\,\,\text{and}\,\,\FNorm{\wh{\mat{\Phi}}-\mat{\Phi}}^2\rightarrow 0,\,\,	\text{as}\,\, p,N,T\rightarrow\infty.
			\end{equation}
			If there is a positive constant $C>0$ such that the minimum nonzero singular value of $\bA$ and the minimum absolute elements in $\bPhi$, denoted by $\sigma_{r_{\bA}}$ and $|\bPhi|_{\min}$ respectively, satisfy $\sigma_{r_{\bA}}>C>0$ and $|\bPhi|_{\min}>C>0$ as $p,N,T\rightarrow\infty$, (\ref{conss}) implies that $P(\wh r_{\bA}=r_{\bA})\rightarrow 1$ and $P(\widehat{\mathcal{S}}_{\bPhi}=\mathcal{S}_{\bPhi})\rightarrow 1$, where $\wh r_{\bA}=$ rank$(\wh\bA)$, $r_{\bA}=$ rank$(\bA)$, and $\widehat{\mathcal{S}}_{\bPhi}$ and $\mathcal{S}_{\bPhi}$ contain all the indexes of the nonzero elements in $\wh\bPhi$ and $\bPhi$, respectively. We omit the details to save space.
		\end{remark}
		
		To establish properties of the estimated coefficients using the IRRA of Section 2.2.2, we first introduce a restricted set that is constructed by a projection of any matrix onto a subspace generated by another one of the same shape. Specifically, for any $m\times n$ matrix $\mat{\Theta}$,  we  perform a singular value decomposition (SVD) $\mat{\Theta} = \mat{U}\mat{D}\mat{V}'$ with a partition as follows,
		\begin{equation}
		\label{SVD}
		\mat{\Theta} = \begin{bmatrix}
		\mat{U}_k \ \mat{U}_{k,c} 
		\end{bmatrix}\begin{bmatrix}
		\mat{D}_k & \\
		& \mat{D}_{k,c}
		\end{bmatrix}\begin{bmatrix}
		\mat{V}_k' \\
		\mat{V}_{k,c}'
		\end{bmatrix},
		\end{equation}
		where $\mat{U}_k\in\mathbb{R}^{m\times k}$ and $\mat{V}_k\in\mathbb{R}^{n\times k}$ are the sub-matrices consisting of the left and right singular vectors associated with the $k$ largest  singular values of $\mat{\Theta}$, respectively, and $\mat{U}_{k,c} \in\mathbb{R}^{m\times (m-k)}$ and $\mat{V}_{k,c} \in\mathbb{R}^{n\times (n-k)}$ are the remaining ones. Similarly to \textcite{negahban2011estimation}, we define  two subspaces as follows,
		\begin{equation}
		\label{subspaces}
		\begin{split}
		\mathcal{S}_\mat{\Theta}(k) = \{\bA\in\mathbb{R}^{m\times n}:\range(\bA)\subseteq \range(\mat{U}_k), \range(\bA')\subseteq \range(\mat{V}_k)\},\,\,\text{and}\\
		\mathcal{S}_\mat{\Theta}^\perp (k) = \{\bA\in\mathbb{R}^{m\times n}:\range(\bA)\perp \range(\mat{U}_k), \range(\bA')\perp \range(\mat{V}_k)\}.
		\end{split}
		\end{equation}
		For any matrix $\mat{M}\in\mathbb{R}^{m\times n}$, we decompose it as $\mat{M}=\mat{M}_1+\mat{M}_2$, where
		\begin{equation}
		\label{decom12}
		\mat{M}_2= \mat{U}_{k,c} \mat{U}_{k,c}' \mat{M} \mat{V}_{k,c} \mat{V}_{k,c}',\ \text{and}\   \mat{M}_1= \mat{M} - \mat{M}_2.
		\end{equation}
		Because $\mat{M}_2\in \mathcal{S}_\mat{\Theta}^\perp (k)$, we use $\Pi_{\mathcal{S}_\mat{\Theta}^\perp (k)}(\mat{M}) = \mat{M}_2$ 
		to denote the projection of matrix $\mat{M}$ onto the subspace $\mathcal{S}_\mat{\Theta}^\perp (k)$. 
		
		Turn to the estimated coefficients using the IRRA of Section 2.2.2. By an abuse of notation, we define $\bDelta=[\bDelta_{\bA},\bDelta_{\bPhi}]=[\bDelta_{\bA},\bDelta_{\bPhi_1},...,\bDelta_{\bPhi_d}]$ with $\bDelta_{\bA}=\wh\bA-\bA\in \mathbb{R}^{p \times (N-r)}$ and $\bDelta_{\bPhi_i}=\wh\bPhi_i-\bPhi_i\in\mathbb{R}^{p \times p}$, and hence $\bDelta_{\bPhi}\in\mathbb{R}^{p\times dp}$. 
		We decompose $\bDelta_{\bA}$ as $\bDelta_{\bA} = \bDelta_{\bA,1} + \bDelta_{\bA,2}$ and $\bDelta_{\bPhi_i}$ as $\bDelta_{\bPhi_i} = \bDelta_{\bPhi_i,1} + \bDelta_{\bPhi_i,2}$, for $1\leq i\leq d$. It follows that  $\bDelta_{\bA,2} = \Pi_{\mathcal{S}_{\bA}^\perp(r_{\bA})}(\bDelta_{\bA})$, and $\bDelta_{\bPhi_i,2} = \Pi_{\mathcal{S}_{\bPhi_i}^\perp(r_i)}(\bDelta_{\bPhi_i})$, for $1\leq i\leq d$. We define a restricted set $\mathcal{C}$ as
		\begin{equation}
		\label{eq:set}
		\begin{split}
		\mathcal{C}(r_1,...,r_d) = \bigg\{ \bDelta \in \bbR^{p\times (N-r+dp)} \mid& \norm{\bDelta_{\bA,2}}_*  + \sum_{i=1}^{d} \norm{\bDelta_{\bPhi_i,2} }_* \leq 3 \norm{\bDelta_{\bA,1}}_*\\
		&+ 3 \sum_{i=1}^{d}  \norm{\bDelta_{\bPhi_i,1}}_* \bigg\}.   
		\end{split}
		\end{equation}
		We make an additional assumption below.
		\begin{assumption}
			\label{asm:RE}
			For the operator $\mathscr{X}$ defined in Theorem~3, we assume
			\[
			\frac{1}{2T} \FNorm{\mathscr{X}(\bDelta)}^2 = \frac{1}{2T}\FNorm{\bDelta_{\bA}\bZ+\bDelta_{\bPhi}\bP}^2\geq \kappa_2 \FNorm{\bDelta}^2,\,\,\text{for all}\,\, \bDelta\in\mathcal{C}(r_1,...,r_d),
			\]
			where $\kappa_2 > 0$ is a constant and  $\mathcal{C}(r_1,...,r_d)$ is defined in (\ref{eq:set}).
		\end{assumption}
		
		Note that Assumption 6 is a locally restricted strong convexity condition by setting $\tau_T=0$ in Definition~\ref{def:RSC}. Similar assumptions are also considered in Chapter 10 of \textcite{reinsel2022multivariate} for {\it i.i.d.} data. We next state the convergence of the estimated coefficients based on the IRRA of Section 2.2.2.
		\begin{theorem}\label{tm4}
			Assume Assumptions 1--5 hold. Suppose the predictor matrices $\bZ$ and $\bP$ satisfy the condition in Assumption 6 over the set $\mathcal{C}$ defined in \eqref{eq:set}. If $\lambda_{\bA}$ and $\lambda_i$ satisfy
			\[
			\lambda_{\mat{A}} \geq \dfrac{3}{T} \norm{\mat{E}\mat{Z}'}_2 \ \text{ and }\ \lambda_{i} \geq \dfrac{2}{T} \norm{\bE L^i(\bY)'}_2,\ \text{ for }\ i=1,2,\ldots,d,
			\]
			then, as $p,N,T\rightarrow\infty$,  we have 
			\[
			\FNorm{\wh\bA-\bA}^2 + \sum_{i=1}^{d}\FNorm{\wh\bPhi_i-\bPhi_i}^2 \leq C\left(r_{\bA}\lambda_{\bA}^2 + \sum_{i=1}^{d} r_{i} \lambda_{i}^2\right) / \kappa_2^2.
			\]
		\end{theorem}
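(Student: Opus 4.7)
My plan is to follow the standard ``basic inequality'' approach for nuclear-norm regularized $M$-estimators, adapted to the multi-block structure of the IRRA criterion (\ref{eq:IRRR}) and to the serially dependent time-series context. The argument proceeds in four stages. \emph{Basic inequality.} Since $(\hat\bA, \hat\bPhi_1, \ldots, \hat\bPhi_d)$ minimizes (\ref{eq:IRRR}), I would substitute the true parameters as competitors and use $\by_t = \bA\bz_{t-1} + \bPhi\bP_{t-1} + \be_t$, treating the replacement of $\bz_{t-1}$ by $\hat\bz_{t-1}$ as a lower-order perturbation controlled by Theorem~\ref{loading_estimator}. Expanding the Frobenius losses yields
\[\frac{1}{2T}\FNorm{\mathscr{X}(\bDelta)}^2 \leq \frac{1}{T}\inprod{\bE,\mathscr{X}(\bDelta)} + \lambda_\bA(\|\bA\|_* - \|\hat\bA\|_*) + \sum_{i=1}^{d}\lambda_i(\|\bPhi_i\|_* - \|\hat\bPhi_i\|_*).\]

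\emph{Dual bound on the noise term and membership in $\mathcal{C}$.} Using trace-duality blockwise, $\tfrac{1}{T}|\inprod{\bE,\bDelta_\bA\bZ}| \leq \tfrac{1}{T}\|\bE\bZ'\|_2\|\bDelta_\bA\|_* \leq \tfrac{\lambda_\bA}{3}\|\bDelta_\bA\|_*$, and similarly $\tfrac{1}{T}|\inprod{\bE,\bDelta_{\bPhi_i}L^i(\bY)}| \leq \tfrac{\lambda_i}{2}\|\bDelta_{\bPhi_i}\|_*$ by the assumed lower bounds on $\lambda_\bA$ and $\lambda_i$. Invoking the decomposability of the nuclear norm along the subspace decomposition (\ref{decom12}) at rank $r_\bA$ for $\bA$ and rank $r_i$ for each $\bPhi_i$, one gets $\|\bA\|_* - \|\hat\bA\|_* \leq \|\bDelta_{\bA,1}\|_* - \|\bDelta_{\bA,2}\|_*$ and analogously for each $\bPhi_i$. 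Discarding the nonnegative quadratic term on the left, splitting $\|\bDelta_\bA\|_* = \|\bDelta_{\bA,1}\|_* + \|\bDelta_{\bA,2}\|_*$, and collecting terms yields
\[\|\bDelta_{\bA,2}\|_* + \sum_{i=1}^{d}\|\bDelta_{\bPhi_i,2}\|_* \leq 3\Big(\|\bDelta_{\bA,1}\|_* + \sum_{i=1}^{d}\|\bDelta_{\bPhi_i,1}\|_*\Big),\]
so that $\bDelta \in \mathcal{C}(r_1,\ldots,r_d)$ as defined in (\ref{eq:set}).

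\emph{Applying the RSC bound and Cauchy--Schwarz.} Returning to the basic inequality but now keeping the quadratic term, I would apply the locally restricted strong convexity in Assumption~\ref{asm:RE} on the left, retain only the positive $\|\bDelta_{\cdot,1}\|_*$ contributions on the right, and use the rank inequalities $\|\bDelta_{\bA,1}\|_* \leq \sqrt{2r_\bA}\FNorm{\bDelta_{\bA,1}} \leq \sqrt{2r_\bA}\FNorm{\bDelta_\bA}$ and $\|\bDelta_{\bPhi_i,1}\|_* \leq \sqrt{2r_i}\FNorm{\bDelta_{\bPhi_i}}$, which hold since each $\bDelta_{\cdot,1}$ has rank at most twice the corresponding $r$. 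A single Cauchy--Schwarz step then gives
\[\kappa_2\FNorm{\bDelta}^2 \leq C\sqrt{\lambda_\bA^2 r_\bA + \sum_{i=1}^{d}\lambda_i^2 r_i}\cdot \FNorm{\bDelta},\]
and dividing by $\FNorm{\bDelta}$ and squaring produces the stated bound.

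The main obstacle I anticipate is the perturbation from replacing $\bz_{t-1}$ with $\hat\bz_{t-1}$ inside the IRRA criterion, which introduces an additional residual $\bA(\bZ-\hat\bZ)$ into the basic inequality. Its contribution to $\tfrac{1}{T}\inprod{\bA(\bZ-\hat\bZ), \mathscr{X}(\bDelta)}$ must be shown to be of smaller order than $\kappa_2\FNorm{\bDelta}^2$; the rate $\|\hat\bB_c\hat\bB_c' - \bB_c\bB_c'\|_2 = O_p(T^{-1})$ from Theorem~\ref{loading_estimator}, combined with the sub-exponential concentration of $\bx_t$ and $\be_t$ under Assumptions~\ref{assumption:mixing}--\ref{assumption:sub_exponential}, should suffice. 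A secondary subtlety is reconciling the different constants in the $\lambda_\bA$-block (factor $3$) and the $\lambda_i$-blocks (factor $2$) so that the cone inequality still emerges with the clean uniform constant $3$ in (\ref{eq:set}).
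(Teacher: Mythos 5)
Your proposal is correct and follows essentially the same route as the paper's proof: the basic inequality from optimality of $(\wh\bA,\wh\bPhi_1,\ldots,\wh\bPhi_d)$, trace duality to bound the noise terms, decomposability of the nuclear norm to obtain the cone condition placing $\bDelta$ in $\mathcal{C}$, Assumption~\ref{asm:RE} for the lower bound, and the rank inequality plus Cauchy--Schwarz to close the argument. You also correctly identify the two points the paper must (and does) handle: absorbing the $\bA(\wh\bZ-\bZ)$ and $\inprod{\bE,\bDelta_{\bA}(\wh\bZ-\bZ)}$ perturbations as $o_p(1)$ via Theorem~\ref{loading_estimator}, which is precisely why the $\lambda_{\bA}$ condition carries the factor $3$ while the $\lambda_i$ conditions need only $2$.
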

		\begin{remark}
			(i) Assumption 6 can be replaced by a weaker RSC condition as that in Theorem 3, and the results in Theorem 4 continue to hold with minor modifications in the proofs given in the online supplement. \\
			(ii) The convergence rates of the estimated coefficients are the same as those in Chapter 10 of \textcite{reinsel2022multivariate}, even for time-series data with mild serial dependence.\\
			(iii) By the discussions in Remark 2(i)--(ii), we may also choose $\lambda_{\bA}= C_*\sqrt{(p+N)/T}$ and $\lambda_{i}= C_*\sqrt{p/T}$ for some constant $C_*>0$ satisfying the conditions in Theorem \ref{tm4}, such that the convergence results in Theorem \ref{tm4} can be rewritten as
			\[\FNorm{\wh\bA-\bA}^2 + \sum_{i=1}^{d}\FNorm{\wh\bPhi_i-\bPhi_i}^2\leq C\left\{\frac{(p+N)r_{\bA}}{T}+\sum_{i=1}^d\frac{pr_i}{T}\right\},\]
			which approaches zero asymptotically under the setting that $p/T\rightarrow 0$ and $N/T\rightarrow 0$, implying that the estimators are consistent. It is straightforward to see that the convergence rates above are slightly slower than those in Remark 2(ii) if the sparsity parameter therein satisfies $s_{\Phi}/p\rightarrow 0$, which is often the case in sparse regression. This is understandable since there are usually more autoregressive coefficients to estimate in a reduced-rank regression in (\ref{eq:IRRR}) than in the sparse counterpart in (\ref{solve_A_Phi}).
		\end{remark}

		\section{Simulation Study}
		\label{sec:simulations}
		
		In this section, we evaluate the finite-sample performance of the proposed methodologies under the scenarios when both $p$ and $N$ are increasing from small to large. Though the dimensions of $\bB$ and $\wh\bB$ are not necessarily the same, as estimation error in $r$ may occur, the discrepancy measure adopted in Theorem \ref{loading_estimator} remains valid. To simplify the presentation and without loss of generality,  we set $d=1$ in \eqref{DGP}, and similar results can also be obtained for other choices of finite $d$.
		
		\subsection{Example 1: The Reduced-Rank and Sparse Regression}
		\subsubsection{Data Generating Process}
		\label{sec:DGP}
		We follow the data generating process in \eqref{factor_structure} and \eqref{DGP} and consider a three-factor model, 
		where the factors are $I(1)$ processes generated by \eqref{f_process}.
		{We further multiply  the factors  by $\sqrt{N}$ because  we will use orthonormal loading matrices  below and the  strength of general loadings is imposed on the factors in line with the assumptions and identification conditions.} While the number of factors $r=3$ is fixed, we set $p=20, 40, 60$, and $N=20, 40, 60$, respectively, and in each configuration of $(p,N)$, we set the sample size $T=400, 800, 1200$ to illustrate the proposed method and to exam certain theoretical properties of the estimators. In order to obtain reproducible results, we initialize a random generator in the \texttt{NumPy} package in \texttt{Python} by setting the seed to $1024$, and this seed is used throughout the simulation.
		
		To begin, we need to obtain the coefficient matrices of the model. We start with generating the loading matrix $\bB$ and its corresponding orthogonal complement $\bB_c$. As $[\wh\bB,\wh\bB_c]$ is an $N\times N$ full-rank orthonormal matrix, we first randomly generate an $N\times N$ orthogonal matrix, and divide its columns in such a way that the submatrix with the first $r$ columns is chosen as $\bB$ and the remaining columns form naturally the $\bB_c$ matrix. For the low-rank matrix $\mat{A}$, we first randomly generate two orthonormal matrices $\mat{U}\in\bbR^{p\times p}$ and $\mat{V}\in \mathbb{R}^{(N-r)\times (N-r)}$, and a $p\times (N-r)$ rectangular diagonal matrix $\mat{D}$ with only five positive entries on the upper left of the diagonal while all the other entries are set to zero. The positive diagonal entries in $\bD$ are drawn independently from a uniform distribution on the interval of $[0.1,1)$ so that all the five elements are strictly greater than $0$.
		The matrix $\mat{A}$ with rank $r_{\bA}=5$ is then chosen as $\mat{A} = \mat{U}\mat{D}\bV'$. Next, for the sparse matrix $\bPhi$, we first create a sparse matrix $\bPhi_1$ with only $20$ randomly located non-zero entries each of which is drawn  uniformly  on the intervals $(-1,-0.1] \cup [0.1,1)$. In order to guarantee the stationarity of $\by_t$ in \eqref{DGP}, we use the normalized matrix $\bPhi = 0.9 \times \bPhi_1/\norm{\bPhi_1}_2$ as the autoregressive coefficient matrix, which implies that  Assumption \ref{assumption:VAR_stationary} holds.

		For each configuration of $(p,N,T)$, with the coefficient matrices $\bB, \bB_c, \bA$ and $\bPhi$ chosen by the aforementioned methods, we generate $\bx_t, \bz_t$ and $\by_t$ according to Models \eqref{factor_structure}, \eqref{zt} and \eqref{DGP}, respectively. To obtain stable results, we use $500$ replications for each $(p,N,T)$ configuration and set $\bve_t\sim N(\vect{0},\bI_N)$, $\vect{u}_t \sim N(\vect{0},\bI_r)$, and $\vect{e}_t \sim N(\vect{0},\bI_p)$ in each realization.

		\subsubsection{Performance Evaluation}
		
		We ﬁrst study the performance of (\ref{auto:cor}) in estimating the number of factors. Because the data generating process $\bx_t$ of the previous section is independent of the dimension $p$, we only illustrate the proposed method for the case of $p=20$, and similar results can also be obtained for other cases. 
		Table~\ref{table:estimate_r} reports the empirical probabilities of $P(\wh{r} = r)$ based on $500$ repetitions for each $(N,T)$ 
		configuration when $p=20$, where we use the method described in Section \ref{subsec:determination_r} 
		with $\bar{k}=10$ and $\delta_0=0.3$. From Table~\ref{table:estimate_r}, we see that the auto-correlation based method can successfully recover the number of common stochastic trends. This is understandable because all the factors used in the simulation are strong ones. Similar results can also be found in \textcite{bai2002determining} and \textcite{lam2012factor}.
		
		\begin{table}[!htb]
			\centering
			\caption{Empirical probabilities of $P(\wh{r} = r)$ for various $(N,T)$ configurations, where the value is $r=3$ and the dimension is $p=20$. The estimation method of Section \ref{subsec:determination_r} with $\bar{k}=10$ and $\delta_0=0.3$ is used, and the results are based on $500$ iterations.}
			\label{table:estimate_r}
			\begin{tabular}{cccc}
				\hline
				& \multicolumn{3}{c}{$N$}\\
				\cline{2-4}
				$T$ & 20 & 40 & 60\\
				\hline
				400 & 100.00\% & 100.00\% & 100.00\% \\
				800 & 100.00\% & 100.00\% & 100.00\% \\
				1200 & 100.00\% & 100.00\% & 100.00\%\\
				\hline
			\end{tabular}
		\end{table}

		Next, we consider the estimation accuracy of the loading matrix $\bB$, which is measured by $\norm{\bB\bB' - \wh{\bB}\wh{\bB}'}_2$ over 500 replications. For the same reason mentioned before, we only show the results for the case of $p=20$. Boxplots of the discrepancies are shown in Figure \ref{fig:error_B}, from which 
		we see that  for each $N$, the discrepancy between the estimated loading matrix and the true one decreases as the
		sample size $T$ increases. This result is in agreement with our theorems.
		Furthermore, we also evaluate the estimation errors of the extracted factors. For each $(N,T)$ configuration, we define the the root-mean-squared-error (RMSE) of the estimated factors as
		\begin{equation}
		\label{denoised_RMSE}
		\mathit{RMSE} = \bigg(\dfrac{1}{NT} \sum_{t=1}^{T} \norm{\bB \bff_t - \wh\bB \wh{\bff}_t}_2^2\bigg)^{1/2},
		\end{equation}
		which quantifies the accuracy in recovering the common stochastic trends. Figure \ref{fig:error_X_recover} shows the results via boxplots using 500 replications. From Figure \ref{fig:error_X_recover}, we see clearly that, the recovery errors of the common factors  decrease as the sample size $T$ increases, which is consistent with the theoretical results in Theorem \ref{loading_estimator}.
		
		\begin{figure}[htb]
			\centering 
			\includegraphics[width=0.9\textwidth,height=2.5in]{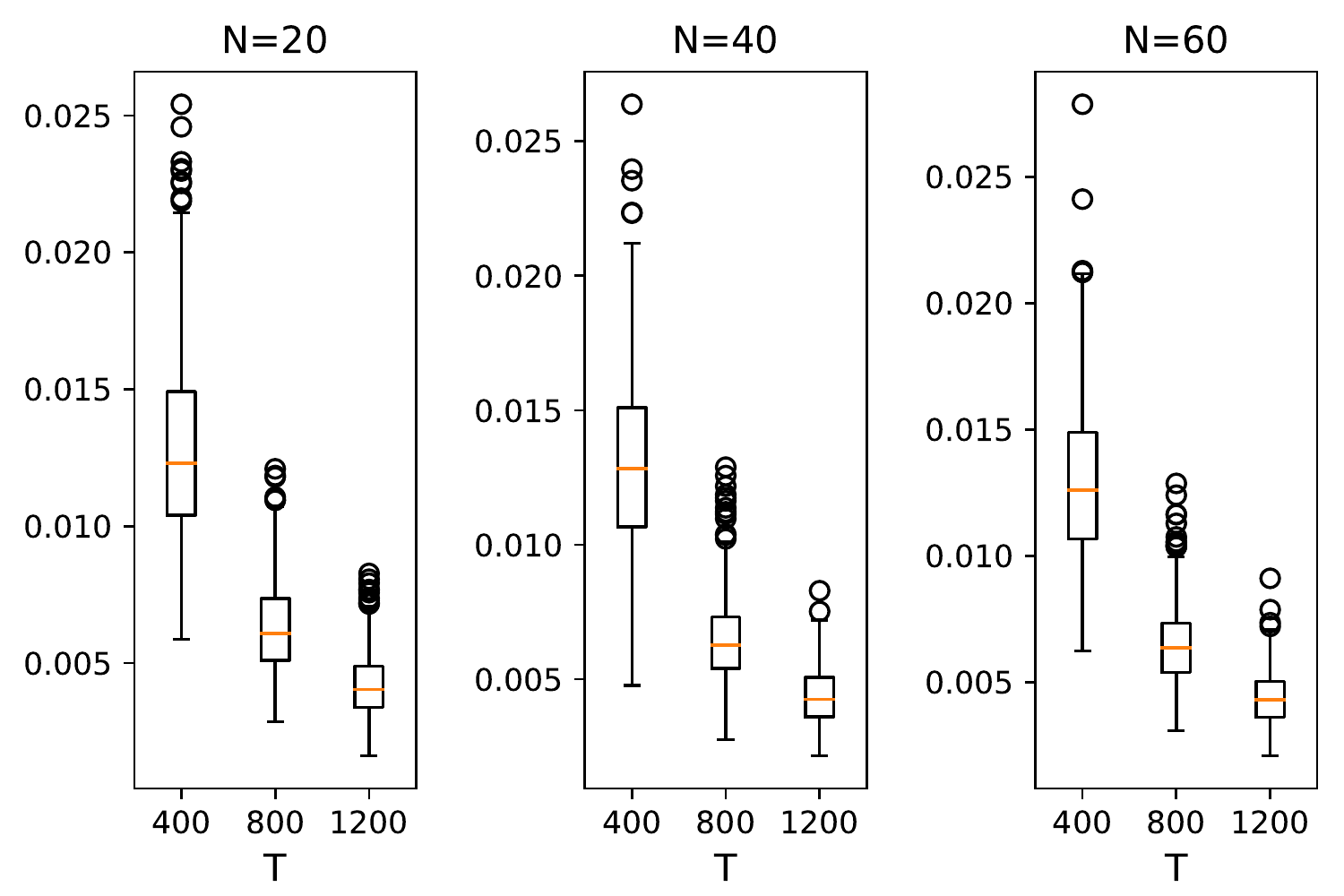}
			\caption{Boxplots of $\norm{\bB\bB' - \wh{\bB}\wh{\bB}'}_2$ with $r=3$ and $p=20$ in Example 1. For each $N$ (dimension of $\bx_t$), the sample sizes used are $400$, $800$ and $1200$, respectively. The results are based on $500$ replications.}
			\label{fig:error_B}
		\end{figure}
		
		\begin{figure}[htb]
			\centering 
			\includegraphics[width=0.9\textwidth,height=2.5in]{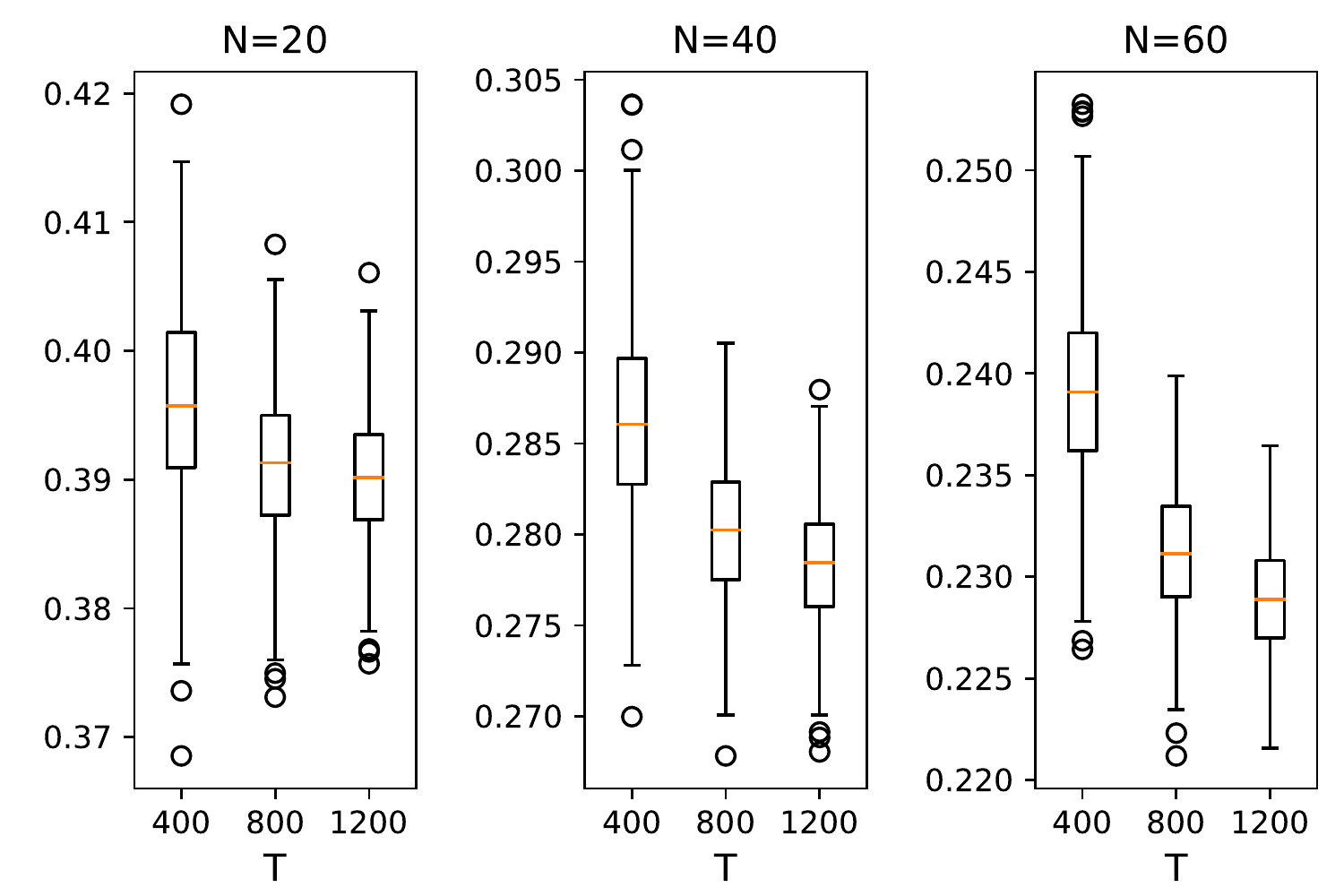}
			\caption{Boxplots for RMSE of the extracted factors defined in (\ref{denoised_RMSE}) with $r=3$ and $p=20$ in Example 1. For each $N$, the sample sizes used are $400$, $800$, and $1200$, respectively. The results are based on $500$ replications.}
			\label{fig:error_X_recover}
		\end{figure}

		We then study the estimation accuracy of the low-rank matrix $\bA$ and the sparse matrix $\mat{\Phi}$ using the procedure in Algorithm~\ref{iterative_estimation}. For simplicity, we set the tuning parameters $\lambda_{\bA} = \sqrt{(p+N)/T}$ and $\lambda_{\bPhi} = \sqrt{\log(p)/T}$, which are just taken from the rates  discussed in Remark \ref{remark2}(ii) by setting $C_*=1$ and this choice is good enough to produce satisfactory performance in the simulation. In practice, we may choose an optimal $C_*$ from an interval using grid search.  Due to the identification issue as that for $\bB$, we also use $\norm{\bA\bA' - \wh{\bA}\wh{\bA}'}_2$ to evaluate the discrepancy between $\wh\bA$ and $\bA$. Because there is no identification issues with $\bPhi$ and the estimated $\wh\bPhi$, we use $\norm{\mat{\Phi} - \wh{\mat{\Phi}}}_2$ to measure the estimation accuracy of the autoregressive coefficients. Boxplots of the estimation errors for $\wh\bA$ and $\wh\bPhi$ are presented in the Figures \ref{fig:error_A} and \ref{fig:error_Phi}, respectively. As expected from Theorem \ref{RRR_LASSO}, in each case of $(p,N)$, the estimation errors of $\bA$ and $\bPhi$ both decrease as the  sample size $T$ increases, which is also consistent with our theoretical properties.
		
		\begin{figure}[ht]
			\centering 
			\includegraphics[width=0.9\textwidth,height=2.5in]{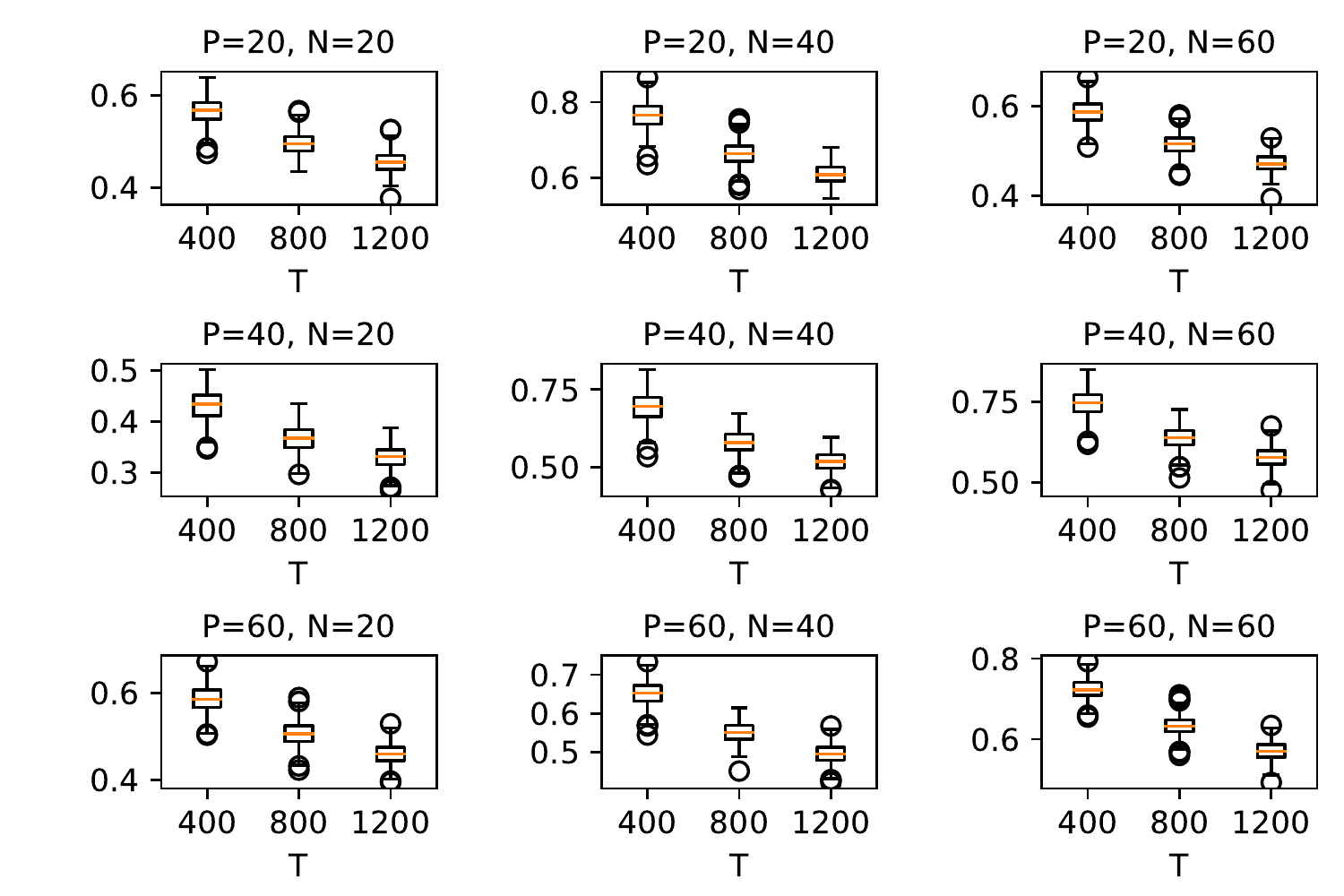}
			\caption{Boxplots for $\norm{\mat{A}\mat{A}' - \wh{\bA}\wh{\bA}'}_2$ of Example 1. For each $(p,N)$ configuration, the sample sizes used are $T=400,800,1200$, and the number of repetitions is $500$.}
			\label{fig:error_A}
		\end{figure}
		
		\begin{figure}[ht]
			\centering 
			\includegraphics[width=0.9\textwidth,height=2.5in]{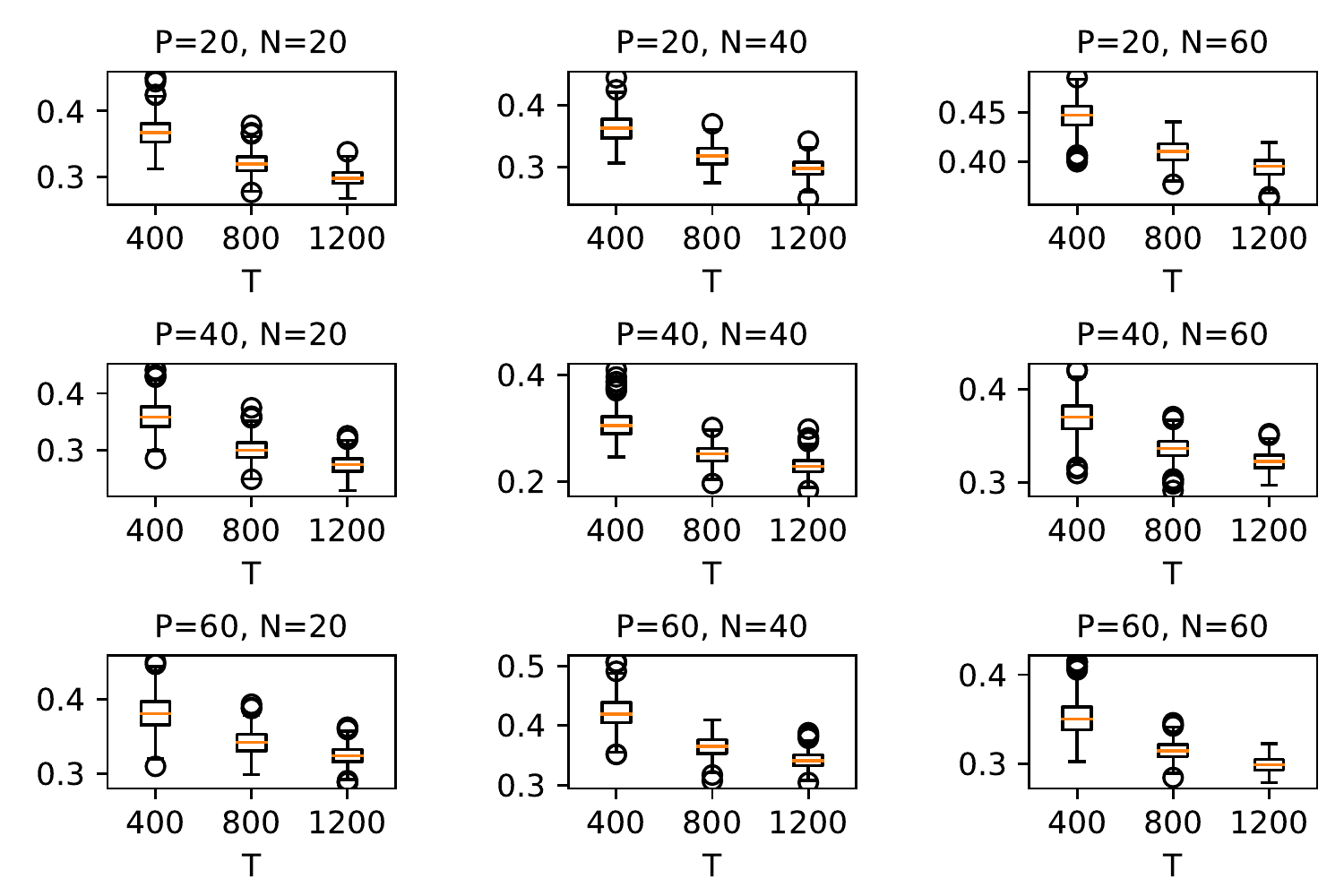}
			\caption{Boxplots for $\norm{\mat{\Phi} - \wh{\mat{\Phi}}}_2$ of Example 1. In each case of $(p,N)$, the sample sizes used are $T=400,800,1200$, and the results are based on $500$ repetitions.}
			\label{fig:error_Phi}
		\end{figure}

		
		Finally, we consider the estimation errors of the estimated explanatory variables and the true ones in Model (\ref{DGP}). Similarly to that in (\ref{denoised_RMSE}), we define the RMSE for the regression model (\ref{DGP}) as
		\begin{equation}\label{rmse2}
		\mathit{RMSE} = \bigg(\dfrac{1}{pT}\sum_{t=1}^{T} \norm{\mat{A} \bz_{t-1} + \mat{\Phi} \by_{t-1} - (\wh{\mat{A}} \wh{\bz}_{t-1} + \wh{\mat{\Phi}} \by_{t-1})}_2^2\bigg)^{1/2}, 
		\end{equation}
		which is similar to the in-sample errors of a regression model.
		Figure \ref{fig:fit_R2} displays boxplots of the  RMSEs in (\ref{rmse2}). 
		From the plot, we see that the patterns of the boxplots are similar to those obtained before.
		For each given $(p,N)$, the RMSEs decrease as  the sample size $T$ increases, illustrating the efficacy of the proposed method. Overall, the simulation results indicate that the proposed procedure works well in recovering the estimated coefficients.
		
		\begin{figure}
			\centering 
			\includegraphics[width=0.9\textwidth,height=2.5in]{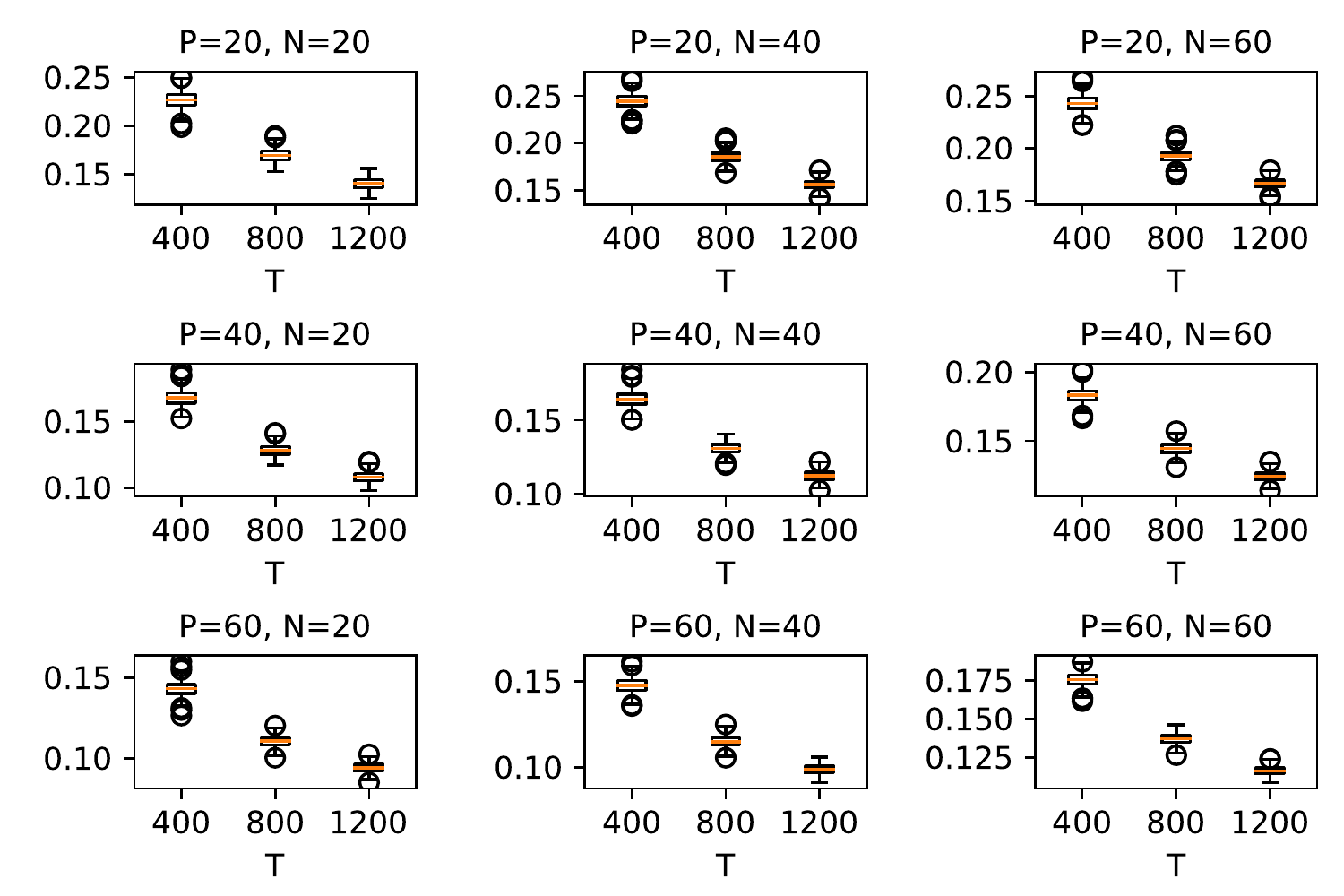}
			\caption{Boxplots for denoised RMSE of $\by_t$ defined in \eqref{rmse2} of Example 1. For each pair of $(p,N)$, the sample sizes used are $400$, $800$ and $1200$ and the number of repetitions is $500$.}
			\label{fig:fit_R2}
		\end{figure}

		\subsection{Example 2: The Integrative Reduced-Rank Approach}
		In this example, we investigate the performance of IRRA of  Section~\ref{sec:IRRA}. First, we generate the data $\bx_t$ using the same method as that of Section~\ref{sec:DGP}. 
		Second, unlike the sparse autoregressive matrices in Example 1, we generate two low-rank matrices $\bPhi_1 \in \bbR^{p\times p}$ and $\bPhi_2\in \bbR^{p\times p}$ under the context of the IRRA. Without loss of generality, we generate the those low-rank matrices 
		in the same way as that of $\bA$, and set $\mathit{rank}(\bPhi_1)=
		\mathit{rank}(\bPhi_2)=3$.
		Third, the process $\by_t$ is then generated according to \eqref{DGP} with the  coefficients given above, 
		where we choose $d=2$. 
		
		Similarly to the procedure in Example 1, we apply (\ref{auto:cor}) and Algorithm~\ref{algo:IRRA} to estimate the number of factors and the coefficients, respectively. 
		Since the performance of the auto-correlation based method is shown in Example 1, we omit the details here. Figures~\ref{fig:error_A_ADMM}, \ref{fig:error_Phi1_ADMM} and \ref{fig:error_Phi2_ADMM} show the discrepancies between the estimated coefficients and the true ones using Algorithm~\ref{algo:IRRA}. From these boxplots, we see that, for each configuration of $(p, N)$, all three coefficient estimates $\wh\bA, \wh\bPhi_1$ and $\wh\bPhi_2$ converge to the true ones as $T \to \infty$, which is consistent with our theory. For comparison, we also test the ADMM algorithm of  \textcite{li2019integrative}, and find that 
		the results of ADMM are quite close to those of the Algorithm~\ref{algo:IRRA} in the sense that the distance $\norm{\wh{\bTheta}^{(\text{Ite})}-\wh{\bTheta}^{(\text{ADMM})}}_2/\norm{\wh{\bTheta}^{(\text{Ite})}}_2$ is less than $10\%$ in most cases, where $\wh\bTheta=\wh\bA, \wh\bPhi_1$ or $\wh\bPhi_2$, and $\wh{\bTheta}^{(\text{Ite})}$ and $\wh{\bTheta}^{(\text{ADMM})}$ are the coefficient matrices estimated by Algorithm~\ref{algo:IRRA} and ADMM, respectively. Therefore, we omit the results obtained by the ADMM algorithm to save space. 
		
		\begin{figure}
			\centering 
			\includegraphics[width=0.9\textwidth,height=2.5in]{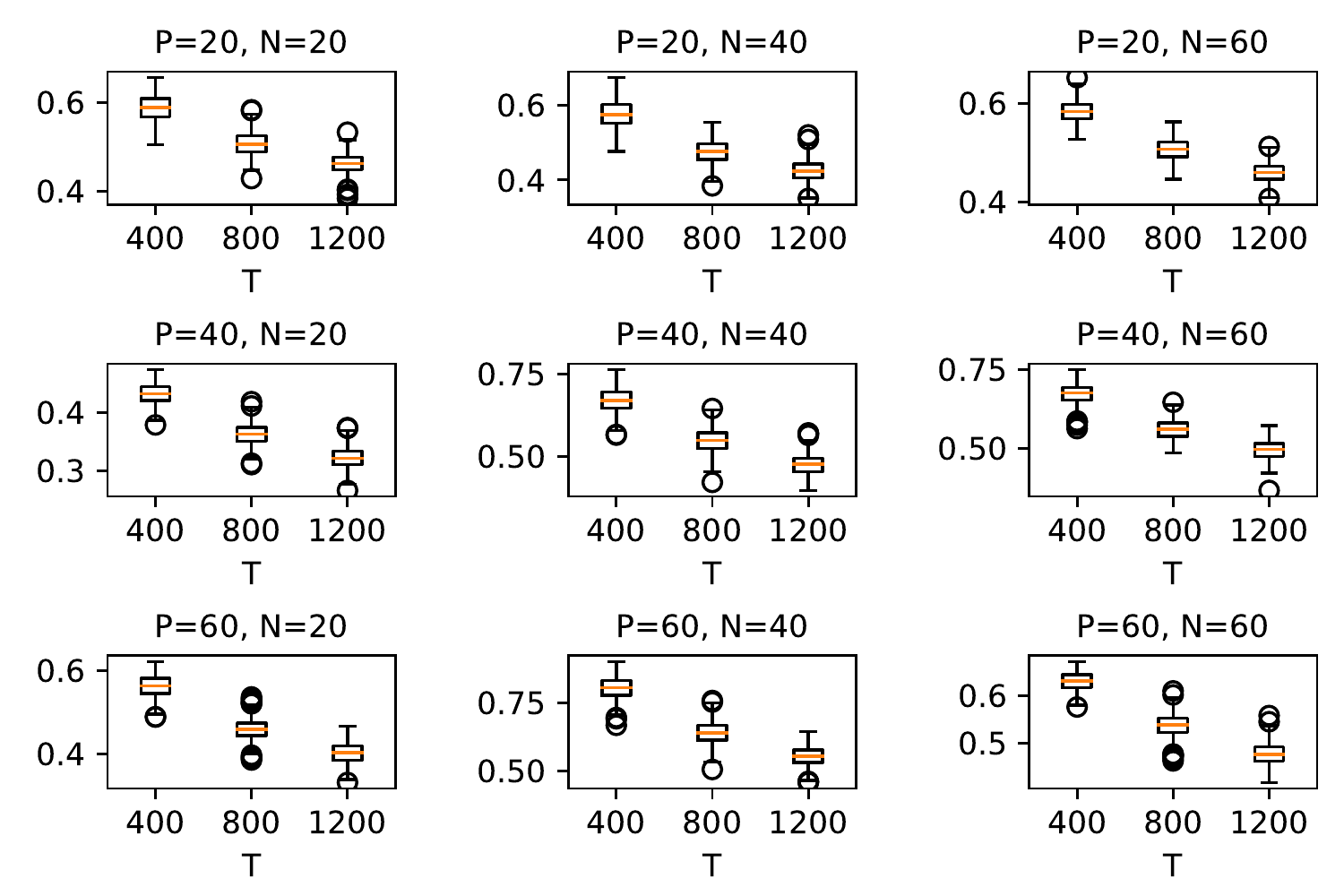}
			\caption{Boxplots for $\norm{\bA\bA' - \wh{\bA} \wh{\bA}'}_2$ of Example 2, where $\wh{\bA}$ is estimated by Algorithm~\ref{algo:IRRA}. For each pair of $(p,N)$, the sample sizes used are $T=400,800$ or 1200, and the number of repetitions is $500$.}
			\label{fig:error_A_ADMM}
		\end{figure}
		
		\begin{figure}
			\centering 
			\includegraphics[width=0.9\textwidth,height=2.5in]{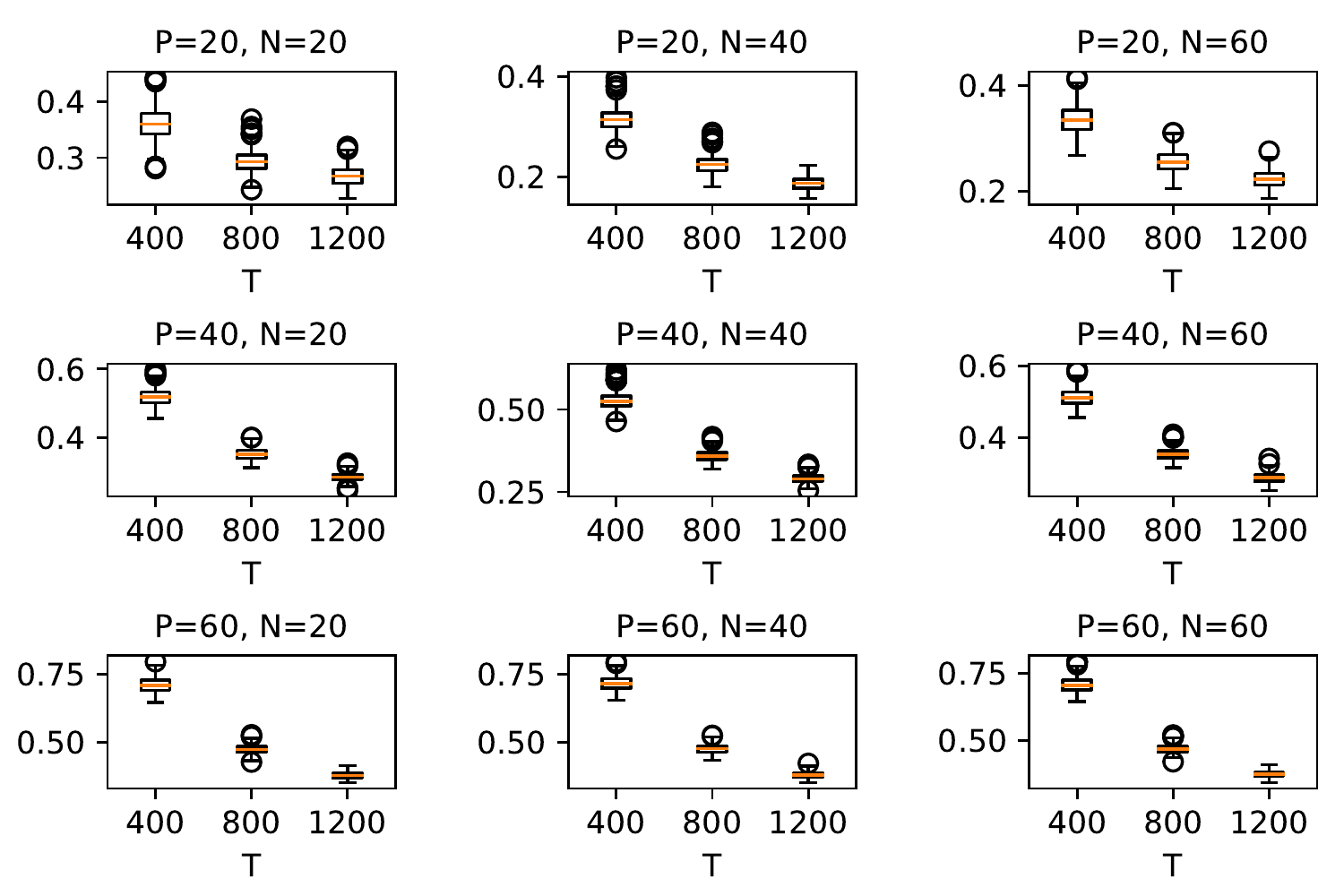}
			\caption{Boxplots for $\norm{\bPhi_1 - \wh\bPhi_1}_2$ in Example 2, where $\wh{\bPhi}$ is estimated by Algorithm~\ref{algo:IRRA}. For each pair of $(p,N)$, the sample sizes used are $T=400,800$ and $1200$, and the number of repetitions is $500$.}
			\label{fig:error_Phi1_ADMM}
		\end{figure}
		
		\begin{figure}
			\centering 
			\includegraphics[width=0.9\textwidth,height=2.5in]{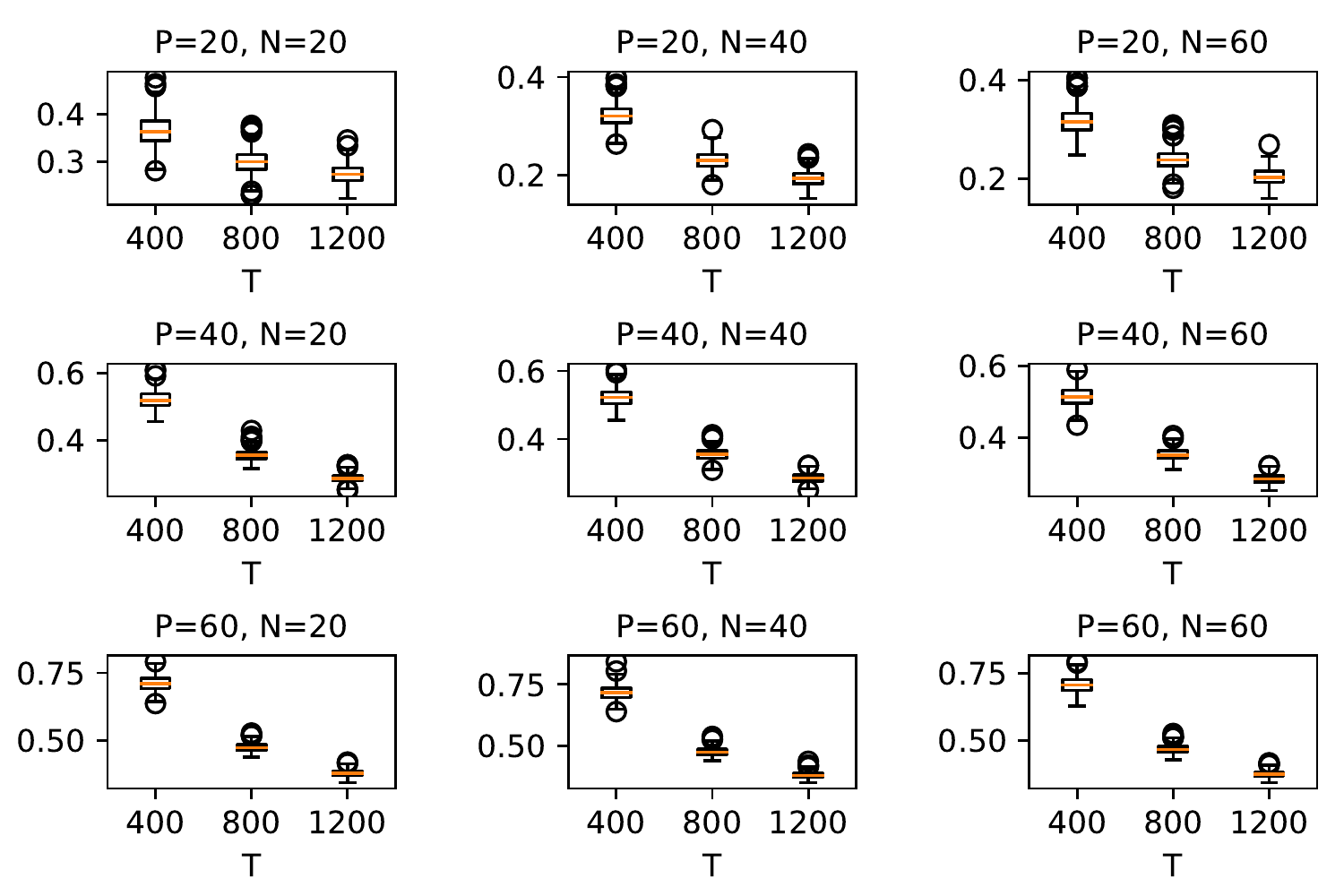}
			\caption{Boxplots for $\norm{\bPhi_2 - \wh\bPhi_2}_2$ in Example 2, where $\wh{\bPhi}$ is estimated by Algorithm~\ref{algo:IRRA}. For each pair of $(p,N)$, the sample sizes used are $T=400,800$ and $1200$, and the results are based on $500$ repetitions.}
			\label{fig:error_Phi2_ADMM}
		\end{figure}

		\section{Real Data Analysis}
		\label{sec:real_data_analysis}
		
		In this section, we apply the proposed method to predicting monthly stock returns. \textcite{welch2008comprehensive} examined the predictability of some macroeconomic variables to the equity premium, and concluded that the performance of the predictions,  both in-sample and out-of-sample, is poor and unstable. Using the same set of predictors, \textcite{koo2020high} exploited the cointegration relationship of the predictors, and showed that LASSO can improve the predictability of the macroeconomic variables in forecasting the equity premium of S\&P 500 index. We use an extended data set and conduct a forecasting experiment using the proposed method. Note that we predict the stock returns of a cross-section, instead of the equity premium of an individual stock or index.
		
		\subsection{Data and Empirical Strategy}
		
		Consider the monthly returns of selected stocks in the S\&P 500 index. Using the constituents of the index in January 2011 and the data structures in the CRSP (Center for Research in Security Prices) database, we select 79 stocks, which have no missing values during the time span from January 1960 to December 2019, as our sample. Therefore, we have 720 monthly observations of 79 $I(0)$ processes. We also collect the monthly macroeconomic variables  in \textcite{welch2008comprehensive} as the predictors. An updated version of the data can be downloaded from 
		Prof. Amit Goyal's personal website (\url{https://sites.google.com/view/agoyal145}), where we choose 13 macroeconomic  predictors as the $I(1)$ processes $\bx_t$ from December 1959 to November 2019. Therefore, we have $N=13$, $p=79$ and $T=720$ in this illustration.

		Table \ref{table:macro_stat} presents some descriptive statistics of the macro predictors, including their first-order sample autocorrelation coefficients $\rho(1)$ over the entire sample period. As shown in the table,  
		nine predictors have a first-order sample autocorrelation coefficient higher than 0.95, but four variables (inflation, long-term yield,  corporate bond returns, and stock variance) show little persistence. Therefore, we see that most of the variables are highly persistent and can be used as the $I(1)$ predictors in our model. 
		
		\begin{table}[htbp]
			\centering
			\caption{Descriptive statistics of the macroeconomic predictors, and their first-order sample autocorrelation coefficients over the entire sample period. The sample size is $T= 720$.}
			\small
			\begin{tabular}{lrrrrrrrr}
				\hline
				\multicolumn{1}{c}{variable}  & \multicolumn{1}{c}{mean} & \multicolumn{1}{c}{std} & \multicolumn{1}{c}{min} & \multicolumn{1}{c}{25\%}  & \multicolumn{1}{c}{50\%}  & \multicolumn{1}{c}{75\%}  & \multicolumn{1}{c}{max} & \multicolumn{1}{c}{$\rho(1)$}\\
				\hline
				D12   & 14.5619  & 13.4247  & 1.8667  & 3.6175  & 11.0988  & 19.5073  & 58.2406 & 0.9919\\
				E12   & 34.2835  & 34.3074  & 3.0300  & 8.1850  & 18.1334  & 51.0358  & 139.4700 & 0.9921\\
				b/m   & 0.4899  & 0.2565  & 0.1205  & 0.2872  & 0.4382  & 0.6395  & 1.2065 & 0.9937\\
				tbl   & 0.0454  & 0.0316  & 0.0001  & 0.0229  & 0.0462  & 0.0612  & 0.1630 & 0.9904\\
				AAA   & 0.0705  & 0.0263  & 0.0298  & 0.0488  & 0.0698  & 0.0856  & 0.1549 & 0.9943\\
				BAA   & 0.0806  & 0.0287  & 0.0387  & 0.0566  & 0.0789  & 0.0960  & 0.1718 & 0.9952\\
				lty   & 0.0631  & 0.0274  & 0.0163  & 0.0423  & 0.0598  & 0.0800  & 0.1482 & 0.9921\\
				ntis  & 0.0100  & 0.0199  & -0.0560  & -0.0022  & 0.0130  & 0.0245  & 0.0512 & 0.9809\\
				Rfree & 0.0037  & 0.0026  & 0.0000  & 0.0018  & 0.0037  & 0.0050  & 0.0135 & 0.9719\\
				infl  & 0.0030  & 0.0036  & -0.0192  & 0.0007  & 0.0029  & 0.0050  & 0.0181 & 0.5700\\
				ltr   & 0.0061  & 0.0291  & -0.1124  & -0.0104  & 0.0040  & 0.0228  & 0.1523 & 0.0310\\
				corpr & 0.0063  & 0.0257  & -0.0949  & -0.0071  & 0.0052  & 0.0191  & 0.1560 & 0.1077\\
				svar  & 0.0021  & 0.0043  & 0.0001  & 0.0006  & 0.0011  & 0.0021  & 0.0709 & 0.4717\\
				\hline
			\end{tabular}
			\label{table:macro_stat}
		\end{table}
		
		For the purpose of evaluating the forecasting performance of the proposed method, 
		we adopt the out-of-sample $R^2$ measure commonly used in the literature regarding the prediction of stock returns, see \textcite{gu2020empirical}. 
		At each time point, we define 
		the out-of-sample $R^2$ as
		\[
		R^2_{\mathrm{OOS}}(t) = 1 - \dfrac{\norm{\by_{t} - (\wh{\bA} \wh{\bz}_{t-1} + \wh{\bPhi} \bP_{t-1})}_2^2}{\norm{\by_{t}}_2^2},
		\]
		which differs from that in \textcite{gu2020empirical} by being a function of the time index $t$, which denotes the forecasting origin.
		
		Our empirical analysis works as follows. First, 
		we divide the time span of 60 years into two periods. The first period is the initial estimation period from January 1960 to December 2010, and the second one is the testing period from January 2011 to December 2019. Specifically, we conduct the empirical test according to the following procedure, which is similar to that commonly used in 
		the asset pricing literature. At the beginning, we use the data from January 1960 to December 2010 to estimate the coefficient matrices $\bA$ and $\mat{\Phi}$ of the model \eqref{DGP}, then predict the returns of the 79 stocks of January 2011 and calculate the out-of-sample $R^2$ of the prediction. We then add the returns  of January 2011 to the estimation period, and refit the model \eqref{DGP} with data from January 1960 to January 2011 to obtain updated coefficient matrices. The updated model is used to predict the returns of February 2011 with the model (\ref{DGP}) and to  calculate the out-of-sample $R^2$ again. We repeat this estimation-prediction process by 
		adding one-month returns to the estimation period in each iteration 
		until November 2019, which enables us to predict the returns for 
		December 2019. In addition, we choose the tuning parameters $\lambda_{\bA}$ and $\lambda_{\bPhi}$ based on the procedure described in Section~\ref{subsec:tuning} but letting $\lambda_{\bA}$ and $\lambda_{\bPhi}$ be proportional to $\sqrt{(p+N)/T}$ and   $\sqrt{\log(p)/T}$, respectively.  
		See Remark \ref{remark2}(ii).

		For comparison, we consider some alternative models commonly seen in the literature as benchmarks. The first benchmark is the naive VAR($d$) model, that is,
		\[
		\by_t = \bPsi \bP_{t-1}+\be_t, \ t=1,2,\ldots,T.
		\]
		For each series of $\by_t$ and the corresponding row of $\bPsi$, we can treat the above equation as a simple regression problem with $dp$ explanatory variables and $T$ observations. Therefore, we may use the LS method to estimate each row of $\bPsi$, then put them together to construct an estimator of $\bPsi$. We use the model in~\textcite{koo2020high} as another benchmark, where the tuning parameter $\lambda$ is fixed to $\frac{\log(p)}{10\sqrt{T}}$. Since the original model in~\textcite{koo2020high} is developed for predicting a scalar  time series, we apply their model with $13$ macroeconomic predictors described above to predict each stock separately, then stack the predictions together to calculate the out-of-sample $R^2$ for all $79$ stocks. Our final benchmark is the random walk model, in which we predict the returns of the next period using returns of the current period. For a more comprehensive comparison, we conduct the experiment for  $d=1,2$ and $3$, respectively, for the naive VAR, the RRSRA, and the IRRA models.

		\subsection{Prediction Performance of RRSRA}
		\label{subsec:pre_RRSRA}
		We evaluate the empirical performance of the RRSRA in this section. 
		To begin, Figure \ref{fig:r_hat} shows a time plot of 
		the estimated number of common trends by the method described in Section~\ref{subsec:determination_r} in the testing period. The figure shows that, except for the first three months of 2011 that may be affected by some economic crisis, the estimated number of common trends within $\bx_t$ is four, which is fairly stable over the entire test period. 
		Thus, we have nine cointegrating vectors to produce the stationary process $\wh{\bz}_t$ as a proxy of macroeconomic predictors. Before analyzing the forecasting performance of these estimated $\wh{\bz}_t$ variables, we take a look at the number of parameters to be estimated in the two coefficient matrices $\wh\bA$ and $\wh\bPhi$. Suppose that $\wh{r}=4$, then there are $9$ cointegrating vectors and hence, the matrix $\wh{\bA}$ has $79\times 9=711$ entries, and the matrix $\wh{\mat{\Phi}}$ has $79\times 79=6241$ entries to be estimated, both of which are  relatively large. Therefore, we expect that the dimensions of the two matrices can further be reduced to low-rank or sparse ones, which is commonly assumed in the literature to avoid over-fitting and to produce better forecasting performance.
		For this reason, we expect that the tuning parameters $\lambda_{\mat{A}}$ and $\lambda_{\mat{\Phi}}$ in our framework should be relatively large to guarantee that the dimensions can be reduced.
		
		\begin{figure}[htbp]
			\centering 
			\includegraphics[width=0.9\textwidth,height=2.5in]{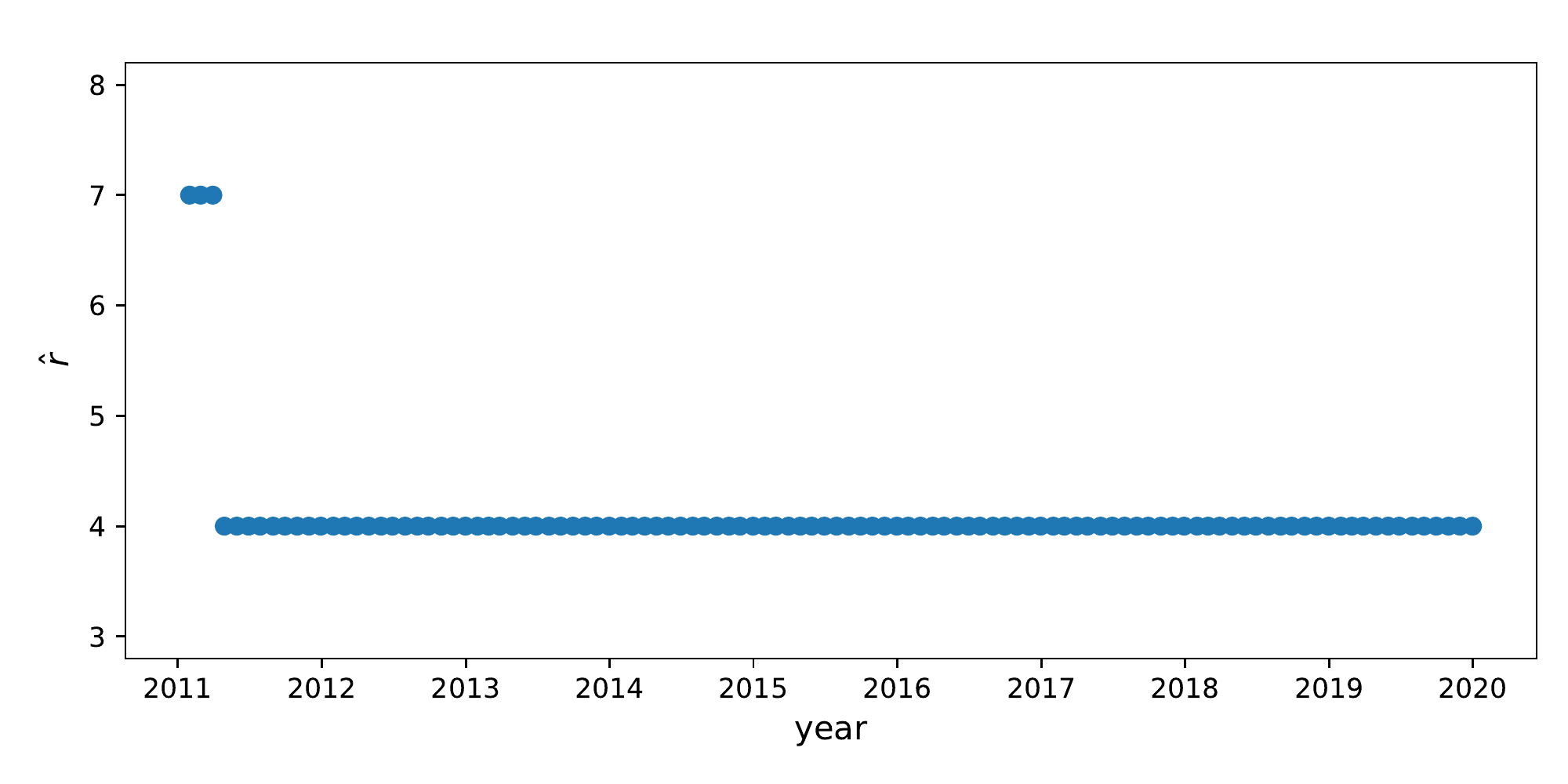}
			\caption{Time plot of $\wh{r}$ obtained using all the data before the corresponding time point on the horizontal axis.}
			\label{fig:r_hat}
		\end{figure}
		
		Figure \ref{fig:rank_sparse} shows the estimated rank of $\wh{\mat{A}}$ and the estimated number of non-zero entries of $\wh{\mat{\Phi}}$ for the proposed model with $d=1$ at each prediction time point. The average rank of $\wh{\mat{A}}$ is $1.97$ and the average number of non-zero entries of $\wh{\bPhi}$ is $5.88$. 
		Except for the first three months in 2011, the 
		estimated rank of $\wh\bA$ is $2$ in the estimation period. In addition,  $\wh{\bPhi}$ has at most $13$ non-zero entries at all time points, which is extremely small compared to $6241$ of the total number of entries. Overall, the proposed method provides an effective way to reduce the number of parameters and the dimension of the coefficient matrices.
		
		\begin{figure}[htbp]
			\centering 
			\includegraphics[width=0.9\textwidth,height=2.5in]{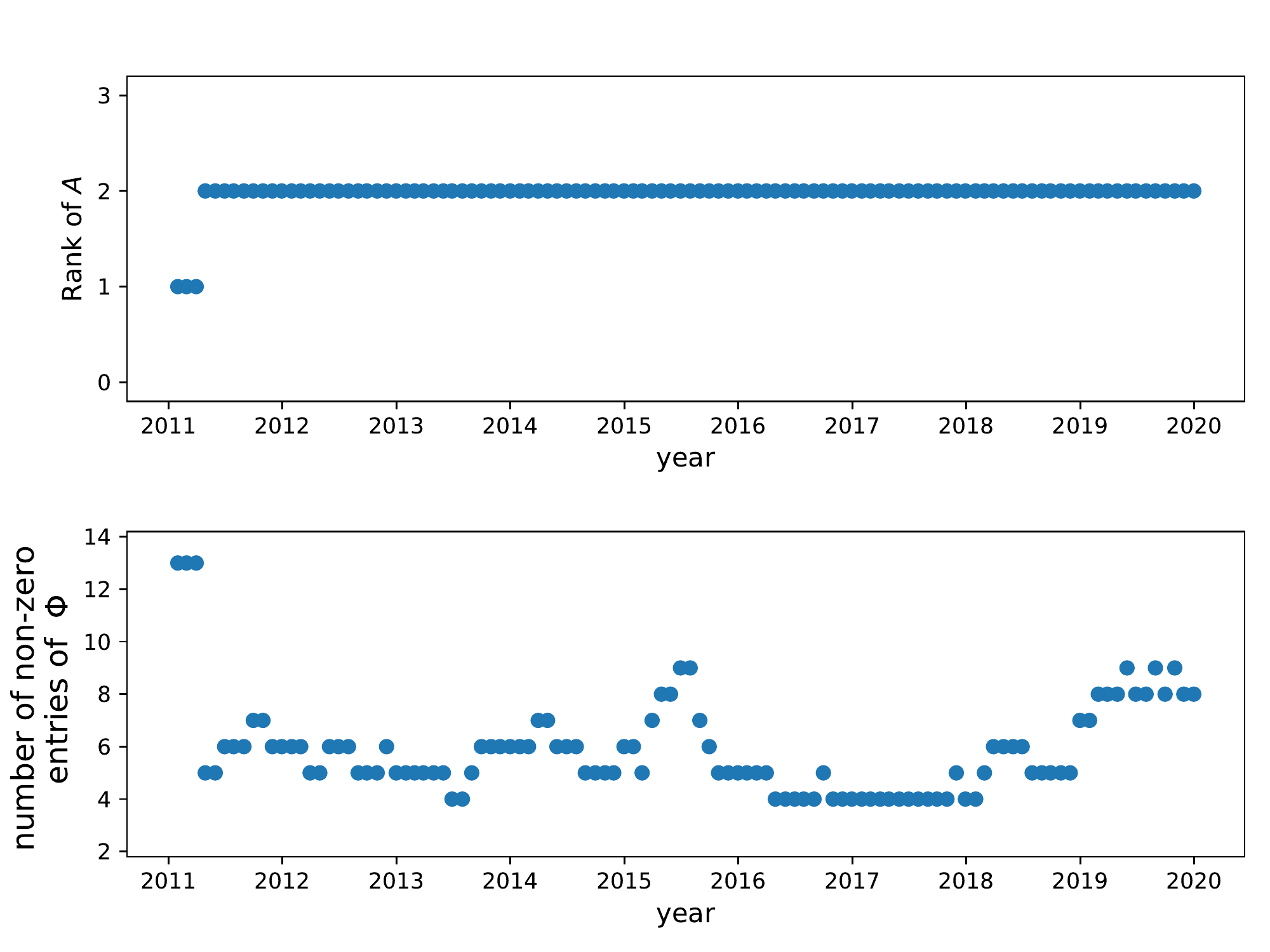}
			\caption{The estimated rank of $\wh\bA$ and the number of non-zero entries of $\wh\bPhi$. The two coefficient matrices are estimated with $d=1$ using all the data prior to the time point.}
			\label{fig:rank_sparse}
		\end{figure}

		Next we show the forecasting results in detail by following the  method described in Section~\ref{estimation_methodology} and the forecasting procedure mentioned above to evaluate the performance of different models. Table \ref{table:empirical_performance} reports the overall comparisons of our proposed method against the three benchmarks mentioned before, in terms of $R^2_{\text{OOS}}$. From Panels A and C of Table~\ref{table:empirical_performance}, we see that the proposed RRSRA substantially outperforms the naive VAR model (denoted by VAR($d$)), the method of \textcite{koo2020high} (denoted by Koo), and the random walk model (denoted by RW). The mean of out-of-sample $R^2$ of our method with $d=1$ is $0.91\%$ and the result is nearly the same for $d=2$ or $3$. We also note that our results are slightly better than those in \textcite{gu2020empirical}, where the highest monthly out-of-sample $R^2$ for all stocks is 0.40\% among all machine learning methods considered in their paper, and it is $0.70\%$ for the top $1,000$ stocks and $0.47\%$ for the bottom $1,000$ stocks by market values. In particular, the VAR model performs relatively poorly, as the out-of-sample $R^2$s with different lags all assume negative values with large magnitudes, which are $-22.54\%$, $-41.08\%$ and $-87.81\%$, respectively. One possible reason is that the number of parameters to be estimated in VAR models is significantly large and this often leads to severe over-fitting, which in turn produces high 
		variations in out-of-sample forecasting. When the lag order $d$ increases, the number of parameters also increases, so the performance would further deteriorate. For the model in \textcite{koo2020high}, the results in Table~\ref{table:empirical_performance} imply that the Koo method has limited predictive power when forecasting the returns of individual stocks. Finally, we see that the random walk model performs the worst. In summary, the proposed model has marked advantages in prediction 
		over the three benchmark models considered.
		
		\begin{table}[htbp]
			\centering
			\caption{Comparison of the RRSRA model, IRRA model and several benchmarks in terms of out-of-sample $R^2$. Panel A shows the results for RRSRA model with lag $d=1,2,3$. Panel B is the results for IRRA model, where we use both Algorithm~\ref{algo:IRRA} and ADMM method in estimation, and with $d=1,2,3$, respectively. Panel C reports several benchmark models, including the naive VAR model with lag $d=1,2,3$, the method in \textcite{koo2020high} denoted by Koo, and the random walk model denoted by RW. For each time point in the test sample, we calculate the value of $R^2_{\text{OOS}}$. The size of the test sample is $108-d+1$ for all models.}
			\begin{tabular}{lrrrrrrr}
				\hline
				& \multicolumn{7}{c}{Out-of-sample $R^2(t)$ (in percentage)}           \\ \cline{2-8} 
				& \multicolumn{1}{r}{mean} & std    & min     & 25\%    & 50\%    & 75\%   & max   \\ \hline
				\multicolumn{8}{l}{\textit{Panel A. Method~\eqref{solve_A_Phi} with different $d$}}\\
				RRSRA(1) & 0.91                     & 18.73  & -85.75  & -6.42   & 4.61    & 13.25  & 31.45 \\
				RRSRA(2) & 0.92                     & 18.70  & -85.76  & -6.38   & 4.66    & 13.20  & 31.53 \\
				RRSRA(3) & 0.90                     & 18.83  & -86.44  & -6.32   & 4.67    & 13.33  & 31.78 \\
				&&&&&&&\\
				\multicolumn{8}{l}{\makecell[l]{\textit{Panel B. Method~\eqref{eq:IRRR} with different $d$, fitted using both iterative method}\\ \textit{and ADMM method}}} \\
				Iterative(1) & 0.75 & 18.29 & -80.36 & -6.77 & 4.79 & 13.09 & 30.90 \\
				Iterative(2) & 0.66 & 18.39 & -76.71 & -7.01 & 3.96 & 13.18 & 29.99 \\
				Iterative(3) & 0.59 & 18.53 & -77.88 & -7.71 & 4.21 & 13.66 & 28.77 \\
				ADMM(1) & 0.80 & 18.18 & -79.28 & -6.57 & 4.83 & 13.11 & 30.61 \\
				ADMM(2) & 0.65 & 18.38 & -76.52 & -7.01 & 3.96 & 13.16 & 30.03 \\
				ADMM(3) & 0.60 & 18.53 & -77.87 & -7.67 & 4.11 & 13.66 & 29.13 \\
				&&&&&&&\\
				\multicolumn{8}{l}{\textit{Panel C. Benchmark models}}\\
				VAR(1)      & -22.54                   & 31.23  & -133.10 & -34.86  & -18.96  & 0.05   & 24.27 \\
				VAR(2)      & -41.08                   & 49.96  & -344.47 & -65.06  & -31.93  & -12.30 & 37.23 \\
				VAR(3)      & -87.81                   & 97.26  & -797.91 & -117.46 & -66.46  & -35.08 & 46.18 \\
				Koo         & -5.23                    & 21.67  & -91.40  & -17.98  & -0.53   & 11.32  & 26.52 \\
				RW          & -127.08                  & 117.45 & -665.03 & -163.63 & -111.06 & -49.64 & 37.24 \\ \hline
			\end{tabular}
			\label{table:empirical_performance}%
		\end{table}%

		To explore the in-sample goodness of fit, we apply all 
		entertained models except the random walk to the entire data set, and calculate the in-sample $R^2$ at each time point. The results over the time are shown in Table \ref{table:empi_IS_R2}. As  expected, the VAR model, which has many more degrees of freedom than the others, produces the highest in-sample $R^2$. Our model and the one of \textcite{koo2020high} provide a robust in-sample fit, and the result by the Koo method is only slightly worse than those of the proposed models.

		\begin{table}[htbp]
			\centering
			\caption{In-sample $R^2$ of method~(\ref{solve_A_Phi}), method~(\ref{eq:IRRR}) and some benchmarks. The three models in Panel A are the RRSRA model with lag $d=1,2,3$. Panel B shows the results for IRRA, where we use both iterative method and ADMM method in estimation with $d=1,2,3$, respectively. Panel C reports some benchmark models, that is, the naive VAR model with $d=1,2,3$, the method in \textcite{koo2020high} denoted by Koo, and the random walk model denoted by RW. We fit each model with the entire data set and obtain fitted values for returns of individual stocks, then calculate $R^2$ at each time point. The sample size is $719-d+1$ for all models.}
			\begin{tabular}{lrrrrrrr}
				\hline
				& \multicolumn{7}{c}{In-sample $R^2$ (in percentage)} \\
				\cline{2-8} & \multicolumn{1}{r}{mean}    & \multicolumn{1}{r}{std}    & \multicolumn{1}{r}{min}     & \multicolumn{1}{r}{25\%}    & \multicolumn{1}{r}{50\%}    & \multicolumn{1}{r}{75\%} & \multicolumn{1}{r}{max} \\
				\hline
				\multicolumn{8}{l}{\textit{Panel A. Method~\eqref{solve_A_Phi} with different $d$}} \\
				RRSRA(1) & 1.53                        & 16.02                      & -68.32                      & -9.36                       & 3.80                        & 13.40                      & 42.66                     \\
				RRSRA(2) & 1.52                        & 15.99                      & -68.44                      & -9.37                       & 3.77                        & 13.32                      & 42.47 \\
				RRSRA(3) & 1.55 & 16.04 & -68.63 & -9.29 & 3.75 & 13.42 & 42.63 \\
				&&&&&&&\\
				\multicolumn{8}{l}{\makecell[l]{\textit{Panel B. Method~\eqref{eq:IRRR} with different $d$, fitted using both iterative}\\ \textit{method and ADMM method}}} \\
				Iterative(1) & 1.61 & 16.00 & -68.16 & -8.95 & 3.85 & 13.25 & 43.03 \\
				Iterative(2) & 1.58 & 15.99 & -65.52 & -8.84 & 3.89 & 13.27 & 43.42 \\
				Iterative(3) & 1.71 & 16.03 & -65.41 & -9.10 & 3.81 & 13.40 & 42.95 \\
				ADMM(1) & 1.59 & 15.98 & -67.62 & -8.99 & 3.91 & 13.25 & 42.84 \\
				ADMM(2) & 1.58 & 15.97 & -65.14 & -8.87 & 3.92 & 13.26 & 43.34 \\
				ADMM(3) & 1.69 & 16.01 & -65.19 & -8.96 & 3.78 & 13.33 & 42.98 \\
				&&&&&&&\\
				\multicolumn{8}{l}{\textit{Panel C. Several benchmark models}}\\
				VAR(1)      & 8.10                        & 24.69                      & -137.87                     & -3.87                       & 11.85                       & 25.34                      & 69.86                     \\
				VAR(2)      & 15.96                       & 30.09                      & -121.38                     & 0.80                        & 20.39                       & 37.38                      & 84.32                     \\
				VAR(3)      & 24.69                       & 35.48                      & -303.66                     & 9.14                        & 30.57                       & 48.35                      & 89.87                     \\
				Koo         & -1.13                       & 16.30                      & -48.29                      & -14.18                      & 2.03                        & 12.90 & 34.17 \\
				\hline
			\end{tabular}
			\label{table:empi_IS_R2}
		\end{table}%

		To check whether our model outperforms the benchmarks uniformly over the in-sample and the out-of-sample periods, we plot the in-sample $R^2$s and out-of-sample ones of the RRSRA$(1)$ model in Figures~\ref{fig:ISR2} and \ref{fig:OOSR2}, respectively, where the time index is on the horizontal axis. For a better illustration, the points in  Figure~\ref{fig:ISR2} are the in-sample $R^2$s based on the data of each year from 1960 to 2019. In both figures, we also plot the VAR(1) as a benchmark. An additional plot of the random walk model is also included in Figure~\ref{fig:OOSR2} as another benchmark. Because the results produced by the Koo method  in \textcite{koo2020high} are very close to ours, they are omitted.  From Figure~\ref{fig:ISR2}, we see that our method fits the data relatively poorly 
		compared to the VAR model in most years according to the in-sample $R^2$. This is understandable since the VAR model fits the data via the LS method to minimize the squared distance between the fitted values and the true ones, while our method adopts regularization, which often introduces some in-sample biases in order to provide more stable predictions in out-of-samples. Furthermore, Figure~\ref{fig:OOSR2} shows that our method produces more robust predictions than the VAR and the random walk model, and outperforms them over most of the time points 
		based on the out-of-sample $R^2$. This illustrates the predictive advantages of using the proposed method.
		
		\begin{figure}[htbp]
			\centering 
			\includegraphics[width=0.9\textwidth,height=2.5in]{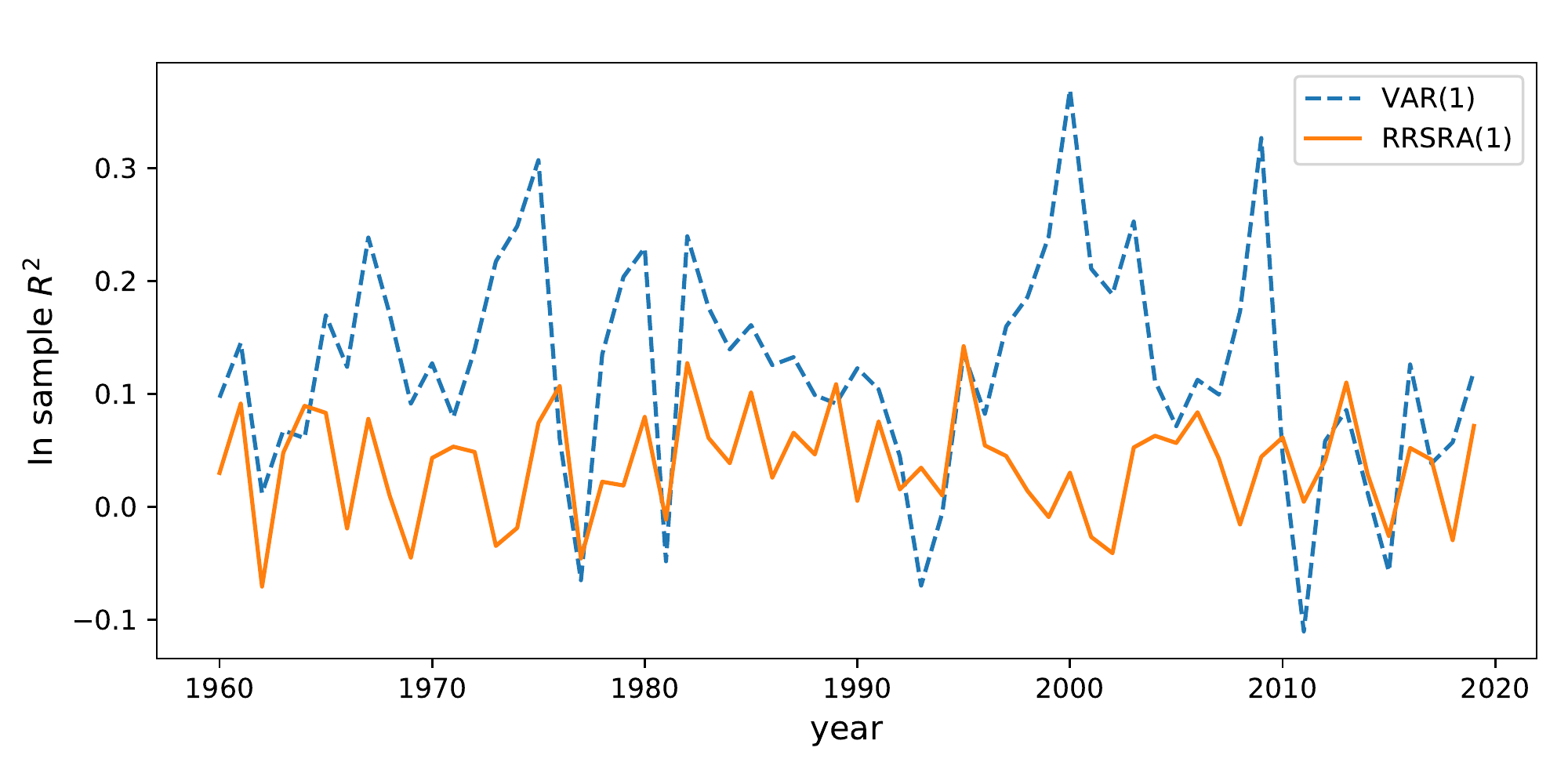}
			\caption{In-sample $R^2$ for the VAR(1) model and our method with $d=1$. We use the entire data set in estimation and calculate the in-sample $R^2$ using the data of each year from 1960 to 2019.}
			\label{fig:ISR2}
		\end{figure}
		
		\begin{figure}[htbp]
			\centering 
			\includegraphics[width=0.9\textwidth,height=2.5in]{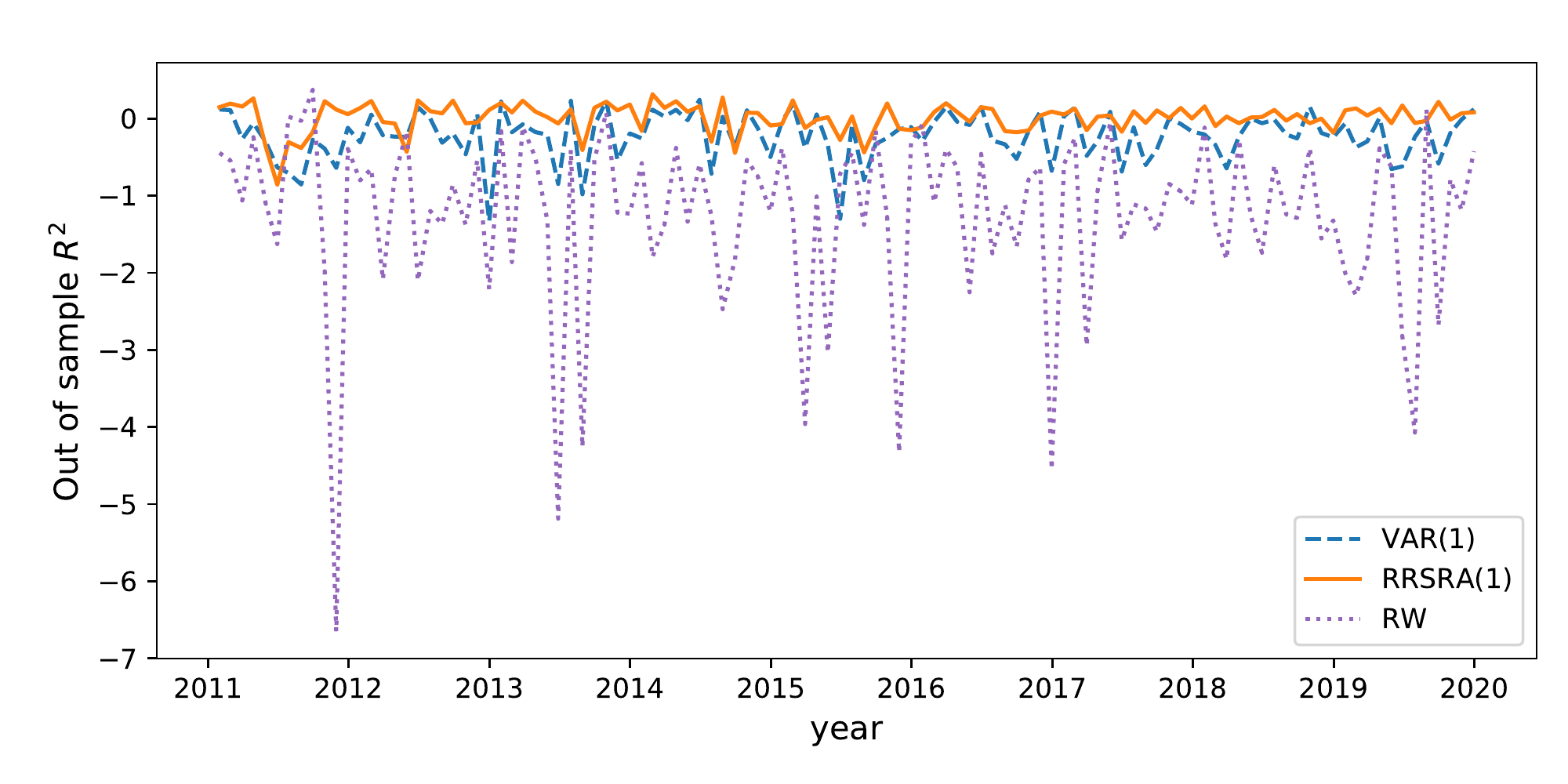}
			\caption{Out-of-sample $R^2$ for the proposed model with $d=1$, the VAR(1) model and the random walk model. For each time point in the test period, we fit each model using the data prior to that time point, make predictions for the returns, and calculate the out-of-sample $R^2$ of the predictions.}
			\label{fig:OOSR2}
		\end{figure}
		
		\subsection{Prediction Performance of IRRA}
		
		In this section, we evaluate the predictive performance of IRRA models described in Section~\ref{sec:IRRA}. The procedure of estimating the factor model \eqref{factor_structure} and obtaining $\bz_t$ are exactly the same as those in Section~\ref{subsec:pre_RRSRA}. With the estimated $\wh\bz_t$, we fit the data via \eqref{eq:IRRR} using both iterative method and ADMM method, and evaluate its  predictive performance. We expect that the results of IRRA are close to those of RRSRA, because the tuning parameter $\lambda_{\bPhi}$ selected by a grid search is relatively large in both models. When the tuning parameter $\lambda_{\bPhi}$ in both methods tends to infinity, the  estimated coefficients obtained by the two algorithms tend to be the same.
		
		We first examine the estimated coefficient matrices. Figure~\ref{fig:rank_rank} shows the rank of $\wh\bA$ and $\wh\bPhi$ estimated by Algorithm~\ref{algo:IRRA} in the case of $d=1$ at each time point. We find that the rank of $\wh\bA$ is reduced to $1$ or $2$ over the entire time horizon, which is the same as that of the RRSRA, implying  that the efficient cointegration rank is low in this particular application. In addition, the rank of $\wh\bPhi$ is also $1$ or $2$ over time. 
		
		\begin{figure}[htbp]
			\centering
			\includegraphics[width=0.9\textwidth,height=2.5in]{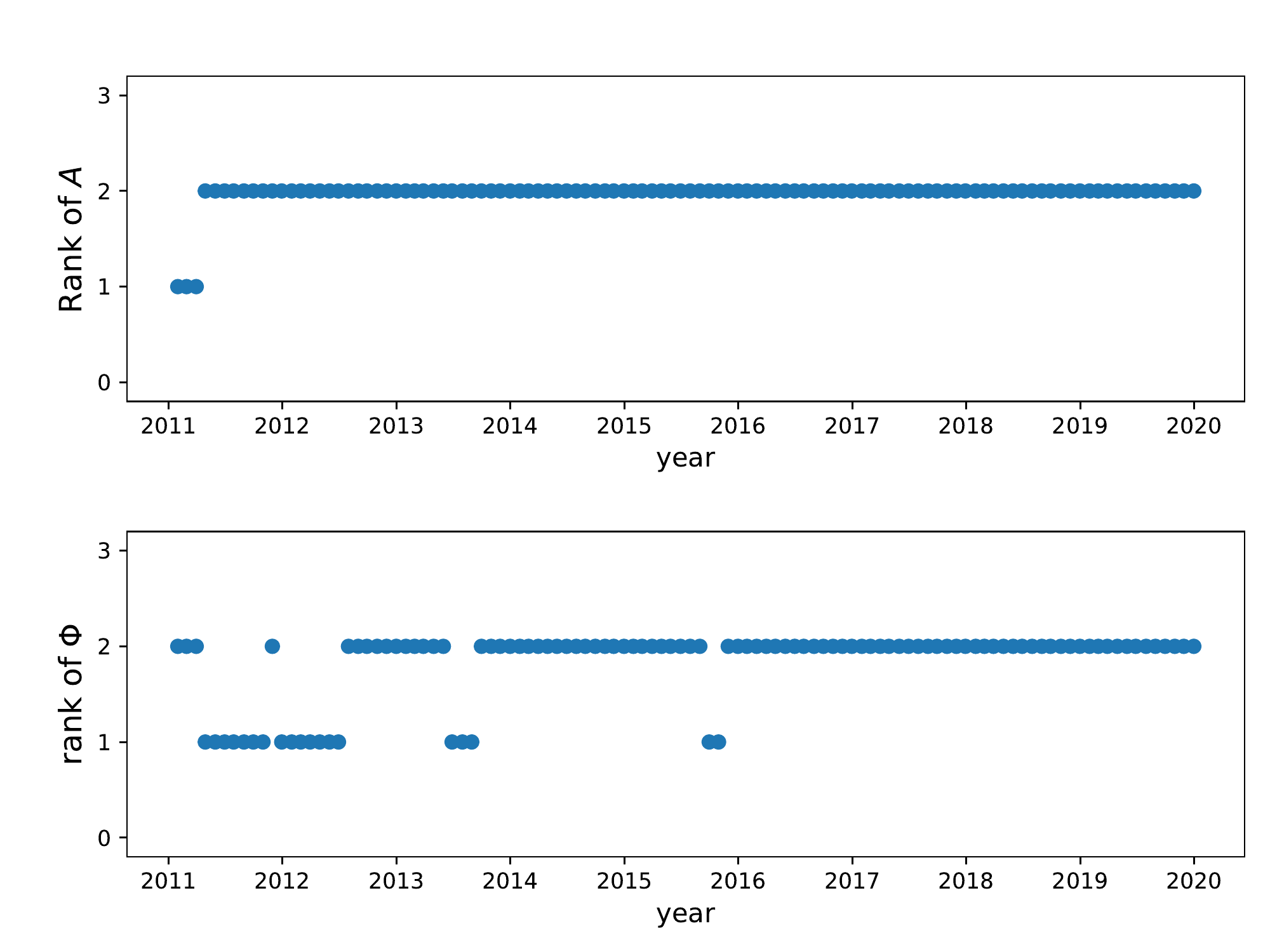}
			\caption{The estimated rank of $\wh\bA$ and the rank of $\wh\bPhi$. The two coefficient matrices are estimated by Algorithm~\ref{algo:IRRA} with $d=1$ at each time point using all the data prior to it.}
			\label{fig:rank_rank}
		\end{figure}
		
		Panel B of Table~\ref{table:empirical_performance} shows  the predictive performance of \eqref{eq:IRRR}, where the coefficients are estimated by both Algorithm~\ref{algo:IRRA} and the ADMM method with $d=1,2,3$, respectively. All six results are positive and  close to each other. They are also close to, but a little worse than, those of RRSRA. One possible reason is that both IRRA and RRSRA are constrained regressions but the latter one produces sparse solutions and reduces the model complexity more substantially compared to the low-rank structures. Similarly to the conclusion 
		of the RRSRA procedure, the $R^2_{\text{OOS}}$ results of \eqref{eq:IRRR} also outperform those of the benchmarks.
		
		Finally, we see that the performance of IRRA is close to that of RRSRA not only with respect to the out-of-sample $R^2$, but also with respect to the in-sample $R^2$; see  Tables~\ref{table:empirical_performance} and~\ref{table:empi_IS_R2}. Panel B of Table~\ref{table:empi_IS_R2} shows the in-sample $R^2$ results for IRRA, with all six estimation settings. Once again, we find that the results are close to those in Panel A, but IRRA fits the data slightly better, which may be a consequence of higher degrees of freedom in IRRA.

		\section{Concluding Remarks}
		\label{sec:conclusion}
		Finding proper cointegration relationships is an important topic in Econometrics and Statistics, yet the interpretation of 
		cointegrating structures might become  
		complicated if the dimension of the system  under study 
		is high. This paper introduced the concept of {\it effective cointegration rank} and considered a new method to identify the important cointegration relationships among a high-dimensional $I(1)$ series from a predictive perspective. In a nutshell, the effective cointegration rank is the number of cointegrating relationships that can produce useful  predictors in a given forecasting application.  The proposed method consists of a two-step estimation procedure, where we first use  the Principal Component Analysis to estimate the common stochastic trends of the $I(1)$ series and to identify all possible cointegrating vectors. We then employ all stationary series obtained via the cointegrating vectors and some lagged values of dependent variables to form predictors of the second-step estimation. A reduced-rank regression technique is applied to the co-integrated predictors and the dimension of relevant cointegrating vectors is defined as the effective cointegration rank. We also applied the LASSO penalty or reduced rank constraints to the coefficients of the lagged variables in the second step, and an iterative procedure is proposed to estimate the unknown coefficients. 
		
		Our proposed method has a wide range of applications in many scientific areas, including Economics, Finance, and Environmental studies, because it is common in these 
		areas to use nonstationary variables or factors to predict stationary series in empirical applications. We applied 
		the proposed method to the problem of predicting cross-sectional stock returns, and  illustrated 
		clearly the predictive advantages of the proposed 
		procedure over 
		some commonly used benchmarks available in the literature. 

		\printbibliography
		
		\newpage

		\begin{appendices}
			\newrefsection

\setcounter{lemma}{0}
\renewcommand{\thelemma}{A\arabic{lemma}}

In this supplement, we provide proofs 
of all theorems stated in Section~\ref{sec:theory} 
of the main article. We use $c$ or $C$ to denote a generic positive constant, 
its value may change for different places.

\section{Proof of Theorems}

\subsection{Proofs of Theorem~\ref{loading_estimator} and \ref{thm:r_consistency}}

To begin, we first introduce a useful lemma, which is commonly seen in matrix perturbation theory. See 
\textcite{golub2013matrix} (Theorem 8.1.10),  \textcite{johnstone2009consistency}, and \textcite{lam2011estimation}, among others.

\begin{lemma}
	\label{lemma:orthogonal_basis}
	Suppose $\bA$ and $\bA+\bE$ are $n\times n$ symmetric matrices, and $\bQ=[\bQ_1\  \bQ_2],$ 
	with $\bQ_1\in\mathbb{R}^{n\times r}$ and $\bQ_2\in\mathbb{R}^{n\times (n-r)}$, is an $n\times n$ orthogonal matrix such that $\range(\bQ_1)$ is an invariant subspace for $\bA$. Partition the matrices $\bQ'\bA\bQ$ and $\bQ'\bE\bQ$  as follows:
\[	
	\bQ' \bA\bQ = \begin{bmatrix}
	\bQ_1' \bA \bQ_1 \ &\mat{0}\\
	\mat{0} \ &\bQ_2' \bA \bQ_2
	\end{bmatrix} \quad \text{and} \quad \bQ'\bE\bQ = \begin{bmatrix}
	\bQ_1' \bE \bQ_1 \ & \bQ_1' \bE \bQ_2\\
	\bQ_2' \bE \bQ_1 \ &\bQ_2' \bE \bQ_2
	\end{bmatrix}.
\]

	If $\operatorname{sep}(\bQ_1' \bA \bQ_1,\bQ_2' \bA \bQ_2) = \min_{\mu\in\lambda(\bQ_1' \bA \bQ_1), \nu\in\lambda(\bQ_2' \bA \bQ_2)} \vert \mu - \nu \vert > 0$,  where $\lambda(\mat{M})$ denotes the set of eigenvalues of matrix $\mat{M}$, and 
	$$
	\FNorm{\bE} \leq \dfrac{1}{5}\operatorname{sep}(\bQ_1' \bA \bQ_1,\bQ_2' \bA \bQ_2),
	$$
	then there exists a matrix $\bP\in \mathbb{R}^{(n-r)\times r}$ with
	$$
	\FNorm{\bP} \leq \dfrac{4}{\operatorname{sep}(\bQ_1' \bA \bQ_1,\bQ_2' \bA \bQ_2)} \FNorm{\bQ_1' \bE \bQ_2}
	$$
	such that the columns of $\wh{\bQ}_1=(\bQ_1+\bQ_2 \bP)(\mat{I}_r+\bP' \bP)^{-1/2}$ define an orthonormal basis for a subspace that is invariant for $\bA+\bE$.
	
\end{lemma}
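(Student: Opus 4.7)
The plan is to derive and solve a Sylvester--Riccati matrix equation whose solution $\bP$ yields the desired invariant subspace. Specifically, if the columns of $\bQ_1 + \bQ_2 \bP$ span an invariant subspace of $\bA + \bE$, there must exist an $r \times r$ matrix $\bM$ with $(\bA + \bE)(\bQ_1 + \bQ_2 \bP) = (\bQ_1 + \bQ_2 \bP)\bM$. Left-multiplying by $\bQ_1'$ and using $\bQ_1'\bQ_1 = \bI_r$, $\bQ_1'\bQ_2 = \mathbf{0}$, together with the fact that symmetry of $\bA$ and invariance of $\range(\bQ_1)$ force $\bQ_1'\bA\bQ_2 = \mathbf{0}$, yields
\[
\bM = \bQ_1'\bA\bQ_1 + \bQ_1'\bE\bQ_1 + \bQ_1'\bE\bQ_2 \bP.
\]
Left-multiplying instead by $\bQ_2'$ and substituting for $\bM$ then produces the nonlinear Riccati-type equation
\[
\bQ_2'\bA\bQ_2 \bP - \bP \bQ_1'\bA\bQ_1 = -\bQ_2'\bE\bQ_1 - \bQ_2'\bE\bQ_2 \bP + \bP \bQ_1'\bE\bQ_1 + \bP \bQ_1'\bE\bQ_2 \bP.
\]

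Next I would analyze the Sylvester operator $T(\bP) := \bQ_2'\bA\bQ_2 \bP - \bP \bQ_1'\bA\bQ_1$. Under the separation hypothesis, the spectra of $\bQ_2'\bA\bQ_2$ and $\bQ_1'\bA\bQ_1$ are disjoint, so $T$ is invertible on $\mathbb{R}^{(n-r)\times r}$ with $\FNorm{T^{-1}(\bR)} \leq \FNorm{\bR}/\operatorname{sep}(\bQ_1'\bA\bQ_1, \bQ_2'\bA\bQ_2)$ for every right-hand side $\bR$. Writing the equation above as $\bP = T^{-1}\Phi(\bP)$ with
\[
\Phi(\bP) := -\bQ_2'\bE\bQ_1 - \bQ_2'\bE\bQ_2 \bP + \bP \bQ_1'\bE\bQ_1 + \bP \bQ_1'\bE\bQ_2 \bP,
\]
I would run a Banach contraction argument on the closed ball $\mathcal{B} = \{\bP : \FNorm{\bP} \leq 4\FNorm{\bQ_1'\bE\bQ_2}/\operatorname{sep}(\bQ_1'\bA\bQ_1, \bQ_2'\bA\bQ_2)\}$. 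Using submultiplicativity of the Frobenius norm together with $\FNorm{\bQ_j'\bE\bQ_k} \leq \FNorm{\bE} \leq \operatorname{sep}(\bQ_1'\bA\bQ_1, \bQ_2'\bA\bQ_2)/5$, one verifies that $T^{-1}\Phi$ maps $\mathcal{B}$ into itself and is a strict contraction, producing a unique fixed point $\bP^{*}$ that automatically satisfies the stated Frobenius norm bound.

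Finally, to confirm orthonormality of the columns of $\wh{\bQ}_1 = (\bQ_1 + \bQ_2 \bP^{*})(\bI_r + (\bP^{*})'\bP^{*})^{-1/2}$, I would use the identity $(\bQ_1 + \bQ_2 \bP)'(\bQ_1 + \bQ_2 \bP) = \bI_r + \bP'\bP$, which follows directly from $\bQ_1'\bQ_1 = \bI_r$, $\bQ_2'\bQ_2 = \bI_{n-r}$, and $\bQ_1'\bQ_2 = \mathbf{0}$. The normalization factor $(\bI_r + (\bP^{*})'\bP^{*})^{-1/2}$ therefore produces orthonormal columns spanning the same column space as $\bQ_1 + \bQ_2 \bP^{*}$, and this column space is invariant under $\bA + \bE$ by the derivation of the Riccati equation.

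The main obstacle is the quantitative bookkeeping in the contraction step: the constant $5$ in the hypothesis $\FNorm{\bE} \leq \operatorname{sep}/5$ is calibrated precisely so that the linear-in-$\bP$ and quadratic-in-$\bP$ portions of $\Phi$, after division by $\operatorname{sep}$, still leave the image of $\mathcal{B}$ inside $\mathcal{B}$ while keeping the Lipschitz constant strictly less than one. Tracking these two inequalities simultaneously, using $\FNorm{\bQ_1'\bE\bQ_2} \leq \FNorm{\bE}$ and the chosen radius $4\FNorm{\bQ_1'\bE\bQ_2}/\operatorname{sep}$, is the delicate balance the proof must strike; once it is verified, existence of $\bP^{*}$ and the stated norm bound follow immediately, and all remaining steps are routine algebra.
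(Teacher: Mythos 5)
Your proposal is correct, but note that the paper does not actually prove this lemma: it is quoted verbatim from the matrix perturbation literature (Golub and Van Loan, Theorem 8.1.10, originally due to Stewart) and used as a black box. What you have written is precisely the standard proof of that cited theorem: derive the algebraic Riccati equation for $\bP$ from the invariance requirement (using that symmetry of $\bA$ plus invariance of $\range(\bQ_1)$ kills the off-diagonal block $\bQ_1'\bA\bQ_2$), invert the Sylvester operator with norm bound $1/\operatorname{sep}$, and run a fixed-point iteration on the ball of radius $4\FNorm{\bQ_1'\bE\bQ_2}/\operatorname{sep}$. The one step you defer --- checking that the constants $1/5$ and $4$ make the map a self-contraction of that ball --- does go through: writing $s=\operatorname{sep}$, $\epsilon=\FNorm{\bE}\le s/5$, $g=\FNorm{\bQ_2'\bE\bQ_1}\le\epsilon$ and $\rho=4g/s$, the self-mapping condition reduces to $2\epsilon\rho+\epsilon\rho^2\le 3g$, and indeed $2\epsilon\rho\le 8g/5$ and $\epsilon\rho^2\le 16g/25$, summing to $56g/25<3g$; the Lipschitz constant is at most $(2\epsilon+2\rho\epsilon)/s\le 2/5+8/25=18/25<1$. (Since $\bE$ is symmetric, $\FNorm{\bQ_1'\bE\bQ_2}=\FNorm{\bQ_2'\bE\bQ_1}$, so the block appearing in your bound matches the one in the statement.) So your argument is complete in outline and consistent with the source the paper relies on; the only thing it adds over the paper is an actual proof where the paper supplies a citation.
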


\begin{proof}[Proof of Theorem \ref{loading_estimator}]
	From \eqref{factor_structure}, we have the identity
	\[
	\wh{\mat{\Sigma}}_{\vect{x}} = \bB\wh{\mat{\Sigma}}_{\vect{f}}\bB'
	 +\bB\wh{\mat{\Sigma}}_{\vect{f}\bm{\varepsilon}}+\wh{\mat{\Sigma}}_{\bm{\varepsilon} \vect{f}}\bB' + \wh{\mat{\Sigma}}_{\bm{\varepsilon}},
	\]
	where $\wh{\mat{\Sigma}}_{\vect{x}} =T^{-1} \sum_{t=1}^{T} \vect{x}_t \vect{x}_t'$, $\wh{\mat{\Sigma}}_{\vect{f}} =T^{-1}  \sum_{t=1}^{T} \vect{f}_t \vect{f}_t'$, $\wh{\mat{\Sigma}}_{\vect{f}\bm{\varepsilon}} =T^{-1}  \sum_{t=1}^{T} \vect{f}_t \bm{\varepsilon}_t'$, $\wh{\mat{\Sigma}}_{\bm{\varepsilon} \vect{f}} =T^{-1}  \sum_{t=1}^{T}\bm{\varepsilon}_t \vect{f}_t'$, $\wh{\mat{\Sigma}}_{\bm{\varepsilon}} =T^{-1}  \sum_{t=1}^{T} \bm{\varepsilon}_t \bm{\varepsilon}_t'$. The sample means $\bar{\vect{x}}$, $\bar{\vect{f}}$, and $\bar{\bm{\varepsilon}}$ are set to ${\bf 0}$, because we assume the data $\bX_t$ are 
	properly centered in advance.
	
	By Assumption \ref{assumption:dependence}, we have
	\begin{align*}
	\norm{\wh{\mat{\Sigma}}_{\vect{x}} - \bB\wh{\mat{\Sigma}}_{\vect{f}}\bB'}_2 \leq& 2\norm{\bB}_2\norm{\wh{\mat{\Sigma}}_{\vect{f}\bm{\varepsilon}}}_2 +\norm{\wh{\mat{\Sigma}}_{\bm{\varepsilon}}-{\mat{\Sigma}}_{\bm{\varepsilon}}}_2+\norm{{\mat{\Sigma}}_{\bm{\varepsilon}}}_2\\
	=& O_p(N)+O_p(NT^{-1/2})+O_p(1)\\
	=& O_p(N),
	\end{align*}
	
	Because $[\bB\ \bB_c]$ is an orthogonal matrix, we have
	\begin{gather*}
	\begin{bmatrix}
	\bB'\\
	\bB_c'
	\end{bmatrix} \left(\bB \wh{\mat{\Sigma}}_{\vect{f}}\bB'\right) \begin{bmatrix}
	\bB \  \bB_c
	\end{bmatrix} = \begin{bmatrix}
	\wh{\mat{\Sigma}}_{\vect{f}} & \mat{0}\\
	\mat{0} & \mat{0}
	\end{bmatrix},\\
	\begin{bmatrix}
	\bB'\\
	\bB_c'
	\end{bmatrix} \left(\wh{\mat{\Sigma}}_{\vect{x}} - \bB \wh{\mat{\Sigma}}_{\vect{f}}\bB'\right) \begin{bmatrix}
	\bB \  \bB_c
	\end{bmatrix} = \begin{bmatrix}
	\bB'\wh{\mat{\Sigma}}_{\vect{x}}\bB-\wh{\mat{\Sigma}}_{\vect{f}} & \bB' \wh{\mat{\Sigma}}_{\vect{x}}\bB_c\\
	\bB_c'\wh{\mat{\Sigma}}_{\vect{x}}\bB & \bB_c'\wh{\mat{\Sigma}}_{\vect{x}}\bB_c
	\end{bmatrix}.
	\end{gather*}
	
 
 By Theorem 1 in \textcite{pena2006nonstationary} and Assumptions \ref{assumption:convergence} and \ref{assumption:dependence}, we can show that 
  $\operatorname{sep}(\wh{\mat{\Sigma}}_{\vect{f}})=\lambda_r(\wh{\mat{\Sigma}}_{\vect{f}})>CNT$ almost surely for some constant $C>0$. By Lemma \ref{lemma:orthogonal_basis}, there is an $(N-r)\times r$ matrix $\bP$ with an upper bounded $\norm{\bP}_2$, such that $\wh{\bB}=(\bB+\bB_c \bP)(\mat{I}_r+\bP'\bP)^{-1/2}$ defines an orthonormal basis for a subspace that is invariant for $\wh{\mat{\Sigma}}_{\vect{x}}$. Therefore,
	\begin{align*}
	\norm{\wh{\bB} - \bB}_2 \leq& \norm{\bB\left(\mat{I}_r - (\mat{I}_r+\bP'\bP)^{-1/2}\right)}_2 + \norm{\bB_c \bP (\mat{I}_r+\bP'\bP)^{-1/2}}_2\\
	\leq& 2\norm{\bP}_2\\
	\leq& \dfrac{8}{\lambda_r(\wh{\mat{\Sigma}}_{\vect{f}})} \norm{\bB' (\wh{\mat{\Sigma}}_{\vect{x}}-\bB\wh{\mat{\Sigma}}_{\vect{f}}\bB')}_2\\
	=& O_p(T^{-1}).
	\end{align*}

A similar result also holds for $\norm{\wh{\bB}_c - \bB_c}_2$ as we can 
	exchange the position of $\bB$ and $\bB_c$ in the orthogonal matrix $[\bB,\bB_c]$, and apply the same argument as above again. 
	
		Consequently, we have
 \begin{align*}
     \norm{\bB\bB'-\wh{\bB} \wh{\bB}'}_2 =& \norm{(\bB - \wh{\bB}) \bB' + \wh{\bB} (\bB - \wh{\bB})'}_2\\
        \leq& \norm{\bB-\wh{\bB}}_2 (\norm{\bB}_2 + \norm{\wh{\bB}}_2)\\
        =& O_p(T^{-1}).
 \end{align*}
 
 {Furthermore, from a least-squares perspective in  \eqref{solve_factor}, the factors are estimated as $\wh{\bff}_t=\wh{\bB}'\bx_t=\wh{\bB}'(\bB\bff_t+\bve)$. Then, 
 \begin{align*}
 \norm{\wh{\bB} \wh{\bff}_t - \bB \bff_t}_2 =& \norm{\wh{\bB}\wh{\bB}'\bB \bff_t + \wh{\bB}\wh{\bB}'\bm{\varepsilon}_t - \bB \bff_t}_2\\
 \leq& \norm{\wh{\bB}\wh{\bB}'(\bB-\wh{\bB}) \bff_t}_2 + \norm{(\wh{\bB}-\bB)\bff_t}_2 + \norm{\wh{\bB}\wh{\bB}' \bm{\varepsilon}_t}_2\\
 \leq& 2 \norm{(\wh{\bB}-\bB)\bff_t}_2 + \norm{\wh{\bB}' \bm{\varepsilon}_t}_2\\
 =& O_p(\sqrt{N/T}) + O_p(1),
 \end{align*}
 where the last line follows from the fact that $\norm{\bff_t}_2=\norm{\sum_{s=1}^{t}\bu_t}_2=O_p(\sqrt{NT})$ and $\wh{\bB}'\bm{\varepsilon}$ is an $r$-dimensional random vector with finite variance. Therefore, 
 \[
 N^{-1/2} \norm{\wh{\bB} \wh{\bff}_t - \bB \bff_t}_2 = O_p(N^{-1/2}+T^{-1/2}).
 \]
 }
 This completes the proof.
\end{proof}


\begin{proof}[\bf Proof of Theorem \ref{thm:r_consistency}] For any column vector $\wh\bb$ in $\wh\bB_c$, denote its corresponding true one by $\bb$. Then,  
\[\wh{\bb}'\bX_t=\wh\bb'\bB\bff_t+\wh\bb'\bve_t=(\wh\bb-\bb)'\bB\bff_t+(\wh\bb-\bb)'\bve_t+\bb'\bve_t.\]
By a similar argument as the proof of theorem 4 in \textcite{gao2021modeling_IJF}, the autocorrelations of $\wh\bb\bX_t$ will only depend on those of the third terms if the magnitudes of the first two terms are asymptotically negligible. Therefore, we only need to show 
\[
   \max_{1\leq t\leq T} |(\wh\bb-\bb)'\bB\bff_t|=o_P(1),
\]
    as the second term is obviously dominated by the third one according to Theorem 1. Note that $\bff_t$ has an additional strength of order $\sqrt{N}$, and Model (\ref{f_process}) implies that
    \[\bff_t=\sum_{i=1}^t\bu_i,\]
    is a partial sum of weakly dependent variables.
    Under Assumptions \ref{assumption:mixing}--\ref{assumption:dependence}, we see that the conditions for Theorem 1 of \textcite{merlevede2011bernstein} hold. By aforementioned Theorem 1 
    or the proof of Theorem 2 in \textcite{gao2021modeling_IJF}, we can show that
    \begin{align*}
        \max_{1\leq t\leq T} |(\wh\bb-\bb)'\bB\bff_t| \leq& \|\wh\bb-\bb\|_2\max_{1\leq t\leq T}\|\bB\bff_t\|_2\\
        \leq& CT^{-1}\sqrt{N} T^{1/2}\log(T)\\
        \leq& CN^{1/2}\log(T)T^{-1/2}.
    \end{align*}
    Therefore, it suffices to require $N^{1/2}\log(T)T^{-1/2}=o(1)$. This completes the proof. 
\end{proof}
    
\section{Proof of Theorem \ref{RRR_LASSO}}
\label{ssec:prf_RRR_LASSO}

Recall the SVD of any $m\times n $ matrix $\bTheta$ in \eqref{SVD}, the subspaces in \eqref{subspaces} and the decomposition of any $m\times n$ matrix $\bM$ in \eqref{decom12}. When $\bM=\bTheta$, we obviously have $\norm{\bM}_* = \norm{\bM_1}_* + \norm{\bM_2}_*$. In addition, for the decomposition of a general matrix $\bM$ that may be different from $\bTheta$, we introduce the following lemma.

\begin{lemma}
	\label{lemma:decomposition_of_nuclear_norm}
	Given the SVD of $\mat{\Theta}$ in \eqref{SVD}, for any matrix $\mat{M}\in\mathbb{R}^{m\times n}$, the decomposition
 \[
\norm{\bTheta_1 + \bM_2}_* = \norm{\bTheta_1}_* + \norm{\bM_2}_*
 \]
	holds.
\end{lemma}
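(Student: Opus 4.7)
My plan is to prove the lemma by explicitly constructing a joint singular value decomposition for $\bTheta_1 + \bM_2$ that exhibits the nuclear norm as a sum of two blocks of singular values.

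First, I would unpack $\bTheta_1$. From the partitioned SVD in \eqref{SVD}, a direct block computation gives $\bTheta_2 = \bU_{k,c}\bU_{k,c}'\bTheta\bV_{k,c}\bV_{k,c}' = \bU_{k,c}\bD_{k,c}\bV_{k,c}'$, and hence $\bTheta_1 = \bTheta - \bTheta_2 = \bU_k \bD_k \bV_k'$. In particular, $\range(\bTheta_1) \subseteq \range(\bU_k)$ and $\range(\bTheta_1') \subseteq \range(\bV_k)$, and moreover $\bU_k\bD_k\bV_k'$ is already in SVD form (since $\bD_k$ is diagonal with nonnegative entries), so $\|\bTheta_1\|_* = \tr(\bD_k)$.

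Next I would build the SVD of $\bM_2$. By construction $\bM_2 = \bU_{k,c}(\bU_{k,c}'\bM\bV_{k,c})\bV_{k,c}'$, so let the SVD of the inner $(m-k)\times(n-k)$ matrix be $\bU_{k,c}'\bM\bV_{k,c} = \widetilde\bU \widetilde\bD \widetilde\bV'$. Then $\bM_2 = (\bU_{k,c}\widetilde\bU)\widetilde\bD(\bV_{k,c}\widetilde\bV)'$, and since $\bU_{k,c}\widetilde\bU$ and $\bV_{k,c}\widetilde\bV$ inherit orthonormal columns from the orthonormality of $\bU_{k,c},\bV_{k,c},\widetilde\bU,\widetilde\bV$, this is a valid SVD. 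In particular, $\range(\bM_2) \subseteq \range(\bU_{k,c})$ and $\range(\bM_2')\subseteq \range(\bV_{k,c})$, and $\|\bM_2\|_* = \tr(\widetilde\bD)$.

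Now I would glue the two SVDs together. Since $\range(\bU_k)\perp \range(\bU_{k,c})$ and $\range(\bV_k)\perp\range(\bV_{k,c})$, the combined matrices $[\bU_k,\ \bU_{k,c}\widetilde\bU]$ and $[\bV_k,\ \bV_{k,c}\widetilde\bV]$ still have orthonormal columns. Writing
\[
\bTheta_1 + \bM_2 = \begin{bmatrix} \bU_k & \bU_{k,c}\widetilde\bU \end{bmatrix} \begin{bmatrix} \bD_k & \mathbf{0} \\ \mathbf{0} & \widetilde\bD \end{bmatrix} \begin{bmatrix} \bV_k & \bV_{k,c}\widetilde\bV \end{bmatrix}',
\]
displays an SVD of $\bTheta_1 + \bM_2$. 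The nuclear norm is the sum of the diagonal entries of the middle block-diagonal matrix, so
\[
\|\bTheta_1 + \bM_2\|_* = \tr(\bD_k) + \tr(\widetilde\bD) = \|\bTheta_1\|_* + \|\bM_2\|_*,
\]
which is the desired identity. There is no real obstacle here; the only subtlety is being careful that the block-diagonal middle matrix is indeed a legitimate diagonal of singular values (guaranteed by nonnegativity of the entries of $\bD_k$ and $\widetilde\bD$) and that the outer column-orthonormal factors are built from mutually orthogonal subspaces, which is exactly what the decomposition $\bU=[\bU_k,\bU_{k,c}]$, $\bV=[\bV_k,\bV_{k,c}]$ provides.
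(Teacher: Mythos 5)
Your proof is correct and takes essentially the same route as the paper: both arguments reduce $\bTheta_1+\bM_2$ to the block-diagonal matrix $\bigl[\begin{smallmatrix}\bD_k & \mathbf{0}\\ \mathbf{0} & \bU_{k,c}'\bM\bV_{k,c}\end{smallmatrix}\bigr]$ in the $(\bU,\bV)$ basis and conclude that its nuclear norm splits across the two blocks. The only difference is that the paper invokes unitary invariance and block-additivity of the nuclear norm as known facts, whereas you make them explicit by gluing the two SVDs along orthogonal subspaces.
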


\begin{proof}
	Given the SVD of $\mat{\Theta}$ in \eqref{SVD}, we have
	\begin{align*}
	\norm{\bTheta_1 + \bM_2}_* =& \lrnorm{\begin{bmatrix}
		\mat{D}_k & \mat{0}\\
		\mat{0} & \mat{U}_{k,c}' \mat{M} \mat{V}_{k,c}
		\end{bmatrix}}_*\\
	=& \lrnorm{\begin{bmatrix}
		\mat{D}_k
		\end{bmatrix}}_* + \lrnorm{\begin{bmatrix}
		 \mat{U}_{k,c}' \mat{M} \mat{V}_{k,c}
		\end{bmatrix}}_*\\
	=& \norm{\bTheta_1}_* + \norm{\bM_2}_*.\qedhere
	\end{align*}
\end{proof}



\begin{lemma}
	\label{lemma:sample_RSC}
	Let the conditions of Theorem \ref{RRR_LASSO} hold. As $N,p,T\to\infty$, 
	\[
	\dfrac{1}{2T}\FNorm{\bDelta_{\bA}\wh{\mat{Z}}+\bDelta_{\mat{\Phi}} \bP}^2 \geq  C_1 \kappa_1 \FNorm{\bDelta}^2-C_2\tau_T\Psi^2(\bDelta),
	\]
	with probability tending to $1$, where $\Psi$ is defined in \eqref{psi:n} and $\bDelta=[\bDelta_{\bA},\bDelta_{\bPhi}]$.
\end{lemma}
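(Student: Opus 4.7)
\emph{Proof plan.} The statement is a sample-level RSC condition for the \emph{estimated} predictor matrix $\wh{\bZ}$, while Theorem~\ref{RRR_LASSO} assumes RSC for the operator $\mathscr{X}$ built from the \emph{true} $\bZ$. The natural strategy is a perturbation transfer: write $\wh{\bZ}=\bZ+(\wh{\bZ}-\bZ)$, split the quadratic form into a ``true'' piece on which RSC is already assumed and an ``error'' piece driven by $\wh{\bZ}-\bZ$, and absorb the latter into the curvature term with a constant adjustment. The main obstacle is controlling the error $\wh{\bZ}-\bZ$ in operator norm despite $\bz_t$ being built from the unit-root process $\bx_t$.

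First I would apply the elementary inequality $\|a+b\|_F^2\ge \tfrac{1}{2}\|a\|_F^2-\|b\|_F^2$ with $a=\bDelta_{\bA}\bZ+\bDelta_{\bPhi}\bP$ and $b=\bDelta_{\bA}(\wh{\bZ}-\bZ)$, so that
\[
\tfrac{1}{2T}\FNorm{\bDelta_{\bA}\wh{\bZ}+\bDelta_{\bPhi}\bP}^2
\;\ge\;\tfrac{1}{4T}\FNorm{\bDelta_{\bA}\bZ+\bDelta_{\bPhi}\bP}^2
-\tfrac{1}{2T}\FNorm{\bDelta_{\bA}(\wh{\bZ}-\bZ)}^2.
\]
The first term on the right is bounded below by $\tfrac{\kappa_1}{4}\FNorm{\bDelta}^2-\tfrac{1}{2}\tau_T\Psi^2(\bDelta)$ by the hypothesized RSC on $\mathscr{X}$.

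Next I would control the perturbation term using Theorem~\ref{loading_estimator}. Writing $\wh{\bZ}-\bZ=(\wh{\bB}_c-\bB_c)'L(\bX)$ and choosing, as in the proof of Theorem~\ref{loading_estimator}, a version of $\bB_c$ for which $\|\wh{\bB}_c-\bB_c\|_2=O_p(T^{-1})$, the submultiplicativity of the operator norm gives $\|\wh{\bZ}-\bZ\|_2\le \|\wh{\bB}_c-\bB_c\|_2\,\|L(\bX)\|_2$. The factor strength $\sqrt{N}$ in $\bff_t$ together with $\bff_t=\sum_{s\le t}\bu_s$ yields $\|\bff_t\|_2^2=O_p(Nt)$, so $\|L(\bX)\|_2^2\le \sum_{t=0}^{T-1}\|\bx_t\|_2^2=O_p(NT^2)$ (the idiosyncratic $\bve_t$ contribution being of lower order). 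Combining, $\|\wh{\bZ}-\bZ\|_2=O_p(\sqrt{N})$, and therefore
\[
\tfrac{1}{2T}\FNorm{\bDelta_{\bA}(\wh{\bZ}-\bZ)}^2
\;\le\;\tfrac{1}{2T}\FNorm{\bDelta_{\bA}}^2\,\|\wh{\bZ}-\bZ\|_2^2
\;=\;O_p\!\bigl(N/T\bigr)\FNorm{\bDelta_{\bA}}^2.
\]

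Finally I would combine the two displays. Under the working regime $N/T\to 0$ (already needed for Theorem~\ref{thm:r_consistency} and invoked in Remark~\ref{remark2}), the perturbation term is $o_p(1)\FNorm{\bDelta}^2$, hence for any $C_1<1/4$ it is absorbed into the curvature term whenever $T$ is sufficiently large, giving
\[
\tfrac{1}{2T}\FNorm{\bDelta_{\bA}\wh{\bZ}+\bDelta_{\bPhi}\bP}^2
\;\ge\;\Bigl(\tfrac{\kappa_1}{4}-o_p(1)\Bigr)\FNorm{\bDelta}^2-\tfrac{1}{2}\tau_T\Psi^2(\bDelta)
\;\ge\; C_1\kappa_1\FNorm{\bDelta}^2-C_2\tau_T\Psi^2(\bDelta),
\]
with probability tending to one, with $C_1=1/8$ and $C_2=1/2$, say. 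The delicate step is justifying the $O_p(T^{-1})$ rate for $\|\wh{\bB}_c-\bB_c\|_2$ (rather than only for $\|\wh{\bB}_c\wh{\bB}_c'-\bB_c\bB_c'\|_2$), which is handled by rotating $\bB_c$ inside its equivalence class exactly as in the proof of Theorem~\ref{loading_estimator}; this, together with the unit-root scaling $\|L(\bX)\|_2=O_p(\sqrt{N}T)$, is the step where one has to be most careful.
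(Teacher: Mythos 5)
Your proposal follows essentially the same route as the paper's proof: the same splitting inequality reducing the claim to the RSC condition on the true $\bZ$ plus a perturbation term $\frac{1}{2T}\FNorm{\bDelta_{\bA}(\wh{\bZ}-\bZ)}^2$, the same bound $\|\wh{\bZ}-\bZ\|_2\le\|\wh{\bB}_c-\bB_c\|_2\,\|\bX\|_2$ with $\FNorm{\bX}^2=O_p(NT^2)$ from the unit-root factor scaling and $\|\wh{\bB}_c-\bB_c\|_2=O_p(T^{-1})$ from (the proof of) Theorem~\ref{loading_estimator}, yielding an $O_p(N/T)\FNorm{\bDelta_{\bA}}^2=o_p(\FNorm{\bDelta_{\bA}}^2)$ remainder absorbed into the curvature term under $N/T\to 0$. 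The step you flag as delicate --- using the rate for $\|\wh{\bB}_c-\bB_c\|_2$ itself rather than only for the projector difference --- is exactly how the paper proceeds, so the proposal is correct.
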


\begin{proof}
	Note that
	\begin{align*}
	\FNorm{\bDelta_{\bA}{\mat{Z}}+\bDelta_{\mat{\Phi}} \bP}^2
	=& \FNorm{\bDelta_{\bA}\wh\bZ+\bDelta_{\mat{\Phi}} \bP+\bDelta_{\bA}(\bZ-\wh\bZ) }^2\\
	\leq& 2\left( \FNorm{\bDelta_{\bA}\wh\bZ+\bDelta_{\mat{\Phi}} \bP}^2 +\FNorm{\bDelta_{\bA}(\mat{Z}-\wh{\mat{Z}})}^2 \right),
	\end{align*}
	we have
	\begin{equation}\label{delta:az}
	    \frac{1}{2T}	\FNorm{\bDelta_{\bA}\wh{\mat{Z}}+\bDelta_{\mat{\Phi}} \bP}^2 \geq \dfrac{1}{4T}\FNorm{\bDelta_{\bA}\mat{Z}+\bDelta_{\mat{\Phi}}\bP}^2 -\frac{1}{2T} \FNorm{\bDelta_{\bA}(\mat{Z}-\wh{\mat{Z}})}^2.
	\end{equation}
We consider the last term on the right-hand side of the above inequality,
	\begin{align*}
	\frac{1}{2T}\FNorm{\bDelta_{\bA}(\mat{Z}-\wh{\mat{Z}})}^2	\leq \frac{1}{2T} \FNorm{\bDelta_{\bA}}^2 \norm{\mat{Z}-\wh{\mat{Z}}}_2^2
	=\frac{1}{2T}\FNorm{\bDelta_{\bA}}^2 \norm{(\bB_c - \wh{\bB}_c)' \mat{X}}_2^2,
	\end{align*}
		where we used the inequality $\FNorm{\bM\bN}\leq \norm{\bM}_2 \FNorm{\bN}$.
	Notice that
	\[
	\FNorm{\mat{X}}^2 = \sum_{t=1}^T \tr(\vect{x}_t \vect{x}_t') = \sum_{t=1}^T (\vect{f}_t' \vect{f}_t + 2\vect{f}_t' \bB' \bm{\varepsilon}_t + \bm{\varepsilon}_t' \bm{\varepsilon}_t).
	\]
	
    By Assumptions \ref{assumption:sub_exponential} and \ref{assumption:dependence}, we have $\sum_{t=1}^T (\vect{f}_t' \vect{f}_t) = O_p(NT^2)$, $\sum_{t=1}^T (2\vect{f}_t' \bB' \bm{\varepsilon}_t) = O_p(NT)$, and $\sum_{t=1}^T (\bm{\varepsilon}_t' \bm{\varepsilon}_t)=O_p(N\sqrt{T})$. Then, by the results in Theorem \ref{loading_estimator},
\[
\frac{1}{2T} \FNorm{\bDelta_{\bA}(\mat{Z}-\wh{\mat{Z}})}^2	\leq\FNorm{\bDelta_{\bA}}^2\|\bB_c-\wh\bB_c\|_2^2\frac{1}{2T}\FNorm{\bX}^2 = \FNorm{\bDelta_{\bA}}^2 O_p\left(\dfrac{N}{T}\right) = o_p(\FNorm{\bDelta_{\bA}}^2),
\]
	under the assumption that $N/T\rightarrow 0$. It follows from (\ref{delta:az}) and the above rate that
	\[  \frac{1}{2T}	\FNorm{\bDelta_{\bA}\wh{\mat{Z}}+\bDelta_{\mat{\Phi}} \bP}^2 \geq \dfrac{1}{4T}\FNorm{\bDelta_{\bA}\mat{Z}+\bDelta_{\mat{\Phi}}\bP}^2 -o_p(\FNorm{\bDelta_{\bA}}^2).\]
	By the RSC condition specified in Theorem \ref{RRR_LASSO}, we have
	\begin{align*}
	    \frac{1}{2T}	\FNorm{\bDelta_{\bA}\wh{\mat{Z}}+\bDelta_{\mat{\Phi}} \bP}^2 &\geq \dfrac{1}{4T}\FNorm{\bDelta_{\bA}\mat{Z}+\bDelta_{\mat{\Phi}}\bP}^2 -o_p(\FNorm{\bDelta_{\bA}}^2)\\
	    &\geq C_1 \kappa_1 \FNorm{\bDelta}^2-C_2\tau_T\Psi^2(\bDelta)-o_p(\FNorm{\bDelta_{\bA}}^2)\\
	    &\geq C_1 \kappa_1 \FNorm{\bDelta}^2-C_2\tau_T\Psi^2(\bDelta),
	\end{align*}
where we assume $\kappa_1 >0$ in the last step.
This completes the proof.
\end{proof}


\begin{proof}[Proof of Theorem \ref{RRR_LASSO}]
Let 
\[
\mathit{Loss}_1(\bA,\mat{\Phi}) = \dfrac{1}{2T}\FNorm{\bY-\bA \wh\bZ - \bPhi\bP }^2+\lambda_{\bA}\norm{\bA}_*+\lambda_{\bPhi}\norm{\vectorize(\bPhi)}_1
\]
be the loss function defined in (\ref{solve_A_Phi}). Because the solutions $\wh\bA$ and $\wh\bPhi$ are obtained by solving the minimization problem 
\[
\wh\bA, \wh\bPhi = \arg\min_{\bA,\bPhi} \ \mathit{Loss}_1(\bA,\mat{\Phi}),
\]
implying that
\[
    \mathit{Loss}_1(\wh{\bA},\wh{\bPhi}) \leq \mathit{Loss}_1(\bA,\mat{\Phi}),
\]
where $\bA$ and $\bPhi$ are the corresponding true ones. Recall that $\bDelta_{\bA} = \wh{\bA}-\bA$ and $\bDelta_{\mat{\Phi}} = \wh{\mat{\Phi}}-\mat{\Phi}$, it follows from the above inequality that
\begin{equation}
\label{eq:basic}
\begin{split}
\dfrac{1}{2T} \FNorm{\bDelta_{\bA}\wh{\bZ}+\bDelta_{\bPhi} \bP + \bA (\wh\bZ-\bZ)}^2 \leq& \dfrac{1}{T}\inprod{\bE, \bDelta_{\bA} \wh{\bZ}+\bDelta_{\bPhi} \bP} + \dfrac{1}{2T}\FNorm{\bA(\wh\bZ-\bZ)}^2\\
&+ \lambda_{\bA}(\norm{\bA}_* - \norm{\bA+\bDelta_{\bA}}_*) \\
&+ \lambda_{\bPhi}(\norm{\vectorize(\bPhi)}_1 - \norm{\vectorize(\bPhi + \bDelta_{\bPhi})}_1)\\
=& \dfrac{1}{T}\inprod{\bE, \bDelta_{\bA} \bZ+\bDelta_{\bPhi} \bP} + \dfrac{1}{T} \inprod{\bE, \bDelta_{\bA} (\wh\bZ - \bZ)}\\
& +\dfrac{1}{2T}\FNorm{\bA (\wh\bZ - \bZ)}^2 + \lambda_{\bA}(\norm{\bA}_* - \norm{\bA+\bDelta_{\bA}}_*)\\
& + \lambda_{\bPhi}(\norm{\vectorize(\bPhi)}_1 - \norm{\vectorize(\bPhi + \bDelta_{\bPhi})}_1).
\end{split}
\end{equation}
Notice that the second term of the right-hand side of \eqref{eq:basic} satisfies
\begin{align*}
    \dfrac{1}{T} \inprod{\bE, \bDelta_{\bA} (\wh\bZ - \bZ)} \leq& \dfrac{1}{T} \norm{\bDelta_{\bA}}_* \norm{(\wh\bZ-\bZ) \bE'}_2\\
    \leq& \dfrac{1}{T} \norm{\bDelta_{\bA}}_* \norm{(\wh{\bB}_c-\bB_c)'\bB}_2 \Bnorm{\sum_{t=1}^T \bff_t \be_t'}_2\\
    &+ \dfrac{1}{T} \norm{\bDelta_{\bA}}_* \norm{\wh{\bB}_c-\bB_c}_2 \Bnorm{\sum_{t=1}^T \bve_t \be_t'}_2\\
    =& O_p\left(\sqrt{\dfrac{p}{T^3}} + \sqrt{\dfrac{pN}{T^3}}\right) \norm{\bDelta_{\bA}}_*\\
    =& o_p(1) \norm{\bDelta_{\bA}}_*,
\end{align*}
and under the assumption of $\norm{\bA}_2=O_p(1)$, the third term in the right-hand side of \eqref{eq:basic} satisfies
\begin{equation}
\label{eq:A(Zhat-Z)}
    \dfrac{1}{2T}\FNorm{\bA (\wh\bZ - \bZ)}^2 \leq \dfrac{1}{2T} \norm{\bA}_2^2 \norm{\wh{\bB}_c - \bB_c}_2^2 \FNorm{\bX}^2 = O_p\left(\dfrac{N}{T}\right) \norm{\bA}_2^2 = o_p(1).
\end{equation}
Therefore, {with probability tending to $1$}, \eqref{eq:basic} implies that
\begin{equation}
\label{eq:Holder}
\begin{split}
&\dfrac{1}{2T} \FNorm{\bDelta_{\bA}\wh{\bZ}+\bDelta_{\bPhi} \bP + \bA (\wh\bZ-\bZ)}^2\\
\leq& \dfrac{1}{T} \norm{\bDelta_{\bA}}_* \norm{\bE \bZ'}_2 + \dfrac{1}{T} \norm{\vectorize(\bDelta_{\bPhi})}_1 \norm{\vectorize\left(\bE \bP' \right) }_\infty + o_p(1) \norm{\bDelta_{\bA}}_* + o_p(1)\\
&+ \lambda_{\bA}(\norm{\bA}_* - \norm{\bA+\bDelta_{\bA}}_*) + \lambda_{\bPhi}(\norm{\vectorize(\bPhi)}_1 - \norm{\vectorize(\bPhi) + \bDelta_{\bPhi}}_1) \\
\leq& \dfrac{1}{2} \lambda_{\bA} \left(\norm{\bDelta_{\bA}}_* + 2\norm{\bA}_* - 2\norm{\bDelta_{\bA} +\bA}_*\right) \\
&+\dfrac{1}{2}\lambda_{\bPhi}\left(\norm{\vectorize(\bDelta_{\bPhi})}_1 + 2\norm{\vectorize(\bPhi)}_1 -2\norm{\vectorize(\bDelta_{\bPhi}+\bPhi)}_1\right),
\end{split}
\end{equation}
 where we use the condition that $\lambda_{\bA} \geq \dfrac{3}{T} \norm{\bE\mat{Z}'}_2$ and $\lambda_{\bPhi} \geq \dfrac{2}{T} \norm{\vectorize\left(\bE  \bP' \right) }_\infty$ in the second inequality.

Let $\bDelta_{\bA,2} = \Pi_{\mathcal{S}_\bA^\perp (r_{\bA})}(\bDelta_{\bA})$ be the projection of $\bDelta_{\bA}$ onto $\mathcal{S}_{\bA}^\perp (r_{\bA})$, where $r_{\bA}=\rank(\bA)$. Then we have the decomposition $\bDelta_{\bA} = \bDelta_{\bA,1}+\bDelta_{\bA,2}$, as discussed at the beginning of the Section \ref{ssec:prf_RRR_LASSO}. Similarly, we can decompose $\bA$ as $\bA = \bA_1+\bA_2$, where $\bA_2= \Pi_{\mathcal{S}_\bA^\perp (r_{\bA})}(\bA) = \mat{0}$. Then,
\begin{equation}
\label{eq:triangular}
\begin{split}
&\norm{\bDelta_{\bA}}_* + 2\norm{\bA}_* - 2\norm{\bDelta_{\bA} +\bA}_*\\
=& \norm{\bDelta_{\bA,1}+\bDelta_{\bA,2}}_* + 2\norm{\bA_1}_* - 2\norm{\bDelta_{\bA,1}+\bDelta_{\bA,2} +\bA_1}_*\\
\leq& \norm{\bDelta_{\bA,1}}_*+\norm{\bDelta_{\bA,2}}_* + 2\norm{\bA_1}_* - 2\norm{\bDelta_{\bA,2} +\bA_1}_* + 2\norm{\bDelta_{\bA,1}}_*\\
=& 3 \norm{\bDelta_{\bA,1}}_* - \norm{\bDelta_{\bA,2}}_*,
\end{split}
\end{equation}
where the last line comes from Lemma \ref{lemma:decomposition_of_nuclear_norm}.

Let $\mathcal{S}_{\mat{\Phi}}$ be the support set of $\mat{\Phi}$, that is, the set of the indexes of the nonzero elements in $\mat{\Phi}$, and $s_{\mat{\Phi}} = \operatorname{card}(\mathcal{S}_{\mat{\Phi}})$. For $\bDelta_{\mat{\Phi}}$, use the decomposition $\bDelta_{\mat{\Phi}} = \bDelta_{\mat{\Phi},1} + \bDelta_{\mat{\Phi},2}$, where the entries of $\bDelta_{\mat{\Phi},1}$ can only be non-zero in the positions in $\mathcal{S}_{\mat{\Phi}}$, and the entries of $\bDelta_{\mat{\Phi},2}$ can only be non-zero in the complement set of  $\mathcal{S}_{\mat{\Phi}}$. Similarly, we can decompose $\mat{\Phi}$ as $\mat{\Phi} = \mat{\Phi}_1 + \mat{\Phi}_2$, and it is not hard to see that $\mat{\Phi}_2 = {\bf 0}$.

By a similar argument as \eqref{eq:triangular}, we have
\begin{equation} \label{eq:LASSO_triangular}
\begin{split}
&\norm{\vectorize(\bDelta_{\mat{\Phi}})}_1 + 2\norm{\vectorize(\mat{\Phi})}_1 - 2\norm{\vectorize(\bDelta_{\mat{\Phi}} + \mat{\Phi})}_1\\
=& \norm{\vectorize(\bDelta_{\bPhi,1}) + \vectorize(\bDelta_{\bPhi,2})}_1 + 2\norm{\vectorize(\bPhi_1)}_1 - 2\norm{\vectorize(\bDelta_{\bPhi,1} + \bDelta_{\bPhi,2} + \bPhi_1)}_1\\
\leq& \norm{\vectorize(\bDelta_{\bPhi,1})}_1+\norm{\vectorize(\bDelta_{\bPhi,2})}_1 + 2\norm{\vectorize(\bPhi_1)}_1 - 2\norm{\vectorize(\bDelta_{\bPhi,2} + \bPhi_1)}_1\\
&+ 2\norm{\vectorize(\bDelta_{\bPhi,1})}_1\\
=& 3 \norm{\vectorize(\bDelta_{\bPhi,1})}_1 - \norm{\vectorize(\bDelta_{\bPhi,2})}_1.
\end{split}
\end{equation}
By \eqref{eq:triangular} and \eqref{eq:LASSO_triangular}, the right-hand side of \eqref{eq:Holder} can be upper bounded by
\begin{equation}\label{up:b}
    \dfrac{1}{2}\lambda_{\bA} \left( 3 \norm{\bDelta_{\bA,1}}_* - \norm{\bDelta_{\bA,2}}_* \right) + \dfrac{1}{2}\lambda_{\mat{\Phi}} \left( 
3 \norm{\vectorize(\bDelta_{\bPhi,1})}_1 - \norm{\vectorize(\bDelta_{\bPhi,2})}_1 \right),
\end{equation}
which also implies that
\[
\dfrac{1}{2}\lambda_{\bA} \left( 3 \norm{\bDelta_{\bA,1}}_* - \norm{\bDelta_{\bA,2}}_* \right) + \dfrac{1}{2}\lambda_{\mat{\Phi}} \left( 
3 \norm{\vectorize(\bDelta_{\bPhi,1})}_1 - \norm{\vectorize(\bDelta_{\bPhi,2})}_1 \right) \geq 0.
\]

Now we turn to the left-hand side of \eqref{eq:basic}. Notice that 
\begin{align*}
    \dfrac{1}{2T}\FNorm{\bDelta_{\bA} \wh\bZ + \bDelta_{\bPhi} \bP + \bA (\wh\bZ-\bZ)}^2 \geq \dfrac{1}{4T} \FNorm{\bDelta_{\bA} \wh\bZ +\bDelta_{\bPhi} \bP}^2 - \dfrac{1}{2T}\FNorm{\bA (\wh{\bZ}-\bZ)}^2,
\end{align*}
it follows from \ref{eq:A(Zhat-Z)} and Lemma~\ref{lemma:sample_RSC} that 
\begin{equation}
\label{RSC1}
\begin{split}
    \dfrac{1}{2T}\FNorm{\bDelta_{\bA} \wh\bZ + \bDelta_{\bPhi} \bP + \bA (\wh\bZ-\bZ)}^2 \geq& C_1 \kappa_1 (\FNorm{\bDelta_{\bA}}^2 + \FNorm{\bDelta_{\bPhi}}^2)\\
 &- C_2 (\lambda_{\bA} \norm{\bDelta_{\bA}}_* + \lambda_{\bPhi} \norm{\vectorize(\bDelta_{\bPhi})}_1)^2,
\end{split}
\end{equation}
where the last term on the right-hand side in the parentheses satisfies
\begin{align*}
    \lambda_{\bA} \norm{\bDelta_{\bA}}_{*} + \lambda_{\bPhi} \norm{\vectorize(\bDelta_{\bPhi})}_1 \leq& \lambda_{\bA}(\norm{\bDelta_{\bA,1}}_* + \norm{\bDelta_{\bA,2}}_*)\\
    &+ \lambda_{\bPhi} (\norm{\vectorize(\bDelta_{\bPhi,1})}_1 + \norm{\vectorize(\bDelta_{\bPhi,2})}_1)\\
    \leq& 4(\lambda_{\bA} \norm{\bDelta_{\bA,1}}_{*} + \lambda_{\bPhi} \norm{\vectorize(\bDelta_{\bPhi,1})}_1)\\
    \leq& 4(\lambda_{\bA} \sqrt{2r_{\bA}} \FNorm{\bDelta_{\bA,1}} + \lambda_{\bPhi} \sqrt{s_{\bPhi}} \FNorm{\bDelta_{\bPhi,1}})\\
   \leq& 4(\lambda_{\bA} \sqrt{2r_{\bA}} \FNorm{\bDelta_{\bA}} + \lambda_{\bPhi}\sqrt{s_{\bPhi}}\FNorm{\bDelta_{\bPhi}}),
\end{align*}
where we use the inequality rank$(\bDelta_{\bA,1})\leq 2r_{\bA}$.
Under the assumptions that $\kappa_1 \geq C_0 \lambda_{\bA}^2 r_{\bA} \tau_T$ and $\kappa_1 \geq C_0 \lambda_{\bPhi}^2 s_{\bPhi} \tau_T$ for some $C_0>0$, it follows that
\[
\tau_T \Psi^2(\bDelta) \leq \dfrac{1}{4}C_0 \tau_T (r_{\bA} \lambda_{\bA}^2 \FNorm{\bDelta_{\bA}}^2 + s_{\bPhi} \lambda_{\bPhi}^2 \FNorm{\bDelta_{\bPhi}}^2)\leq \dfrac{1}{4}\kappa_1 (\FNorm{\bDelta_{\bA}}^2 + \FNorm{\bDelta_{\bPhi}}^2).
\]
Therefore, (\ref{RSC1}) implies that
\[
\dfrac{1}{2T} \FNorm{\bDelta_{\bA} \bZ + \bDelta_{\bPhi}\bP}^2 \geq \dfrac{1}{4} \kappa_1 (\FNorm{\bDelta_{\bA}}^2 + \FNorm{\bDelta_{\bPhi}}^2).
\]
By (\ref{up:b}) and the above inequality,
\begin{align*}
    \dfrac{1}{4} \kappa_1 (\FNorm{\bDelta_{\bA}}^2 + \FNorm{\bDelta_{\bPhi}}^2)\leq& \dfrac{1}{2} \lambda_{\bA} \left( 3 \norm{\bDelta_{\bA,1}}_* - \norm{\bDelta_{\bA,2}}_* \right)\\
    &+ \dfrac{1}{2}\lambda_{\mat{\Phi}} \left( 
3 \norm{\vectorize(\bDelta_{\bPhi,1})}_1 - \norm{\vectorize(\bDelta_{\bPhi,2})}_1 \right)\\
\leq& \dfrac{3}{2} (\lambda_{\bA} \sqrt{2r_{\bA}} \FNorm{\bDelta_{\bA}} + \lambda_{\bPhi} \sqrt{s_{\bPhi}} \FNorm{\bDelta_{\bPhi}})\\
\leq& \dfrac{3}{2} \sqrt{ 2r_{\bA} \lambda_{\bA}^2  + \lambda_{\bPhi}^2 s_{\bPhi} } \sqrt{\FNorm{\bDelta_{\bA}}^2+\FNorm{\bDelta_{\bPhi}}^2}.
\end{align*}
Dividing both sides by $\sqrt{\FNorm{\bDelta_{\bA}}^2+\FNorm{\bDelta_{\bPhi}}^2}$, we  obtain
\begin{align*}
    \FNorm{\bDelta_{\bA}}^2 + \FNorm{\bDelta_{\bPhi}}^2\leq C(r_{\bA} \lambda_{\bA}^2  + s_{\bPhi} \lambda_{\bPhi}^2)/\kappa_1 ^2. 
\end{align*}
This completes the proof.
\end{proof}

\section{Proof of Theorem~\ref{tm4}}

We now provide a proof of Theorem~\ref{tm4}, which is similar to the proof of Theorem~\ref{RRR_LASSO} but with adjustments for different regularizations and different conditions. 

\begin{proof}[Proof of Theorem~\ref{tm4}]
    Let
    \[
\mathit{Loss}_2(\bA,\bPhi) = \dfrac{1}{2T} \FNorm{\bY - \bA\wh\bZ - \bPhi \bP}^2 + \lambda_{\bA}\norm{\bA}_* + \sum_{i=1}^d \lambda_{i} \norm{\bPhi_i}_*.
    \]
Then following $\mathit{Loss}_2(\wh\bA,\wh\bPhi) \leq \mathit{Loss}_2(\bA,\bPhi)$, we can obtain a similar result as \eqref{eq:basic} and \eqref{eq:Holder}, that is, {with probability tending to $1$},
\begin{equation}
\label{eq:IRRA_basic}
\begin{split}
\dfrac{1}{2T} \FNorm{\bDelta_{\bA}\wh{\bZ}+\bDelta_{\bPhi} \bP+ \bA (\wh{\bZ}-\bZ)}^2
\leq& \dfrac{1}{T} \inprod{\bE, \bDelta_{\bA} \bZ+\bDelta_{\bPhi} \bP} + \dfrac{1}{T}\inprod{\bE,\bDelta_{\bA} (\wh\bZ-\bZ)}\\
& + \dfrac{1}{2T}\FNorm{\bA(\wh\bZ-\bZ)}^2 +\lambda_{\bA}(\norm{\bA}_* - \norm{\bA+\bDelta_{\bA}}_*)\\
&+ \sum_{i=1}^{d} \lambda_{i}(\norm{\bDelta_{\bPhi_i}}_* - \norm{\bPhi_i + \bDelta_{\bPhi_i}}_*)\\
\leq& \dfrac{1}{T} \norm{\bDelta_{\bA}}_* \norm{\bE\bZ'}_2 + \dfrac{1}{T} \sum_{i=1}^{d} \norm{\bDelta_{\bPhi_i}}_* \norm{\bE L^i(\bY)'}_2\\
&+ o_p(1)\norm{\bDelta_{\bA}}_* + o_p(1) + \lambda_{\bA}(\norm{\bA}_* - \norm{\bA+\bDelta_{\bA}}_*)\\
&+ \sum_{i=1}^{d} \lambda_{i}(\norm{\bDelta_{\bPhi_i}}_* - \norm{\bPhi_i + \bDelta_{\bPhi_i}}_*)\\
\leq& \dfrac{1}{2} \lambda_{\bA} (\norm{\bDelta_{\bA}}_* + 2\norm{\bA}_* - 2\norm{\bDelta_{\bA} +\bA}_*)\\
&+ \dfrac{1}{2} \sum_{i=1}^{d} \lambda_{i}(\norm{\bDelta_{\bPhi_i}}_* + 2\norm{\bPhi_i}_* - 2\norm{\bDelta_{\bPhi_i} + \bPhi_i}_*),
\end{split}
\end{equation}
 where we use the condition $\lambda_{\bA}\geq \dfrac{3}{T}\norm{\bE\bZ'}_2$ and $\lambda_i \geq \dfrac{2}{T} \norm{\bE L^i(\bY)'}_2$ in the last inequality.

Again, we decompose $\bDelta_{\bA}$ as that in the proof of Theorem~\ref{RRR_LASSO}, and decompose $\bDelta_{\bPhi_i}$ as $\bDelta_{\bPhi_i} = \bDelta_{\bPhi_i,1} + \bDelta_{\bPhi_i,2}$, $i=1,\ldots,d$, where $\bDelta_{\bPhi_i,2} = \Pi_{\mathcal{S}_{\bPhi_i}^\perp(r_{\bPhi_i})}(\bDelta_{\bPhi_i})$. Then, by a similar argument as \eqref{eq:triangular}, the right-hand side of the \eqref{eq:IRRA_basic} has an upper bound
\begin{equation}
\label{eq:RHS}
\dfrac{1}{2} \lambda_{\bA} (3\norm{\bDelta_{\bA,1}}_* - \norm{\bDelta_{\bA,2}}_*) + \dfrac{1}{2} \sum_{i=1}^{d} \lambda_{i}(3\norm{\bDelta_{\bPhi_i,1}}_*- \norm{\bDelta_{\bPhi_i,2}}_*),
\end{equation}
which also implies that $\bDelta$ is in the restricted set $\mathcal{C}$ defined in \eqref{eq:set}. Furthermore, by the RE condition defined in Assumption~\ref{asm:RE}, the left-hand side of \eqref{eq:IRRA_basic} has a lower bound, that is,
\begin{align*}
\dfrac{1}{2T}\FNorm{\bDelta_{\bA} \wh\bZ + \bDelta_{\bPhi} \bP + \bA (\wh\bZ-\bZ)}^2 \geq& \dfrac{1}{4T} \FNorm{\bDelta_{\bA} \wh\bZ +\bDelta_{\bPhi} \bP}^2 - \dfrac{1}{2T}\FNorm{\bA (\wh{\bZ}-\bZ)}^2\\
\geq& \dfrac{1}{8T} \FNorm{\bDelta_{\bA} \bZ +\bDelta_{\bPhi} \bP}^2 -\dfrac{1}{4T} \FNorm{\bDelta_{\bA}(\wh\bZ-\bZ)}^2\\
&- \dfrac{1}{2T} \FNorm{\bA (\wh{\bZ}-\bZ)}^2\\
\geq& \dfrac{1}{8T} \FNorm{\bDelta_{\bA} \bZ +\bDelta_{\bPhi} \bP}^2 - o_p(1)\\
\geq& C \kappa_2 (\FNorm{\bDelta_{\bA}}^2 + \sum_{i=1}^{d} \FNorm{\bDelta_{\bPhi_i}}^2).
\end{align*}
Then by \eqref{eq:RHS} and the above inequality,
\begin{align*}
C \kappa_2 \left(\FNorm{\bDelta_{\bA}}^2 + \sum_{i=1}^{d} \FNorm{\bDelta_{\bPhi_i}}^2 \right) \leq& \dfrac{1}{2} \lambda_{\bA} (3\norm{\bDelta_{\bA,1}}_* - \norm{\bDelta_{\bA,2}}_*)\\
&+ \dfrac{1}{2} \sum_{i=1}^{d} \lambda_{i}(3\norm{\bDelta_{\bPhi_i,1}}_*- \norm{\bDelta_{\bPhi_i,2}}_*)\\
\leq& \dfrac{3}{2} \left(\lambda_{\bA}\sqrt{2r_{\bA}}\FNorm{\bDelta_{\bA}} + \sum_{i=1}^{d} \lambda_{i} \sqrt{2 r_{\bPhi,i}}\FNorm{\bDelta_{\bPhi_i}}\right)\\
\leq& \dfrac{3}{2} \sqrt{2 r_{\bA}\lambda_{\bA}^2 + 2 \sum_{i=1}^{d} r_{\bPhi_i} \lambda_{i}^2} \sqrt{\FNorm{\bDelta_{\bA}}^2 + \sum_{i=1}^{d}\FNorm{\bDelta_{\bPhi_i}}^2}
.
\end{align*}
Therefore, 
\[
\FNorm{\bDelta_{\bA}}^2 + \sum_{i=1}^{d}\FNorm{\bDelta_{\bPhi_i}}^2 \leq C\left(r_{\bA}\lambda_{\bA}^2 + \sum_{i=1}^{d} r_{\bPhi_i} \lambda_{i}^2\right) / \kappa_2^2.
\]
This completes the proof.
\end{proof}

			\printbibliography[heading=bibliography]
		\end{appendices}

	\end{spacing}
	
	\newpage


\end{document}